\newtheorem{thm}{Theorem}
\newtheorem{lem}{Lemma}
\newtheorem{cor}{Corollary}
\theoremstyle{definition}
\newtheorem{defn}{Definition}
\newtheorem{rem}{Remark}
\newtheorem{example}{Example}
\newcommand{\diag}{\mathsf{diag}}
\newcommand{\etal}{\textit{et al.}~}
\newcommand{\calX}{\mathcal{X}}
\newcommand{\calA}{\mathcal{A}}
\newcommand{\calB}{\mathcal{B}}
\newcommand{\calY}{\mathcal{Y}}
\newcommand{\calS}{\mathcal{S}}
\newcommand{\calL}{\mathcal{L}}
\newcommand{\bD}{\mathbf{D}}
\newcommand{\bx}{\mathbf{x}}
\renewcommand{\tilde}{\widetilde}
\DeclareMathOperator*{\argmin}{\arg\!\min}
\DeclareMathOperator*{\argmax}{\arg\!\max}
\newcommand{\Reals}{\mathbb{R}}
\newcommand{\defined}{\triangleq}
\newcommand{\ExpVal}[2]{\mathbb{E}\left[ #2 \right]}
\newcommand{\by}{\mathbf{y}}
\newcommand{\bnu}{\pmb{\nu}}
\newcommand{\bu}{\mathbf{u}}
\newcommand{\ba}{\mathbf{a}}
\newcommand{\bb}{\mathbf{b}}
\newcommand{\sto}{\mbox{\normalfont s.t.}}
\newcommand{\EE}[1]{\ExpVal{}{#1}}
\newcommand{\bP}{\mathbf{P}}
\newcommand{\bQ}{\mathbf{Q}}
\newcommand{\bU}{\mathbf{U}}
\newcommand{\bSigma}{\mathbf{\Sigma}}
\newcommand{\bV}{\mathbf{V}}
\newcommand{\brho}{\bm{\rho}}
\newcommand{\indicator}{\mathbbm{1}}
\newcommand{\mmse}{\mathsf{mmse}}
\newcommand{\tr}{\mathsf{tr}}
\newcommand{\bA}{\mathbf{A}}
\newcommand{\bB}{\mathbf{B}}
\newcommand{\bW}{\mathbf{W}}
\newcommand{\bM}{\mathbf{M}}
\newcommand{\bL}{\mathbf{L}}
\newcommand{\indep}{\rotatebox[origin=c]{90}{$\models$}}
\newcommand{\textfn}[1]{{\small\textit{#1}}}
\providecommand{\keywords}[1]
{
  \small	
  \textbf{\textit{Index Terms---}} #1
}
\title{Privacy with Estimation Guarantees}
\author{Hao Wang, Lisa Vo, Flavio P. Calmon, Muriel M\'edard, Ken R. Duffy, Mayank Varia
\thanks{This material is based upon work supported by the National Science Foundation under Grant No. CCF-1845852.\\
This paper was presented in part at the 2014 52nd Annual Allerton Conference on Communication, Control, and Computing \cite{calmon2014information}, and the 2017 55th Annual Allerton Conference on Communication, Control, and Computing \cite{wang2017estimation}.\\
H.~Wang, L.~Vo, and F.~P.~Calmon are with Harvard University (e-mail: hao\_wang@g.harvard.edu; lisavo@college.harvard.edu; flavio@seas.harvard.edu).\\
M.~M\'edard is with the Massachusetts Institute of Technology (e-mail: medard@mit.edu).
\\
K.~R.~Duffy is with the Hamilton Institute, Maynooth University (e-mail: ken.duffy@nuim.ie).
\\
M.~Varia is with Boston University (e-mail: varia@bu.edu).
}
}
\date{}
\begin{document}
\maketitle
\vspace{-.2in}

\begin{abstract}
We study the central problem in data privacy: how to share data with an analyst while providing both privacy and utility guarantees to the user that owns the data. In this setting, we present an estimation-theoretic analysis of the privacy-utility trade-off (PUT). Here, an analyst is allowed to reconstruct (in a mean-squared error sense) certain functions of the data (utility), while other private functions should not be reconstructed with distortion below a certain threshold (privacy). We demonstrate how chi-square information captures the fundamental PUT in this case and provide bounds for the best PUT. We propose a convex program to compute privacy-assuring mappings when the functions to be disclosed and hidden are known a priori and the data distribution is known. We derive lower bounds on the minimum mean-squared error of estimating a target function from the disclosed data and evaluate the robustness of our approach when an empirical distribution is used to compute the privacy-assuring mappings instead of the true data distribution. We illustrate the proposed approach through two numerical experiments.
\end{abstract}

\keywords{Estimation, privacy-utility trade-off, minimum mean-squared error.}

\newpage
\section{Introduction}

Data sharing and publishing is increasingly common within scientific communities \cite{tenopir2011data}, businesses  \cite{stefansson2002business}, government operations \cite{otjacques2007interoperability}, medical fields \cite{sweeney1997guaranteeing}, and beyond. Data is usually shared with an application in mind, from which the data provider receives some utility. For example, when a user shares her movie ratings with a streaming service, she receives utility in the form of suggestions of new, interesting movie recommendations that fit her taste. As a second example, when a medical research group shares patient data, their aim is to enable a wider community of researchers and statisticians to learn patterns from that data. Utility is then gained through new scientific discoveries.

The disclosure of non-encrypted data incurs a privacy risk through unwanted inferences. In our previous examples, the streaming service may infer the user's political preference (potentially deemed private by the user) from her movie ratings~\cite{narayanan2008robust}, or an insurance company may determine the identity of a patient within a medical dataset \cite{sweeney1997guaranteeing,sweeney2015only,safran2007toward}. If privacy is a concern but the data has no immediate utility, then cryptographic methods suffice.

The dichotomy between privacy and utility has been widely studied by computer scientists, statisticians, and information theorists alike. While specific metrics and  models vary among these communities, their desideratum is the same: to design mechanisms that perturb the data (or functions thereof) while achieving an acceptable privacy-utility trade-off (PUT). The feasibility of this goal depends on several factors including the chosen privacy and utility metric, as well as the topology and distribution of the data. The information-theoretic approach to privacy, and notably the results by Sankar \etal \cite{sankar2013utility,sankar2010theory}, Issa \etal \cite{issa2016operational,issa2016maximal}, Asoodeh \etal \cite{asoodeh2016information,asoodeh2016privacy}, Calmon \etal \cite{calmon2017principal,du2012privacy}, among others, seek to quantify the best possible PUT for \textit{any} privacy mechanism. In those works, information-theoretic quantities, such as mutual information and maximal leakage \cite{issa2016operational,issa2016maximal}, have been used to characterize privacy, and bounds on the fundamental PUT were derived under assumptions on the distribution of the data. It is within this information-theoretic approach that the present work is inscribed.

Our aim is to characterize the fundamental limits of PUT from an estimation-theoretic perspective, and to design privacy-assuring mechanisms that provide estimation-theoretic guarantees. 
We use the \textit{principal inertia components} (PICs) \cite{calmon2017principal,hirschfeld1935connection,gebelein1941statistische,sarmanov1962maximum,witsenhausen1975sequences,greenacre1987geometric,renyi1959measures,buja1990remarks,makur2015efficient,makur2017polynomial} to formalize the privacy and utility constraints. 
The PICs quantify the minimum mean-squared error (MMSE) achievable for reconstructing both private and useful information from the disclosed data. We do not seek to claim that the estimation-based approach subsumes other privacy metrics, such as differential privacy \cite{dwork2011differential}. Rather, our goal is to show that the MMSE viewpoint reveals an interesting facet of data disclosure which, in turn, can drive the design of privacy mechanisms used in practice.

In the remainder of this section, we present an overview of the paper and our main results, discuss related work, and introduce the notation adopted in the paper.

\subsection{Overview and Main Contributions}
\label{sec:BigPicture}

Throughout this paper, we assume all random variables are discrete with finite support sets. We let $S$ denote a private variable to be hidden (e.g., political preference) and $X$ be a useful variable that depends on $S$ (e.g., movie ratings). Our goal is to disclose a realization of a random variable $Y$, produced from $X$ through a randomized mapping $P_{Y|X}$ called the \textit{privacy-assuring mapping}. Here, $S$, $X$, and $Y$ satisfy the Markov condition $S\to X\to Y$. We assume that an analyst will provide some utility based on an observation of $Y$ (e.g., movie recommendations), while potentially trying to estimate $S$ from $Y$.
Denoting $[n]\defined \left\{1,\dots,n \right\}$, the support sets of $S$, $X$, and $Y$ are $\mathcal{S} =\left[|\mathcal{S}|\right]$, $\mathcal{X}=\left[|\mathcal{X}|\right]$, and $\mathcal{Y}=\left[|\mathcal{Y}|\right]$, respectively.

In the sequel, we derive PUTs when both privacy and utility are measured in terms of the mean-squared error of reconstructing functions of $S$ and $X$ from an observation of $Y$. We analyze three related scenarios: (i) an \textit{aggregate} setting, where certain functions of $X$ can be, on average, reconstructed from the disclosed variable while controlling the MMSE of estimating functions of $S$ and $P_{S,X}$ is known to the privacy mechanism designer, (ii) a \textit{composite} setting, where specific functions of $S$ and $X$ have different privacy/utility reconstruction requirements and $P_{S,X}$ is known to the privacy mechanism designer, and (iii) a \textit{restricted-knowledge} setting, where $P_{S,X}$ is unknown, but the correlation between a target function to be hidden and a set of functions which are known to be hard to infer from the disclosed variable is given. For the first two thrusts, we also analyze the robustness of privacy-assuring mappings designed using an empirical estimate of $P_{S,X}$ computed from a finite number of samples. Next, we present the outline of the paper and a summary of our main contributions.

\subsubsection*{Aggregate PUTs}

We start by studying the problem of limiting an untrusted party's ability to estimate functions of $S$ given an observation of $Y$, while controlling for the MMSE of reconstructing functions of $X$ given $Y$. Here, privacy and utility are measured in terms of the $\chi^2$-information between $S$ and $Y$ and the $\chi^2$-information between $X$ and $Y$, denoted by $\chi^2(S;Y)$ and $\chi^2(X;Y)$ (cf. \eqref{eq:def_chi_sq_inf}), respectively.
We introduce the $\chi^2$-privacy-utility function in Section~\ref{sec:u-p tradeoff}. Bounds of this function are presented in Theorem~\ref{thm:bound}. In particular, the upper bound is cast in terms of the PICs of $P_{S,X}$ and provides an interpretation of the trade-off between privacy and utility that goes beyond simply using maximal correlation. We also prove that the upper bound is achievable in the high-privacy regime in Theorem~\ref{thm:highprivacyregion}.

\subsubsection*{Composite PUTs}
$\chi^2$-based metrics guarantee privacy and utility in a uniform sense, capturing the aggregate mean-squared error of estimating \textit{any} functions of the private and the useful variables. However, in many applications, specific functions of $S$ and $X$ that should be hidden/revealed are known \textit{a priori}. This knowledge enables a more refined design of privacy-assuring mechanisms that  specifically target these functions. We explore this finer-grained approach in Section~\ref{sec:add_opt}, and propose a PIC-based convex program for computing privacy-assuring mappings within this setting. We demonstrate the practical feasibility of the convex programs through two numerical experiments in Section~\ref{sec:numericalresults}, deriving privacy-assuring mappings for a synthetic dataset and a real-world dataset. In the latter case, we approximate $P_{S,X}$ using its empirical distribution.

\subsubsection*{Restricted Knowledge of the Distribution}
The aforementioned aggregate and composite PUTs require knowledge of the joint distribution $P_{S,X}.$ In Section~\ref{sec:MMSE}, we forgo this assumption, and study  a simpler setting where $S=\phi(X)$ (i.e., the private variable is a function of the data) and the correlation between $\phi(X)$ and a set of functions (composed with the data) $\left\{\phi_j(X)\right\}_{j=1}^m$ is given. In practice, $\phi(X)$ may be a sensitive feature of the data $X$, and $\left\{\phi_j(X)\right\}_{j=1}^m$ is a collection of other features from which $\EE{\phi(X)\phi_j(X)}$ can be accurately estimated. 

Our goal here is to derive lower bounds on the MMSE of estimating a real-valued function of X, namely $\phi(X)$, from $Y$ for any privacy-assuring mapping $P_{Y|X}$. These bounds are cast in terms of the MMSE of estimating $\phi_j(X)$ from $Y$ and the correlation between $\phi(X)$ and $\left\{\phi_j(X)\right\}_{j=1}^m$. This leads to a converse result in Theorem~\ref{thm:tighter}: if the MMSE of estimating $\phi_j(X)$ from $Y$ is large and $\phi(X)$ is strongly correlated with $\phi_j(X)$, then the MMSE of estimating $\phi(X)$ from $Y$ will also be large and privacy is assured in an estimation-theoretic sense. The inverse result is straightforward: if $\phi(X)$ and $\phi_j(X)$ are strongly correlated and $\phi_j(X)$ can be reliably reconstructed from $Y$, then $\phi(X)$ can also be reliably estimated from $Y$. This intuitive trade-off is at the heart of the estimation-theoretic view of privacy, and demonstrates that no function of $X$ can remain private whilst other strongly correlated functions are revealed through $Y$. The results in Section~\ref{sec:MMSE} make this intuition mathematically precise.

Finally, in Section~\ref{sec:robust} we investigate the resilience of  privacy-assuring mappings when designed using an estimate of the distribution $P_{\hat{S},\hat{X}}$ computed as the empirical frequencies of $S,X$ obtained from $n$ i.i.d. samples. Here, the value of the privacy and utility guarantees estimated using $P_{\hat{S},\hat{X}}$ will not match the true values $\chi^2(S;Y)$ and $\chi^2(X;Y)$ obtained when the privacy-assuring mechanism is applied to fresh samples drawn from the true distribution $P_{S,X}$. We bound this performance gap in Theorem~\ref{thm:robust} and show that this gap scales as $O\left(\sqrt{1/n}\right)$, while also depending on the alphabet size of the variables and the probability of the least likely symbols.

\subsection{Related Work}

Currently, the most adopted definition of privacy is differential privacy \cite{dwork2011differential,dwork2006calibrating}, which enables queries to be computed over a database while simultaneously ensuring privacy of individual entries of the database. Information-theoretic quantities, such as R{\'e}nyi divergence, can be used to relax the definition of differential privacy \cite{mironov2017renyi}. Fundamental bounds on composition of differentially private mechanisms were given by Kairouz \etal \cite{kairouz2017composition}. Recently, a new privacy framework called Pufferfish \cite{kifer2014pufferfish} was developed for creating customized privacy definitions. 

Several papers, such as Sankar \etal \cite{sankar2013utility}, Calmon and Fawaz \cite{du2012privacy}, Asoodeh \etal \cite{asoodeh2014notes}, and Makhdoumi \etal \cite{makhdoumi2014information}, have studied information disclosure with privacy guarantees through an information-theoretic lens. For example, Sankar \etal \cite{sankar2013utility} characterized PUTs in large databases using tools from rate-distortion theory. Calmon and Fawaz \cite{du2012privacy} used expected distortion and mutual information to measure utility and privacy, respectively, and characterized the PUT as an optimization problem. Makhdoumi \etal \cite{makhdoumi2014information} introduced the privacy funnel, where both privacy and utility are measured in terms of mutual information, and showed its connection with the information bottleneck \cite{tishby2000information}. The PUT was also explored in  \cite{makhdoumi2013privacy} and \cite{rebollo2010t} using mutual information as a privacy metric. 

Other quantities from the information-theoretic literature have been used to quantify privacy and utility. For example, Asoodeh \etal \cite{asoodeh2016information} and Calmon \etal \cite{calmon2017principal} used estimation-theoretic tools to characterize fundamental limits of privacy. Liao \etal \cite{liao2016hypothesis,liao2017hypothesis} explored the PUT within a hypothesis testing framework. Issa \etal \cite{issa2016operational,issa2017operational} introduced maximal leakage as an information leakage metric. There is also significant recent work in information-theoretic privacy in the context of network secrecy. For example, Li and Oechtering \cite{li2015privacy} proposed a new privacy metric based on distributed Bayesian detection which can inform privacy-aware system design. Recently, Tripathy \etal \cite{tripathy2017privacy} and Huang \etal \cite{huang2017context} used adversarial networks for designing privacy-assuring mappings that navigate the PUT. Takbiri \etal \cite{takbiri2017matching} considered obfuscation and anonymization techniques and characterized the conditions required to obtain perfect privacy.

MMSE-based analysis and maximal correlation have been investigated in the context of log-Sobolev inequalities and hypercontractivity, such as in the work of Raginsky \cite{raginsky2016strong}, Anantharam \etal \cite{anantharam2013hypercontractivity}, and Polyanskiy and Wu \cite{polyanskiy2016dissipation}. The metric used in this paper, namely $\chi^2$-information, relates with $\chi^2$-divergence, which is a special case of $f$-divergence \cite{liese2006divergences}. Also of note, the study of robustness of estimated distributions with finite sample size has appeared in \cite{paninski2003estimation,shamir2008learning,calmon2017optimized,wang2018utility}. 

\subsection{Notation}
\label{sec:notation}
Matrices are denoted in bold capital letters (e.g., $\bP$) and vectors in bold lower-case letters (e.g., $\mathbf{p}$). For a vector $\mathbf{p}$, $\diag(\mathbf{p})$ is defined as the matrix with diagonal entries equal to $\mathbf{p}$ and all other entries equal to $0$.
The span of a set $\mathcal{V}$ of vectors is
\begin{equation*}
\mathsf{span}(\mathcal{V}) \defined \left\{\sum_{i=1}^k \lambda_i \mathbf{v}_i\ \Big|\ k\in \mathbb{N}, \mathbf{v}_i\in \mathcal{V}, \lambda_i \in \mathbb{R}\right\}.
\end{equation*}
The dimension of a linear span is denoted by $\mathsf{dim}(\mathsf{span}(\mathcal{V}))$.

We denote independence of random variables $U$ and $V$ by $U \indep V$, and write $U\sim V$ to indicate that $U$ and $V$ have the same distribution. When $U$, $V$, and $W$ form a Markov chain, we write $U \rightarrow V \rightarrow W$. For a random variable $U$ with probability distribution $P_U$, we denote 
\begin{align*}
    P_{U\min} \defined \inf\{P_{U}(u)\mid u\in \mathcal{U}\},
\end{align*}
where $\mathcal{U}$ is the support set of $U$. The MMSE of estimating $U$ given $V$ is
\begin{equation*}
\begin{aligned}
\mmse(U|V) 
\defined \min_{U\rightarrow V \rightarrow \hat{U}} \EE{(U-\hat{U})^2} 
= \EE{(U-\EE{U|V})^2}.
\end{aligned}
\end{equation*}
The \textit{$\chi^2$-information} between two random variables $U$ and $V$ is defined as
\begin{equation}
\label{eq:def_chi_sq_inf}
\chi^2(U;V) \defined \EE{\left(\frac{P_{U,V}(U,V)}{P_U(U)P_V(V)}\right)}-1.
\end{equation}
Let $P_U$ and $Q_U$ be two probability distributions taking values in the same discrete and finite set $\mathcal{U}$. We denote $||P_U-Q_U||_1\defined \sum_{u\in \mathcal{U}}|P_U(u)-Q_U(u)|$. For any real-valued random variable $U$, we denote the $\calL_p$-norm of $U$ as
$$||U||_p \defined (\EE{|U|^p})^{1/p}.$$
The set of all functions that applied to a random variable $U$ with distribution $P_U$ result in an $\calL_2$-norm less than or equal to 1 is given by
\begin{equation}
    \calL_2(P_U)\defined \left\{f:\mathcal{U}\to \Reals\mid \|f(U)\|_2\leq1
  \right\}.
\end{equation}
The conditional expectation operators $T_{V|U}: \calL_2(P_V) \to \calL_2(P_U)$ and $T_{U|V}: \calL_2(P_U) \to \calL_2(P_V)$ are given by $(T_{V|U}g)(u)\defined \EE{g(V)|U=u}$ and $(T_{U|V}f)(v)\defined \EE{f(U)|V=v}$, respectively.
\section{Principal Inertia Components}
\label{sec:PIC}

We present next the properties of the PICs that will be used in this paper. For a more detailed overview, we refer the reader to \cite{calmon2017principal} and the references therein. We use the definition of PICs presented in \cite{calmon2017principal}, but note that the PICs predate \cite{calmon2017principal} by many decades (e.g., \cite{hirschfeld1935connection,gebelein1941statistische,sarmanov1962maximum,witsenhausen1975sequences,greenacre1987geometric,renyi1959measures,buja1990remarks}). 
Recently, Huang~\etal \cite{huang2014efficient} considered the PICs by analyzing the ``divergence transition matrix'' \cite[Eq.~2]{huang2014efficient}. Specifically, there are different directions of local perturbation \cite{borade2008euclidean} of input distribution and the direction which leads to the greatest influence of the output distribution of a noisy channel can be identified \cite{huang2014efficient} by specifying the singular vector decomposition of the divergence transition matrix. In follow-on work, Huang~\etal \cite{huang2017information} used the divergence transition matrix in the context of feature selection. The singular values of the divergence transition matrix are exactly the square root of the PICs considered here, and are also related to the singular values of the conditional expectation operator, as also noted by Makur and Zheng \cite{makur2017polynomial} and originally by Witsenhausen \cite{witsenhausen1975sequences} and others \cite{buja1990remarks}. We build on these prior works by using the PICs for quantifying privacy-utility trade-offs.

\begin{defn}[{\cite[Definition~1]{calmon2017principal}}]
\label{def:PIC}
Let $U$ and $V$ be random variables with support sets $\mathcal{U}$ and $\mathcal{V}$, respectively, and joint distribution $P_{U,V}$. In addition, let $f_0:\mathcal{U}\to \Reals$ and $g_0:\mathcal{V}\to \Reals$ be the constant functions $f_0(u)=1$ and $g_0(v)=1$. For $k\in\mathbb{Z}_+$, we (recursively) define
\begin{align} 
\label{eq::defn_pic_cor}
\lambda_k(U;V) 
\defined \EE{f_k(U)g_k(V)}^2,    
\end{align}
where
\begin{equation}
\label{eq:fkmaxcorr} 
\begin{aligned}
(f_k,g_k) \defined \argmax\Big\{\EE{f(U)g(V)}^2\ \Big|\ 
&f\in \calL_2(P_U),g\in\calL_2(P_V),\EE{f(U)f_j(U)}=0, \\
&\EE{g(V)g_j(V)}=0,j\in\{0,\dots,k-1\} \Big\}. 
\end{aligned}
\end{equation}
The values $\lambda_k(U;V)$ are called the \textit{principal inertia components} (PICs) of $P_{U,V}$. The functions $f_k$ and $g_k$ are called the \textit{principal functions} of $P_{U,V}$.
\end{defn}

Observe that the PICs satisfy $\lambda_k(U;V)\leq 1$, since $f_k\in \calL_2(P_U)$, $g_k\in \calL_2(P_V)$, and 
\begin{equation*}
  \left|\EE{f(U)g(V)}\right|\leq \|f(U)\|_2\|g(V)\|_2\leq1.
\end{equation*}
Thus, from Definition~\ref{def:PIC}, $0\leq \lambda_{k+1}(U;V)\leq \lambda_{k}(U;V)\leq 1$. 

The largest PIC satisfies $\lambda_1(U;V) = \rho_m(U;V)^2$ where $\rho_m(U;V)$ is the maximal correlation \cite{renyi1959measures}, defined as
\begin{align}
\label{eq:def_MC}
    \rho_m(U;V) \defined \max_{\substack{\EE{f(U)}=\EE{g(V)}=0\\\EE{f(U)^2}=\EE{g(V)^2}=1}} \EE{f(U)g(V)}.
\end{align}

\begin{defn}[{\cite[Definition~2]{calmon2017principal}}]
\label{def:Q}
  For $\mathcal{U}=[m]$ and $\mathcal{V}=[n]$, let $\bP_{U,V}\in \Reals^{m\times n}$ be a matrix with
  entries $[\bP_{U,V}]_{i,j}=P_{U,V}(i,j)$, and $\bD_U\in \Reals^{m\times m}$ and
  $\bD_V\in \Reals^{n\times n}$ be diagonal matrices with diagonal entries
  $[\bD_U]_{i,i}=P_U(i)$ and $[\bD_V]_{j,j}=P_V(j)$, respectively, where $i\in [m]$ and $j\in [n]$. We define
  \begin{equation}
    \label{eq:Qdefn}
    \bQ_{U,V} \defined \bD_U^{-1/2}\bP_{U,V}\bD_V^{-1/2}.
  \end{equation}
  We denote the singular value decomposition of $\bQ_{U,V}$ by $\bQ_{U,V}=\bU\bSigma\bV^T$.
\end{defn}

\begin{defn}[{\cite[Definition~14]{calmon2017principal}}]
\label{defn:lambda_min}
Let $d\defined \min\{|\mathcal{U}|,|\mathcal{V}|\}-1$, and $\lambda_d(U;V)$ the $d$-th PIC of $P_{U,V}$. We define
  \begin{equation}
  \label{equa:delta}
    \delta(P_{U,V}) \defined 
        \begin{cases}
            \lambda_d(U;V)& \text{if}\ |\mathcal{V}|\leq |\mathcal{U}|,\\
             0 & \text{otherwise.}
        \end{cases}
  \end{equation}
We also denote $\lambda_d(U;V)$ and the corresponding principal functions $f_d$, $g_d$ as $\lambda_{\min}(U;V)$ and $f_{\min}$, $g_{\min}$, respectively, when the alphabet size is clear from the context.
\end{defn}

The next theorem illustrates the different characterizations of the PICs used in this paper.

\begin{thm}[{\cite[Theorem~1]{calmon2017principal}}]
\label{thm:PIC_Charac}
The following characterizations of the PICs are equivalent:
\begin{enumerate}
\item The characterization given in Definition~\ref{def:PIC}, where, for $f_k$ and $g_k$ given in \eqref{eq:fkmaxcorr}, $g_k(V)=\frac{\EE{f_k(U)|V}}{\|\EE{f_k(U)|V}\|_2}$ and $f_k(U)=\frac{\EE{g_k(V)|U}}{\|\EE{g_k(V)|U}\|_2}$. 
\item For any $k\in\mathbb{Z}_+$, 
\begin{align}
\label{eq:mmsePIC}
1- \lambda_k(U;V) = \mmse(h_k(U)|V),    
\end{align}
where
\begin{align}
h_k \defined \argmin\Big\{ \mmse(h(U)|V)\ \Big|\ 
\|h(U)\|_2=1,\EE{h(U)h_j(U)}=0,j\in\{0,\dots,k-1\} \Big\}.\label{eq:fkmmse}    
\end{align}
If $\lambda_k(U;V)$ is unique, then $h_k=f_k$, given in \eqref{eq:fkmaxcorr}.
\item $\sqrt{\lambda_k(U;V)}$ is the $(k+1)$-st largest singular value of $\bQ_{U,V}$. The principal functions $f_k$ and $g_k$ in \eqref{eq:fkmaxcorr} correspond to the columns of the matrices $\bD_U^{-1/2}\bU$ and $\bD_V^{-1/2}\bV$, respectively, where $\bQ_{U,V} = \bU\bSigma\bV^T$. 
\end{enumerate}
\end{thm}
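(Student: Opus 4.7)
The plan is to prove the three-way equivalence by translating everything to linear algebra on rescaled objects and then invoking the variational (Courant--Fischer--Weyl) characterization of singular values.

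To prove (1) $\Leftrightarrow$ (3), I would introduce the substitution $\tilde{f}_u \defined \sqrt{P_U(u)}\,f(u)$ and $\tilde{g}_v \defined \sqrt{P_V(v)}\,g(v)$ for $f \in \calL_2(P_U)$ and $g \in \calL_2(P_V)$. A direct calculation gives $\EE{f(U)g(V)} = \tilde{\mathbf{f}}^T \bQ_{U,V}\tilde{\mathbf{g}}$, $\|f(U)\|_2 = \|\tilde{\mathbf{f}}\|_2$, and $\|g(V)\|_2 = \|\tilde{\mathbf{g}}\|_2$; the orthogonality constraints $\EE{f(U)f_j(U)}=0$ and $\EE{g(V)g_j(V)}=0$ become Euclidean orthogonality of the scaled vectors. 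Hence \eqref{eq:fkmaxcorr} is exactly the constrained maximization of $\tilde{\mathbf{f}}^T \bQ_{U,V}\tilde{\mathbf{g}}$ over unit vectors orthogonal to those from earlier stages, which the Courant--Fischer--Weyl theorem identifies with the $(k+1)$-st singular value of $\bQ_{U,V}$. For the base case, the scaled constants $\tilde{\mathbf{f}}_0$ and $\tilde{\mathbf{g}}_0$ have entries $\sqrt{P_U(u)}$ and $\sqrt{P_V(v)}$; the identity $\bQ_{U,V}\tilde{\mathbf{g}}_0 = \tilde{\mathbf{f}}_0$ together with the Cauchy--Schwarz bound $\|\bQ_{U,V}\|_{\mathrm{op}} \leq 1$ confirms these are top singular vectors with singular value $1$. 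Unscaling by $\bD_U^{-1/2}$ and $\bD_V^{-1/2}$ then yields the explicit formulas for the principal functions in characterization (3).

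For (1) $\Leftrightarrow$ (2), I would work through the conditional-expectation operator $T_{U|V}$. Writing $\mmse(h(U)|V) = \|h(U)\|_2^2 - \|T_{U|V} h(U)\|_2^2$ and restricting to $\|h(U)\|_2 = 1$, minimizing the MMSE subject to the orthogonality constraints in \eqref{eq:fkmmse} is equivalent to maximizing $\|T_{U|V} h(U)\|_2^2 = \langle h(U), T_{V|U}T_{U|V} h(U)\rangle_{\calL_2(P_U)}$ over the same set. This is the variational characterization of the $(k+1)$-st eigenvalue of the self-adjoint operator $T_{V|U}T_{U|V}$. Its eigenvalues coincide with the squared singular values of $\bQ_{U,V}$ (since $\bQ_{U,V}$ is the matrix of $T_{U|V}$ with respect to orthonormal bases of $\calL_2(P_U)$ and $\calL_2(P_V)$), so by characterization (3) this eigenvalue equals $\lambda_k(U;V)$. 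For the explicit formulas in (1), the Cauchy--Schwarz identity $\max_{\|g\|_2=1}\EE{f(U)g(V)}^2 = \|\EE{f(U)|V}\|_2^2$ (attained at $g(V)=\EE{f(U)|V}/\|\EE{f(U)|V}\|_2$), combined with its symmetric counterpart obtained by swapping $U$ and $V$, produces the two symmetric expressions relating $f_k$ and $g_k$.

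The main obstacle is verifying that the orthogonality constraint on $g$ in \eqref{eq:fkmaxcorr} is consistent with defining $g_k$ as a normalized conditional expectation of $f_k$: a direct argument using only $\EE{f_k f_j}=0$ and the tower property does not immediately yield $\EE{g_k g_j}=0$, because two nested conditional expectations appear. The cleanest resolution goes through the SVD dictionary established in (1) $\Leftrightarrow$ (3): the principal functions correspond to paired left/right singular-vector columns of $\bQ_{U,V}$, which satisfy orthogonality on both sides simultaneously, and the mapping $T_{U|V}$ sends $f_k$ to a scalar multiple of $g_k$ by the SVD relation $\bQ_{U,V}^T\mathbf{u}_{k+1}=\sigma_{k+1}\mathbf{v}_{k+1}$. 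A secondary subtlety is that when $\lambda_k(U;V)$ has multiplicity greater than one, the principal functions are only determined up to orthogonal rotations within the relevant eigenspace, which is why (2) asserts $h_k=f_k$ only under the uniqueness proviso. Once these points are pinned down, the rest reduces to standard Hilbert-space bookkeeping.
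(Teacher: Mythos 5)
This theorem is imported from \cite[Theorem~1]{calmon2017principal} and the paper gives no proof of its own, so the only available comparison is with the standard argument in that reference, which your proposal reproduces: rescaling by $\bD_U^{1/2}$, $\bD_V^{1/2}$ to turn \eqref{eq:fkmaxcorr} into the variational characterization of the singular values of $\bQ_{U,V}$, and identifying $\mmse$ minimization with the eigenvalue problem for $T_{V|U}T_{U|V}$. Your treatment of the two genuine subtleties --- that the two-sided orthogonality constraints in \eqref{eq:fkmaxcorr} are consistent with $g_k$ being the normalized conditional expectation of $f_k$ (resolved via the SVD pairing), and that $h_k=f_k$ only under the uniqueness proviso --- is correct, so the proposal is sound and essentially the same route as the cited source.
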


The equivalent characterizations of the PICs in the  above theorem have the following intuitive interpretation: the principal functions can be viewed as a basis that decompose the mean-squared error of estimating functions of a hidden variable $U$ given an observation $V$. In particular, for any zero-mean finite-variance function $f:\mathcal{U}\to \Reals$,
\begin{align*}
    \mmse(f(U)|V)
    = \sum_{i=1}^{|\mathcal{U}|-1} \EE{f(U)f_i(U)}^2(1-\lambda_i(U;V)).
\end{align*}

We remark that the $\chi^2$-information between $U$ and $V$ is the sum of all PICs. Specifically, it has been shown (e.g., \cite{calmon2017principal,witsenhausen1975sequences}) that $\chi^2(U;V) = \sum_{i=1}^d \lambda_i(U;V)$, where $d=\min \{|\mathcal{U}|, |\mathcal{V}|\}-1$.

\section{Aggregate PUTs: \\
The Chi-Square-Privacy-Utility Function}
\label{sec:u-p tradeoff}

We start our analysis by adopting $\chi^2$-information as a measure of both privacy and utility. As seen in the previous section,  $\chi^2(S;Y) = \sum_{i=1}^d \lambda_i(S;Y)$, where $d=\min \{|\mathcal{S}|, |\mathcal{Y}|\}-1$. 
If $\chi^2(S;Y)<1$, then, from characterization 2 in Theorem~\ref{thm:PIC_Charac}, the MMSE of reconstructing \textit{any} zero-mean, unit-variance function  of $S$ given $Y$ is lower bounded by $1-\chi^2(S;Y)$, i.e., all functions of $S$ cannot be reconstructed with small MMSE given an observation of $Y$.
Note that this argument also holds true when we replace $\chi^2$-information with the maximal correlation.
In fact, in the high privacy regime, the PUT under $\chi^2$-information is essentially equivalent to the PUT when both privacy and utility are measured using maximal correlation. We make this intuition precise at the end of this section.
When $1 \leq \chi^2(S;Y)$, certain private functions, on average, may be estimated from $Y$ but, in general, most private functions are still kept in secret.
Analogously, when $\chi^2(X;Y)$ is large, certain functions of $X$ can be, on average, reconstructed (i.e., estimated) with small MMSE from $Y$.
We demonstrate next that the PICs play a central role in bounding the PUT in this regime.

We first introduce the $\chi^2$-privacy-utility function. This function captures how well an analyst can reconstruct functions of the useful variable $X$ while restricting the analyst's ability to estimate functions of the private variable $S$.
\begin{defn}
\label{def:utility-privacy}
For a given joint distribution $P_{S,X}$ and $0\leq \epsilon \leq \chi^2(S;X)$, we define the $\chi^2$-privacy-utility (trade-off) function as
\begin{align*}
    F_{\chi^2}(\epsilon;P_{S,X}) \defined \sup_{P_{Y|X}\in \mathcal{D}(\epsilon;P_{S,X})}\chi^2(X;Y),
\end{align*}
where $\mathcal{D}(\epsilon;P_{S,X}) \defined \{ P_{Y|X} \mid S\rightarrow X\rightarrow Y, \chi^2(S;Y)\leq \epsilon\}.$
\end{defn}

It has been proved in \cite{witsenhausen1975conditional,hsu2018generalizing} that there is always a privacy-assuring mapping $P_{Y|X}$ which achieves the supremum in $F_{\chi^2}(\epsilon;P_{S,X})$ using at most $|\mathcal{X}|+1$ symbols (i.e., $|\mathcal{Y}|\leq |\mathcal{X}|+1$). 
The following lemma gives an alternative way to compute the $\chi^2$-information, in the discrete, finite setting.
\begin{lem}
\label{lem:trace}
Suppose $S\rightarrow X \rightarrow Y$. Then
\begin{equation}
\chi^2(X;Y) = \tr(\bA) -1,
\end{equation}
\begin{equation}
\chi^2(S;Y) = \tr(\bB \bA)-1,
\end{equation}
where, using \eqref{eq:Qdefn},
\begin{align*}
\bA \defined \bQ_{X,Y}\bQ_{X,Y}^T,\ \bB \defined \bQ_{S,X}^T\bQ_{S,X}.
\end{align*}
\end{lem}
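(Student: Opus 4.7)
The plan is to verify both identities by direct matrix computation starting from the definition $\bQ_{U,V} = \bD_U^{-1/2}\bP_{U,V}\bD_V^{-1/2}$ and exploiting the Markov condition to factor $\bQ_{S,Y}$.

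First, I would establish the general identity $\tr(\bQ_{U,V}\bQ_{U,V}^T) = 1 + \chi^2(U;V)$ for any joint distribution $P_{U,V}$. This is a routine expansion: the $(u,u)$-entry of $\bQ_{U,V}\bQ_{U,V}^T$ is $\sum_v \frac{P_{U,V}(u,v)^2}{P_U(u) P_V(v)}$, so summing over $u$ reproduces $\mean\!\left[\frac{P_{U,V}(U,V)}{P_U(U)P_V(V)}\right]$, which by \eqref{eq:def_chi_sq_inf} equals $1+\chi^2(U;V)$. Applied to $(U,V)=(X,Y)$ this immediately yields $\chi^2(X;Y)=\tr(\bA)-1$.

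Next, the key step is the factorization
\begin{equation*}
    \bQ_{S,Y} = \bQ_{S,X}\,\bQ_{X,Y},
\end{equation*}
valid under the Markov condition $S\to X\to Y$. I would prove this by computing the $(s,y)$-entry of the right-hand side:
\begin{equation*}
    [\bQ_{S,X}\bQ_{X,Y}]_{s,y} = \sum_x \frac{P_{S,X}(s,x)}{\sqrt{P_S(s)P_X(x)}}\cdot \frac{P_{X,Y}(x,y)}{\sqrt{P_X(x)P_Y(y)}} = \frac{1}{\sqrt{P_S(s)P_Y(y)}} \sum_x \frac{P_{S,X}(s,x) P_{X,Y}(x,y)}{P_X(x)},
\end{equation*}
and then using the Markov chain to rewrite the inner sum as $\sum_x P_{S,X}(s,x) P_{Y|X}(y|x) = P_{S,Y}(s,y)$. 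This recovers $[\bQ_{S,Y}]_{s,y}$. (Conceptually, this is the fact that conditional expectation operators compose along Markov chains, which is also implicit in the singular-value characterization of Theorem~\ref{thm:PIC_Charac}.)

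Finally, applying the first identity to $(S,Y)$ and using the factorization together with the cyclic property of the trace gives
\begin{equation*}
    \chi^2(S;Y) = \tr(\bQ_{S,Y}\bQ_{S,Y}^T) - 1 = \tr(\bQ_{S,X}\bQ_{X,Y}\bQ_{X,Y}^T\bQ_{S,X}^T) - 1 = \tr(\bQ_{S,X}^T\bQ_{S,X}\bQ_{X,Y}\bQ_{X,Y}^T) - 1 = \tr(\bB\bA) - 1.
\end{equation*}
There is no real obstacle here: the calculation is elementary once the factorization $\bQ_{S,Y}=\bQ_{S,X}\bQ_{X,Y}$ is observed, and the only subtlety is the bookkeeping with the $\bD_U^{-1/2}$ factors, which cancel cleanly because each internal summation over $x$ produces exactly one $P_X(x)^{-1}$ from the two $\bD_X^{-1/2}$ factors.
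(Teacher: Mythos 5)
Your proof is correct and follows essentially the same route as the paper's: establish $\tr(\bQ_{U,V}\bQ_{U,V}^T)=1+\chi^2(U;V)$ by direct expansion, factor $\bQ_{S,Y}=\bQ_{S,X}\bQ_{X,Y}$ via the Markov condition, and conclude with the cyclic property of the trace. The only cosmetic difference is that you verify the factorization entrywise while the paper writes it as the matrix identity $\bP_{S,Y}=\bP_{S,X}\bD_X^{-1}\bP_{X,Y}$; the computation is the same.
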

\begin{proof}
See Appendix~\ref{subsec::Lem_trace}.
\end{proof}

The following lemma characterizes some properties of the $\chi^2$-privacy-utility function.
\begin{lem}
\label{lem:concave}
For a given joint distribution $P_{S,X}$, the $\chi^2$-privacy-utility function $F_{\chi^2}(\epsilon;P_{S,X})$ is a concave function in $\epsilon$. Furthermore, $\epsilon \to \frac{1}{\epsilon}F_{\chi^2}(\epsilon;P_{S,X})$ is a non-increasing mapping.
\end{lem}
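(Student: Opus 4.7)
The plan is to establish both parts via a standard time-sharing / mixture construction, exploiting the fact that $\chi^{2}$-information is additive under a particular kind of auxiliary conditioning.

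\textbf{Step 1 (Concavity via time-sharing).} Fix $\epsilon_1,\epsilon_2 \in [0,\chi^2(S;X)]$ and $\alpha\in[0,1]$. By Witsenhausen's result cited just after Definition~\ref{def:utility-privacy}, let $P_{Y_1\mid X}$ and $P_{Y_2\mid X}$ attain (or arbitrarily approach) the suprema defining $F_{\chi^2}(\epsilon_i;P_{S,X})$, $i=1,2$. Introduce an auxiliary $T\in\{1,2\}$ with $\Pr[T=1]=\alpha$, independent of $(S,X)$, and define the disclosed variable as $\tilde Y \defined (Y_T,T)$ with $Y_T$ drawn from $P_{Y_T\mid X}$ conditional on $(X,T)$. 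The Markov chain $S\to X\to\tilde Y$ clearly holds. The key computation is to show that for any $W\in\{S,X\}$,
\begin{equation*}
\chi^2(W;\tilde Y)\;=\;\alpha\,\chi^2(W;Y_1)+(1-\alpha)\,\chi^2(W;Y_2).
\end{equation*}
This follows by plugging into the definition and using that $T$ is independent of $(S,X)$: the ratio $P_{W,\tilde Y}(w,y,t)/(P_W(w)P_{\tilde Y}(y,t))$ simplifies to $P_{W,Y_t\mid T=t}(w,y)/(P_W(w)P_{Y_t\mid T=t}(y))$, and taking expectation over $t$ yields the claimed mixture identity. In particular $\chi^2(S;\tilde Y)\le\alpha\epsilon_1+(1-\alpha)\epsilon_2$, so $\tilde Y$ lies in $\mathcal D(\alpha\epsilon_1+(1-\alpha)\epsilon_2;P_{S,X})$, while $\chi^2(X;\tilde Y)=\alpha F_{\chi^2}(\epsilon_1;P_{S,X})+(1-\alpha)F_{\chi^2}(\epsilon_2;P_{S,X})$. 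Taking the supremum over $P_{Y\mid X}$ on the left gives concavity of $\epsilon\mapsto F_{\chi^2}(\epsilon;P_{S,X})$.

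\textbf{Step 2 (Monotonicity of the ratio).} Observe that $F_{\chi^2}(0;P_{S,X})\ge 0$, trivially, since $\chi^2$-information is always non-negative (take, e.g., $Y$ a constant). For a concave function $f$ on $[0,\chi^2(S;X)]$ with $f(0)\ge 0$ and $0<\epsilon_1\le\epsilon_2$, write $\epsilon_1 = \tfrac{\epsilon_1}{\epsilon_2}\,\epsilon_2 + (1-\tfrac{\epsilon_1}{\epsilon_2})\cdot 0$, so concavity and non-negativity at $0$ yield
\begin{equation*}
f(\epsilon_1)\;\ge\;\frac{\epsilon_1}{\epsilon_2}\,f(\epsilon_2)+\Bigl(1-\frac{\epsilon_1}{\epsilon_2}\Bigr)f(0)\;\ge\;\frac{\epsilon_1}{\epsilon_2}\,f(\epsilon_2),
\end{equation*}
i.e., $f(\epsilon_1)/\epsilon_1\ge f(\epsilon_2)/\epsilon_2$. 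Applying this to $f(\epsilon)=F_{\chi^2}(\epsilon;P_{S,X})$ gives the second claim.

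\textbf{Main obstacle.} The only non-mechanical part is the additivity identity for $\chi^2$ under the $(Y_T,T)$ construction. It is tempting to appeal to a generic chain rule, but $\chi^2$-information does not obey a mutual-information-style chain rule in general; additivity here relies specifically on the independence $T\indep(S,X)$, which collapses the $P(T)$ factors between numerator and denominator and leaves a per-slice $\chi^2$ averaged over $T$. Once this identity is established, the remainder of the argument is standard.
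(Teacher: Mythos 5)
Your proof is correct and follows essentially the same route as the paper: the paper's construction places $Y_1$ and $Y_3$ on disjoint output alphabets, which is exactly your $(Y_T,T)$ time-sharing with $T\indep (S,X)$, and it verifies the same additivity identity $\chi^2(W;Y_\lambda)=\lambda\chi^2(W;Y_3)+(1-\lambda)\chi^2(W;Y_1)$ by direct computation. The monotonicity of $\epsilon\mapsto F_{\chi^2}(\epsilon;P_{S,X})/\epsilon$ is likewise deduced in the paper from concavity and non-negativity, just as in your Step 2.
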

\begin{proof}
See Appendix~\ref{subsec::Lem_concave}.
\end{proof}

The $\chi^2$-privacy-utility function has a simple upper bound,
\begin{equation}
\label{eq:simplebound}
F_{\chi^2}(\epsilon;P_{S,X}) \leq \epsilon + |\mathcal{X}| -1 -\chi^2(S;X),
\end{equation}
which follows immediately from the data-processing inequality:
\begin{equation}
\chi^2(S;X)+\chi^2(X;Y)\leq \chi^2(S;Y)+\chi^2(X;X).
\end{equation}
We derive an upper bound for the $\chi^2$-privacy-utility function that significantly improves \eqref{eq:simplebound} by using properties of the PICs. The bound is piecewise linear, where each piece has a slope given in terms of a PIC of $P_{S,X}$. Intuitively, this bound corresponds to the privacy-assuring mapping $P_{Y|X}$ that achieves the best PUT if $P_{Y|X}$ was not constrained to be non-negative. We also provide a lower bound that follows directly from the concavity of the $\chi^2$-privacy-utility function. These bounds are illustrated in Fig.~\ref{fig:bound}.

\begin{figure}[!tb]
  \centering
  \includegraphics[width=0.39\textwidth]{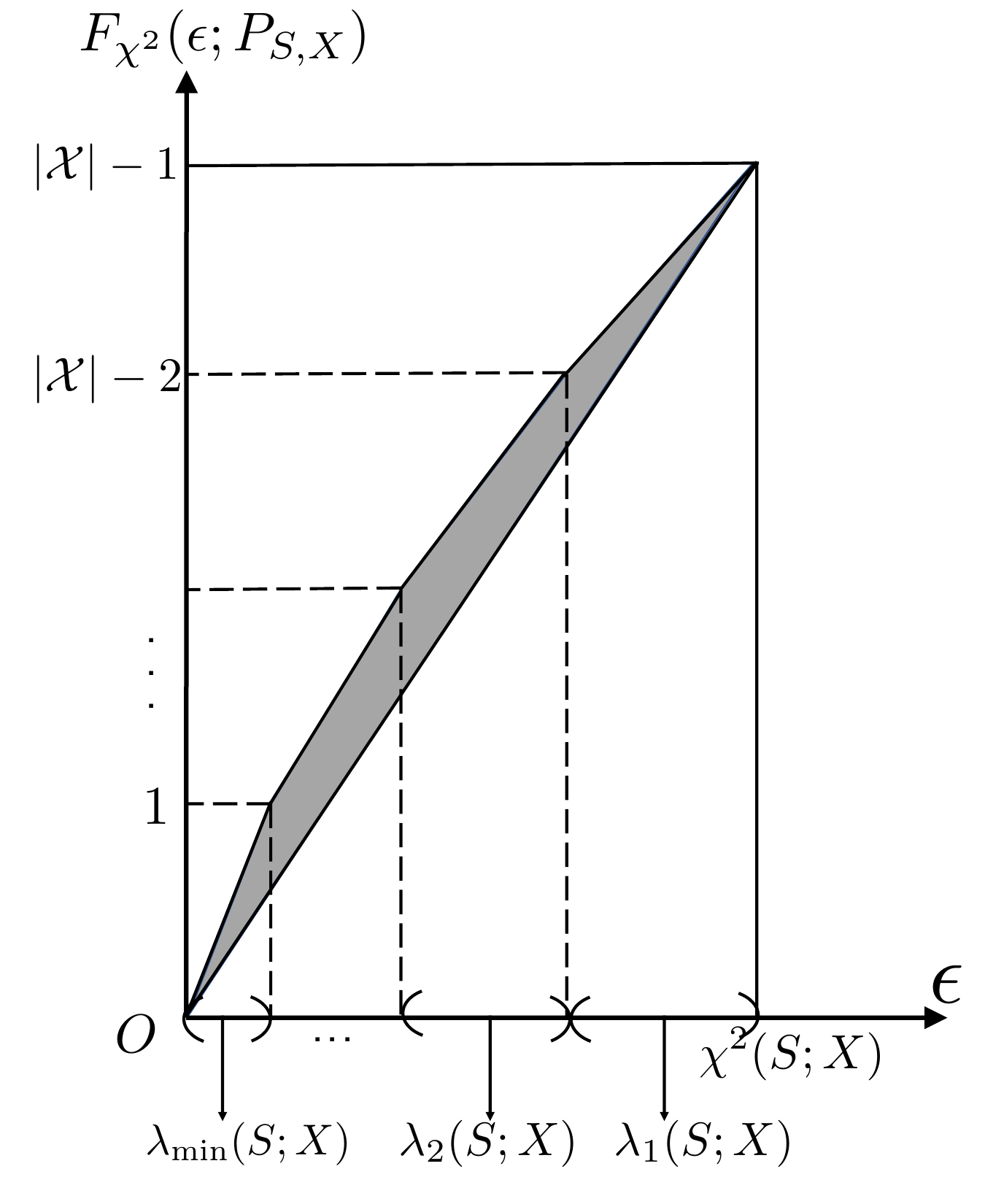}
  \caption{\small{Piecewise linear upper bound and lower bound for the $\chi^2$-privacy-utility function when $\delta(P_{S,X})$, defined in \eqref{equa:delta}, is positive.}}\label{fig:bound}
\end{figure}

\begin{defn}
\label{defn:bound}
For $t_i \in [0,1]\ (i\in [n])$, $  0\leq\epsilon\leq\sum_{i\in [n]} t_i$, and $n\leq m$, $G^m_{\epsilon}(t_1,...,t_n)$ is defined as
\begin{equation*}
\begin{aligned}
G^m_{\epsilon}(t_1,...,t_n) 
\defined \max \left\{ \sum_{i=1}^m x_i\ \Big|\ (x_1,...,x_m) \in \mathcal{D}^m_{\epsilon}(t_1,...,t_n)\right\},
\end{aligned}
\end{equation*}
where 
\begin{equation*}
\begin{aligned}
\mathcal{D}^m_{\epsilon} (t_1,...,t_n)
\defined \left\{(x_1,...,x_m)\ \Big|\ \sum_{i=1}^n t_i x_i  \leq \epsilon, x_i \in [0,1], i\in[m]\right\}.
\end{aligned}
\end{equation*}
\end{defn}

For fixed $m$ and $t_i$ ($i\in[n]$), $G^m_{\epsilon}(t_1,...,t_n)$ is a piecewise linear function with respect to $\epsilon$ and can be expressed in closed-form (cf. Appendix~\ref{subsec::closed_form_G}).

\begin{thm}
\label{thm:bound}
For the $\chi^2$-privacy-utility function $F_{\chi^2}(\epsilon;P_{S,X})$ introduced in Definition~\ref{def:utility-privacy} and $\epsilon \in [0,\chi^2(S;X)]$,
\begin{align*}
\label{inequ:bound}
\frac{|\mathcal{X}|-1}{\chi^2(S;X)}\epsilon \leq F_{\chi^2}(\epsilon;P_{S,X}) \leq G^{|\mathcal{X}|-1}_{\epsilon}(\lambda_1(S;X),...,\lambda_d(S;X)),
\end{align*}
where $d\defined \min\{|\mathcal{S}|,|\mathcal{X}|\}-1$ and $\lambda_1(S;X),...,\lambda_d(S;X)$ are the PICs of $P_{S,X}$.
\end{thm}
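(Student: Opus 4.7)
The plan is to handle the two bounds separately; the lower bound is essentially a one-line consequence of Lemma~\ref{lem:concave}, while the upper bound requires a spectral analysis of the matrices from Lemma~\ref{lem:trace}. For the lower bound, I would observe that $Y=X$ is always feasible at $\epsilon = \chi^2(S;X)$ and attains $\chi^2(X;Y) = \chi^2(X;X) = |\mathcal{X}|-1$ (the sum of the $|\mathcal{X}|-1$ PICs of $P_{X,X}$, each equal to one). Since $F_{\chi^2}(0;P_{S,X}) \geq 0$ and $F_{\chi^2}(\cdot;P_{S,X})$ is concave, the chord inequality yields $F_{\chi^2}(\epsilon;P_{S,X}) \geq \tfrac{\epsilon}{\chi^2(S;X)}\,F_{\chi^2}(\chi^2(S;X);P_{S,X}) \geq \tfrac{|\mathcal{X}|-1}{\chi^2(S;X)}\,\epsilon$ on the interval $[0,\chi^2(S;X)]$.

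For the upper bound, I would translate the problem into a spectral optimization using Lemma~\ref{lem:trace}: with $\bA \defined \bQ_{X,Y}\bQ_{X,Y}^T$ and $\bB \defined \bQ_{S,X}^T\bQ_{S,X}$, $\chi^2(X;Y) = \tr(\bA)-1$ and $\chi^2(S;Y) = \tr(\bB\bA)-1$. Two facts are essential. First, $\bA$ is positive semidefinite with $\bA \preceq I$, because every singular value of $\bQ_{X,Y}$ equals $\sqrt{\lambda_k(X;Y)} \leq 1$ by characterization~3 of Theorem~\ref{thm:PIC_Charac}. Second, the unit vector $\mathbf{v}_0 \defined \bD_X^{1/2}\mathbf{1}$ is a simultaneous eigenvector of $\bA$ and $\bB$ with eigenvalue $1$ (corresponding to the constant-function direction), and the eigenvalues of $\bB$ on the orthogonal complement of $\mathbf{v}_0$ are exactly $\lambda_1(S;X),\ldots,\lambda_d(S;X)$, padded with zeros when $|\mathcal{S}| < |\mathcal{X}|$.

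Restricting both matrices to the orthogonal complement of $\mathbf{v}_0$ yields $\bA'$ and $\bB'$ of size $|\mathcal{X}|-1$, with $\chi^2(X;Y) = \tr(\bA')$ and the constraint $\chi^2(S;Y)\leq \epsilon$ becoming $\tr(\bB'\bA') \leq \epsilon$. Let $\mathbf{w}_1,\ldots,\mathbf{w}_{|\mathcal{X}|-1}$ be an orthonormal eigenbasis of $\bB'$ with eigenvalues $\beta_i$ (so $\beta_i = \lambda_i(S;X)$ for $i\leq d$ and $\beta_i = 0$ otherwise). Setting $\mu_i \defined \mathbf{w}_i^T \bA' \mathbf{w}_i$, the relations $0 \preceq \bA' \preceq I$ force $\mu_i \in [0,1]$, and, because $\{\mathbf{w}_i\}$ diagonalizes $\bB'$, we obtain $\tr(\bA') = \sum_i \mu_i$ and $\tr(\bB'\bA') = \sum_i \beta_i \mu_i = \sum_{i=1}^d \lambda_i(S;X)\mu_i$. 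Thus $\tr(\bA')$ is upper bounded by the optimal value of the linear program $\max\{\sum_{i=1}^{|\mathcal{X}|-1} x_i : \sum_{i=1}^d \lambda_i(S;X) x_i \leq \epsilon,\; x_i \in [0,1]\}$, which is exactly $G^{|\mathcal{X}|-1}_\epsilon(\lambda_1(S;X),\ldots,\lambda_d(S;X))$.

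The main subtlety is ensuring this argument works for \emph{all} feasible $P_{Y|X}$ rather than only those for which $\bA'$ commutes with $\bB'$. This is handled cleanly by working in the fixed eigenbasis $\{\mathbf{w}_i\}$ of $\bB'$ and extracting only the diagonal entries $\mu_i$ of $\bA'$ in that basis: no rearrangement or Ky~Fan inequality is needed, since once $\bB'$ is diagonal, both $\tr(\bA')$ and $\tr(\bB'\bA')$ depend on $\bA'$ only through its diagonal in this basis, and the PSD conditions $0 \preceq \bA' \preceq I$ pin those diagonal entries to $[0,1]$. The bound is generally not tight, reflecting that we dropped the non-negativity of $P_{Y|X}$, consistent with the interpretation in the paragraph preceding the theorem.
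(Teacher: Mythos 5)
Your proof is correct and follows essentially the same route as the paper's: the lower bound via concavity and the endpoint $F_{\chi^2}(\chi^2(S;X);P_{S,X})=|\mathcal{X}|-1$, and the upper bound via Lemma~\ref{lem:trace} by diagonalizing $\bB$, reading off the diagonal entries of $\bA$ in that eigenbasis, pinning them to $[0,1]$, and reducing to the linear program defining $G^{|\mathcal{X}|-1}_{\epsilon}$. The only cosmetic difference is that the paper bounds those diagonal entries through the explicit expression $a_i=\sum_j \sigma_j l_{i,j}^2$ with $\bL=\bU^T\bV$ orthogonal, whereas you invoke $0\preceq \bA\preceq I$ directly; these are equivalent.
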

\begin{proof}
See Appendix~\ref{subsec::thm_bound_put}.
\end{proof}

\begin{rem}
The upper bound for the $\chi^2$-privacy-utility function given in Theorem~\ref{thm:bound} can also be proved by, for example, combining Theorem~4 in \cite{wang1992some} with properties of the PICs.
\end{rem}

We now illustrate the piecewise linear upper bound. Recall that the PIC decomposition of $P_{S,X}$ results in a set of basis functions $\mathcal{P}\defined \{f_1(S),\cdots,f_d(S)\}$, with corresponding MMSE estimators $\mathcal{U} \defined \{g_1(X),\cdots,g_d(X)\}$.
Consider the following intuition for designing a sequence of privacy-assuring mappings. The first mapping enables the function $g_{d}(X)$ to be reliably estimated from $Y$ while keeping all other functions in $\mathcal{U}$ secret. In this case, the utility is one, since exactly one zero-mean, unit-variance function of $X$ can be recovered from $Y$. The privacy leakage is $\lambda_{d}(S;X)$, since using $g_{d}(X)$ to estimate the private function $f_{d}(S)$ has mean-squared error $1-\lambda_{d}(S;X)$. Following the same procedure, the second privacy-assuring mapping allows only $g_{d}(X)$ and $g_{d-1}(X)$ to be recovered from the disclosed variable and so on. 
This sequence of privacy-assuring mappings corresponds to the breakpoints of the upper bound. Note that such privacy-assuring mappings may not be feasible --- hence the upper bound.

Note that $F_{\chi^2}(0;P_{S,X})$ characterizes the maximal aggregate MMSE of estimating useful functions while guaranteeing perfect privacy. Here perfect privacy means that no zero-mean, unit-variance function of $S$ can be reconstructed from $Y$. If the value of $F_{\chi^2}(0;P_{S,X})$ is known, a better lower bound can be obtained from the concavity of $F_{\chi^2}(\epsilon;P_{S,X})$ as
\begin{equation}
\frac{|\mathcal{X}|-1-F_{\chi^2}(0;P_{S,X})}{\chi^2(S;X)}\epsilon + F_{\chi^2}(0;P_{S,X}) \leq F_{\chi^2}(\epsilon;P_{S,X}).
\end{equation}

When $S=X$, then $\chi^2(S;X) = |\mathcal{X}|-1$ and $F_{\chi^2}(\epsilon;P_{S,X})=\epsilon$. Following from Definition~\ref{defn:bound} and noticing that all PICs of $P_{S,X}$ are 1, the upper bound and the lower bound for the $\chi^2$-privacy-utility function in Theorem~\ref{thm:bound} are both $\epsilon$, which is equal to $F_{\chi^2}(\epsilon;P_{S,X})$. In this sense, the upper bound and lower bound given in Theorem~\ref{thm:bound} are sharp. We investigate the tightness of the upper bound through numerical example in Section~\ref{sec:ParityBits}.

The following corollary of Lemma~\ref{lem:concave} and Theorem~\ref{thm:bound} shows that the $\chi^2$-privacy-utility function is strictly increasing with respect to $\epsilon$.

\begin{cor}
\label{cor:increasing}
For a given joint distribution $P_{S,X}$, the mapping $\epsilon\to F_{\chi^2}(\epsilon;P_{S,X})$ is strictly increasing for $\epsilon \in [0,\chi^2(S;X)]$.
\end{cor}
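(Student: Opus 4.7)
The plan is to combine the concavity of $F_{\chi^2}(\cdot;P_{S,X})$ from Lemma~\ref{lem:concave} with a sharp evaluation of $F_{\chi^2}$ at the endpoint $\epsilon=\chi^2(S;X)$. Once I establish (i) $F_{\chi^2}(\chi^2(S;X);P_{S,X}) = |\mathcal{X}|-1$ and (ii) $F_{\chi^2}(\epsilon;P_{S,X}) < |\mathcal{X}|-1$ for every $\epsilon < \chi^2(S;X)$, strict monotonicity follows by expressing any intermediate $\epsilon_2 \in (\epsilon_1,\chi^2(S;X)]$ as a convex combination of $\epsilon_1$ and $\chi^2(S;X)$.

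For (i), the PIC decomposition yields $\chi^2(X;Y) = \sum_{i=1}^{d}\lambda_i(X;Y) \leq d \leq |\mathcal{X}|-1$ for any $Y$, while the choice $Y=X$ attains $|\mathcal{X}|-1$ and satisfies $\chi^2(S;Y)=\chi^2(S;X)$; this also follows from Theorem~\ref{thm:bound} by evaluating the lower bound at $\epsilon=\chi^2(S;X)$. For (ii), I argue by contradiction: suppose $F_{\chi^2}(\epsilon;P_{S,X}) = |\mathcal{X}|-1$ for some $\epsilon<\chi^2(S;X)$. Since the supremum in Definition~\ref{def:utility-privacy} is achieved by some $P_{Y|X}^*$ with $|\mathcal{Y}^*|\leq|\mathcal{X}|+1$, this gives $\chi^2(X;Y^*) = |\mathcal{X}|-1$. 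As each PIC is at most one, this forces $\min(|\mathcal{X}|,|\mathcal{Y}^*|)-1 = |\mathcal{X}|-1$ and $\lambda_i(X;Y^*)=1$ for every $i\in[|\mathcal{X}|-1]$. By characterization~2 in Theorem~\ref{thm:PIC_Charac}, this yields an orthonormal family $\{h_i(X)\}_{i=0}^{|\mathcal{X}|-1}$ in $\calL_2(P_X)$ with $\mmse(h_i(X)|Y^*)=0$ for each $i$, so every function of $X$---and therefore $X$ itself---is a.s.\ a function of $Y^*$. Consequently $S\to Y^*\to X$ is a Markov chain, and data processing for $\chi^2$-information gives $\chi^2(S;X) \leq \chi^2(S;Y^*) \leq \epsilon < \chi^2(S;X)$, a contradiction.

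For the convex-combination step, given $0\leq\epsilon_1<\epsilon_2\leq\chi^2(S;X)$, set $t = (\chi^2(S;X)-\epsilon_2)/(\chi^2(S;X)-\epsilon_1) \in [0,1)$, so that $\epsilon_2 = t\epsilon_1 + (1-t)\chi^2(S;X)$ and $1-t>0$. Concavity gives $F_{\chi^2}(\epsilon_2;P_{S,X}) \geq tF_{\chi^2}(\epsilon_1;P_{S,X}) + (1-t)(|\mathcal{X}|-1)$, which strictly exceeds $F_{\chi^2}(\epsilon_1;P_{S,X})$ because $F_{\chi^2}(\epsilon_1;P_{S,X})<|\mathcal{X}|-1$ by (ii) and $1-t>0$.

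The main obstacle is step (ii): rigorously translating the extremal identity $\chi^2(X;Y)=|\mathcal{X}|-1$ into the conclusion that $X$ is a.s.\ a function of $Y$. While intuitively natural (maximal $\chi^2$-information should mean $Y$ fully determines $X$), it requires carefully exploiting the PIC/spectral structure and the fact that $\lambda_i=1$ corresponds to zero MMSE along the $i$-th principal direction. Everything else---concavity, the convex-combination argument, and data processing---is routine given Lemma~\ref{lem:concave}, Theorem~\ref{thm:bound}, and the PIC characterizations.
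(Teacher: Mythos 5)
Your proof is correct, and it shares the paper's overall skeleton (concavity from Lemma~\ref{lem:concave} plus the facts that $F_{\chi^2}(\chi^2(S;X);P_{S,X})=|\mathcal{X}|-1$ while $F_{\chi^2}(\epsilon;P_{S,X})<|\mathcal{X}|-1$ for $\epsilon<\chi^2(S;X)$), but it justifies the crucial strict interior bound by a genuinely different argument. The paper simply reads the strict inequality off the piecewise-linear upper bound $G^{|\mathcal{X}|-1}_{\epsilon}(\lambda_1(S;X),\dots,\lambda_d(S;X))$ of Theorem~\ref{thm:bound}, which is strictly below $|\mathcal{X}|-1$ whenever $\epsilon<\sum_i\lambda_i(S;X)$; it then argues that if $F_{\chi^2}$ were constant on some subinterval, concavity plus monotonicity would force it to be constant all the way to the right endpoint, yielding the contradiction. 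You instead prove the strict bound structurally: a maximizer achieving $\chi^2(X;Y^*)=|\mathcal{X}|-1$ forces all PICs of $P_{X,Y^*}$ to equal one, hence (via characterization~2 of Theorem~\ref{thm:PIC_Charac}) $X$ is a.s.\ a function of $Y^*$, so $S\to Y^*\to X$ and data processing gives $\chi^2(S;X)\le\chi^2(S;Y^*)\le\epsilon$, a contradiction; your convex-combination step is an equivalent packaging of the same concavity argument. The paper's route is shorter because it reuses Theorem~\ref{thm:bound}; yours is more self-contained for that step and gives an operational explanation---full utility means $X$ is recoverable from $Y$, which is incompatible with any nontrivial privacy constraint---at the cost of additionally invoking the existence of a maximizer (which the paper does state, with $|\mathcal{Y}|\le|\mathcal{X}|+1$) and the data-processing inequality for $\chi^2$-information, both of which are legitimate here. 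The one step you flagged as delicate, passing from $\lambda_i(X;Y^*)=1$ for all $i$ to ``$X$ is a.s.\ a function of $Y^*$,'' is handled adequately by your spanning argument, since the orthonormal family $\{h_i\}_{i=0}^{|\mathcal{X}|-1}$ spans all of $\mathcal{L}_2(P_X)$ and each $h_i(X)$ equals its conditional expectation given $Y^*$ almost surely.
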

\begin{proof}
See Appendix~\ref{subsec::cor_increasing_PUT}.
\end{proof}

We denote
\begin{align*}
\partial \mathcal{D}(\epsilon;P_{S,X}) \defined \{P_{Y|X} \mid S\rightarrow X\rightarrow Y, \chi^2(S;Y) = \epsilon\}.
\end{align*}
By Corollary~\ref{cor:increasing}, $F_{\chi^2}(\epsilon;P_{S,X})$ is strictly increasing. Therefore,
\begin{equation}
F_{\chi^2}(\epsilon;P_{S,X})
= \max_{P_{Y|X}\in \partial \mathcal{D}(\epsilon;P_{S,X})}\chi^2(X;Y).
\end{equation}

By Corollary 7 in \cite{calmon2017principal}, when $\delta(P_{S,X})=0$, defined in \eqref{equa:delta}, then $F_{\chi^2}(0;P_{S,X})>0$ (i.e., there exists a privacy-assuring mapping that allows the disclosure of a non-trivial amount of useful functions while guaranteeing perfect privacy). On the other hand, when $\delta(P_{S,X})>0$, then $F_{\chi^2}(0;P_{S,X})=0$. The following theorem shows that when $\delta(P_{S,X})>0$, the upper bound of $F_{\chi^2}(\epsilon;P_{S,X})$ in Theorem~\ref{thm:bound} is achievable around zero, implying that the upper bound is tight around zero. The proof of this theorem also provides a specific way to construct an optimal privacy-assuring mapping (i.e., achieves the upper bound in Theorem~\ref{thm:bound}).

\begin{thm}
\label{thm:highprivacyregion}
Suppose $\delta(P_{S,X})>0$ and $P_{X\min}>0$. Then there exists $Y$ such that $S\rightarrow X\rightarrow Y$, $\chi^2(X;Y) = P_{X\min}$ and $\chi^2(S;Y) = P_{X\min}\lambda_{\min}(S;X)$.
\end{thm}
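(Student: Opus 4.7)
The strategy is constructive: exhibit an explicit binary-output channel $P_{Y|X}$ that pushes a small amount of information in the direction of the least informative principal function $g_{\min}(X)$, and verify by direct computation that the two $\chi^2$ quantities come out to exactly $P_{X\min}$ and $P_{X\min}\lambda_{\min}(S;X)$.

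Concretely, I take $\mathcal{Y}=\{1,2\}$ and define
\begin{equation*}
P_{Y|X}(1 \mid x) \defined \tfrac{1}{2} + \tfrac{1}{2}\sqrt{P_{X\min}}\, g_{\min}(x), \qquad P_{Y|X}(2\mid x) \defined 1 - P_{Y|X}(1 \mid x).
\end{equation*}
The first step is feasibility. Using characterization 3 in Theorem~\ref{thm:PIC_Charac}, $g_{\min}$ arises from a unit singular vector $\mathbf{v}_{\min}$ of $\bQ_{S,X}$ via $\sqrt{P_X(x)}\, g_{\min}(x)=(\mathbf{v}_{\min})_x$, so $|g_{\min}(x)|\le 1/\sqrt{P_X(x)}\le 1/\sqrt{P_{X\min}}$, which guarantees $P_{Y|X}(1\mid x)\in[0,1]$. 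Note also the two standard orthonormality facts we will repeatedly invoke: $\EE{g_{\min}(X)}=0$ (since $g_{\min}\perp g_0\equiv 1$ in $\calL_2(P_X)$) and $\EE{g_{\min}(X)^2}=1$.

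Next I compute $\chi^2(X;Y)$. The marginal is $P_Y(1)=\tfrac{1}{2}+\tfrac{1}{2}\sqrt{P_{X\min}}\EE{g_{\min}(X)}=\tfrac{1}{2}$, and symmetrically $P_Y(2)=\tfrac{1}{2}$. Plugging into $\chi^2(X;Y)=\sum_{x,y}\tfrac{P_{X,Y}(x,y)^2}{P_X(x)P_Y(y)}-1$ and expanding the squares yields
\begin{equation*}
\chi^2(X;Y) = \tfrac{1}{2}\EE{(1+\sqrt{P_{X\min}}g_{\min}(X))^2} + \tfrac{1}{2}\EE{(1-\sqrt{P_{X\min}}g_{\min}(X))^2} - 1 = P_{X\min},
\end{equation*}
where the cross terms cancel by the zero-mean property and the quadratic term gives $P_{X\min}$ by unit-variance.

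For $\chi^2(S;Y)$, I compute the induced conditional law through $S\to X\to Y$:
\begin{equation*}
P_{Y|S}(1\mid s) = \tfrac{1}{2} + \tfrac{1}{2}\sqrt{P_{X\min}}\,\EE{g_{\min}(X)\mid S=s}.
\end{equation*}
Here the key identity, obtained from characterization 1 of Theorem~\ref{thm:PIC_Charac}, is $\EE{g_{\min}(X)\mid S=s}=\sqrt{\lambda_{\min}(S;X)}\, f_{\min}(s)$; this follows because $f_{\min}(S)$ equals $\EE{g_{\min}(X)\mid S}$ normalized to unit $\calL_2$ norm, and the normalization constant is $\sqrt{\lambda_{\min}(S;X)}$ by a direct inner-product calculation $\EE{f_{\min}(S)g_{\min}(X)}=\|\EE{g_{\min}(X)\mid S}\|_2$. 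Applying the same expansion as before, using $\EE{f_{\min}(S)}=0$ and $\EE{f_{\min}(S)^2}=1$, gives
\begin{equation*}
\chi^2(S;Y) = \tfrac{1}{2}\EE{(1+\sqrt{P_{X\min}\lambda_{\min}(S;X)}\,f_{\min}(S))^2} + \tfrac{1}{2}\EE{(1-\sqrt{P_{X\min}\lambda_{\min}(S;X)}\,f_{\min}(S))^2} - 1 = P_{X\min}\lambda_{\min}(S;X),
\end{equation*}
which finishes the proof. The only genuine subtlety is the feasibility step and the identification of $\EE{g_{\min}(X)\mid S}$ with a scaled $f_{\min}$; both are handled cleanly by the characterizations of Theorem~\ref{thm:PIC_Charac}, so the remainder of the argument is bookkeeping.
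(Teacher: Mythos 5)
Your construction is exactly the paper's: the same binary-output channel $P_{Y|X}(y\mid x)=\tfrac12\pm\tfrac12\sqrt{P_{X\min}}\,g_{\min}(x)$ with the same feasibility bound $|g_{\min}(x)|\le 1/\sqrt{P_X(x)}$ and the same direct $\chi^2$ computations (the paper writes $\sum_s P_S(s)\,\EE{g_{\min}(X)\mid S=s}^2=\lambda_{\min}(S;X)$ directly rather than first identifying $\EE{g_{\min}(X)\mid S}$ with $\sqrt{\lambda_{\min}(S;X)}\,f_{\min}(S)$, but this is the same fact). The proof is correct and essentially identical to the paper's.
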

\begin{proof}
See Appendix~\ref{subsec::thm_high_pri_region}.
\end{proof}

When $\delta(P_{S,X})>0$ and $P_{X\min}>0$, then $F_{\chi^2}(\hat{\epsilon};P_{S,X}) = P_{X\min}$ where $\hat{\epsilon}=P_{X\min}\lambda_{\min}(S;X)$. Since $(\hat{\epsilon},P_{X\min})$ is a point on the upper bound of the $\chi^2$-privacy-utility function given in Theorem~\ref{thm:bound}, Theorem~\ref{thm:highprivacyregion} shows that, in this case, the upper bound is achievable in the high-privacy region. We remark that the local behavior of privacy-utility functions in high-privacy region and high-utility region has been studied in the context of strong data processing inequalities (e.g., \cite{du2018strong,makur2017linear} and the references therein).

\subsection*{Connections with Maximal Correlation}

Maximal correlation has previously been considered as a privacy measure in \cite{asoodeh2016privacy,asoodeh2016information,li2018maximal,calmon2013bounds}. In particular, it has been proved \cite{calmon2013bounds} that when $\rho_m(S;Y)$ is small, then $\Pr(S\neq \hat{S})$ can be lower bounded for any $\hat{S}=h(Y)$. 
We show in Corollary~\ref{cor::rhom_max_leak_eq} that, in the high privacy regime, the privacy-utility function under maximal correlation possesses similar properties to $F_{\chi^2}(\epsilon;P_{S,X})$. 
However, when $\rho_m(S;Y)$ is large, say $\rho_m(S;Y)=1$, it is unclear whether one private function or several private functions can be recovered from the disclosed variable. In contrast, $\chi^2$-information can distinguish between these two cases and quantifies how many private functions, on average, can be reconstructed from the disclosed variable. For example, a user might be comfortable revealing that his/her age is above a certain threshold, but not the age itself. In this case, the privacy leakage measured by maximal correlation is one since there is a function of age which can be recovered from the disclosed variable. Thus, maximal correlation cannot distinguish between the cases where only one function of $S$ and $S$ itself can be estimated from the disclosed data.
We will revisit this example in the next section and show how to design privacy-assuring mappings using PICs which target specific private functions and useful functions. 
Finally, we provide an example showing the limitation of maximal correlation as a utility measure. 
\begin{defn}
\label{def:rhom_utility-privacy}
For a given joint distribution $P_{S,X}$ and $0\leq \epsilon \leq \rho_m(S;X)$, we define the maximal-correlation-privacy-utility (trade-off) function as
\begin{align*}
    F_{\rho_m}(\epsilon;P_{S,X}) \defined \sup_{P_{Y|X}\in \mathcal{D}_{\rho_m}(\epsilon;P_{S,X})}\rho_m(X;Y),
\end{align*}
where $\mathcal{D}_{\rho_m}(\epsilon;P_{S,X}) \defined \{ P_{Y|X} \mid S\rightarrow X\rightarrow Y, \rho_m(S;Y)\leq \epsilon\}$.
\end{defn}
The next corollary follows from the same proof techniques used in Theorem~\ref{thm:bound} and Theorem~\ref{thm:highprivacyregion}.
\begin{cor}
\label{cor::rhom_max_leak_eq}
For a given joint distribution $P_{S,X}$ and $\epsilon \in \left[0, \rho_m(S;X)\right]$, if $\delta(P_{S,X})>0$, then $F_{\rho_m}(\epsilon;P_{S,X}) \leq \epsilon/\sqrt{\lambda_{\min}(S;X)}$. 
Furthermore, if $P_{X\min}>0$, then there exists $Y$ such that $S\to X\to Y$, $\rho_m(X;Y)=\sqrt{P_{X\min}}$ and $\rho_m(S;Y)=\sqrt{P_{X\min}\lambda_{\min}(S;X)}$.
\end{cor}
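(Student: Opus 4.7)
The plan is to prove the upper bound via a strong data-processing-type inequality for maximal correlation, and to establish achievability by reusing the $Y$ constructed in the proof of Theorem~\ref{thm:highprivacyregion}.

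For the upper bound, the idea is to lift any zero-mean, unit-variance function of $X$ to a zero-mean, unit-variance function of $S$ at a controlled cost. Since $\delta(P_{S,X})>0$ forces $|\calX|\leq|\calS|$ and makes every $\lambda_i(S;X)$ strictly positive for $i\in[d]$ with $d=|\calX|-1$, the principal functions $\{g_i(X)\}_{i=1}^{d}$ form an orthonormal basis of the zero-mean subspace of $\calL_2(P_X)$. Given an arbitrary zero-mean, unit-variance $\tilde f(X)=\sum_{i=1}^{d}\beta_i g_i(X)$ with $\sum_i\beta_i^2=1$, I would set
\[
f(S) \defined c \sum_{i=1}^{d} \frac{\beta_i}{\sqrt{\lambda_i(S;X)}}\, f_i(S),\qquad c \defined \left(\sum_{i=1}^{d} \beta_i^2/\lambda_i(S;X)\right)^{-1/2}.
\]
Orthonormality of $\{f_i(S)\}$ makes $f(S)$ zero-mean and unit-variance, and by characterization~1 of Theorem~\ref{thm:PIC_Charac} combined with characterization~2, one has $\EE{f_i(S)\mid X}=\sqrt{\lambda_i(S;X)}\,g_i(X)$, so $\EE{f(S)\mid X}=c\,\tilde f(X)$. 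Since $\sum_i\beta_i^2=1$ and $1/\lambda_i(S;X)\leq 1/\lambda_{\min}(S;X)$, the scalar satisfies $c\geq\sqrt{\lambda_{\min}(S;X)}$. The Markov property $S\to X\to Y$ then yields $\EE{f(S)g(Y)}=c\,\EE{\tilde f(X)g(Y)}$ for every zero-mean, unit-variance $g(Y)$, and taking $\tilde f$ and $g$ that achieve $\rho_m(X;Y)$ gives $\rho_m(S;Y)\geq\sqrt{\lambda_{\min}(S;X)}\,\rho_m(X;Y)$. Consequently any $P_{Y|X}\in\calD_{\rho_m}(\epsilon;P_{S,X})$ satisfies $\rho_m(X;Y)\leq \epsilon/\sqrt{\lambda_{\min}(S;X)}$, which is the stated bound on $F_{\rho_m}(\epsilon;P_{S,X})$.

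For the achievability, I would reuse the very $Y$ produced in the proof of Theorem~\ref{thm:highprivacyregion}, which simultaneously satisfies $\chi^2(X;Y)=P_{X\min}$ and $\chi^2(S;Y)=P_{X\min}\lambda_{\min}(S;X)$. I expect that construction to perturb $P_Y$ along the single principal direction $(f_{\min},g_{\min})$ with a binary output alphabet, so that $\bQ_{X,Y}$ has only one nontrivial singular value; this forces $\lambda_1(X;Y)=\chi^2(X;Y)$ with all higher PICs vanishing, and the Markov chain transfers the same rank-one structure to $(S,Y)$. Combined with $\rho_m(U;V)^2=\lambda_1(U;V)$, this delivers the required equalities $\rho_m(X;Y)=\sqrt{P_{X\min}}$ and $\rho_m(S;Y)=\sqrt{P_{X\min}\lambda_{\min}(S;X)}$.

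The main obstacle is the achievability step: I must verify the single-nonzero-PIC structure of the specific $Y$ from Theorem~\ref{thm:highprivacyregion}, since that is precisely what upgrades an equality on $\chi^2$-information (a sum over all PICs) to the corresponding equality on maximal correlation (which only sees the top PIC). The upper bound itself is a clean variational argument once the lift $\tilde f(X)\mapsto f(S)$ is in hand.
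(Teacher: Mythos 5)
Your proposal is correct and follows the route the paper intends (the paper gives no standalone proof, saying only that the corollary follows from the techniques of Theorem~\ref{thm:bound} and Theorem~\ref{thm:highprivacyregion}): the achievability part reuses the binary-output construction of Theorem~\ref{thm:highprivacyregion} verbatim, and since that construction has $|\mathcal{Y}|=2$, both $\chi^2(X;Y)$ and $\chi^2(S;Y)$ reduce to a single PIC, so the ``single-nonzero-PIC structure'' you flag as the main obstacle is immediate rather than something requiring further verification. Your lifting argument for the upper bound is sound and is essentially the functional rendering of the singular-value bound on $\bQ_{S,Y}=\bQ_{S,X}\bQ_{X,Y}$ underlying the proof of Theorem~\ref{thm:bound}: $\delta(P_{S,X})>0$ gives $d=|\mathcal{X}|-1$ strictly positive PICs so that $\{g_i(X)\}_{i=1}^{d}$ spans the zero-mean subspace, $\EE{f_i(S)|X}=\sqrt{\lambda_i(S;X)}\,g_i(X)$, and the normalization $c\geq\sqrt{\lambda_{\min}(S;X)}$ combined with the Markov chain yields $\rho_m(S;Y)\geq\sqrt{\lambda_{\min}(S;X)}\,\rho_m(X;Y)$, which is exactly the stated bound.
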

We illustrate the limitation of the maximal correlation as a utility measure through the following example.
\begin{example}
\label{ex::max_cor_pic_compare}
Let $\mathcal{S}=\{-1,1\}^n$ and $\mathcal{X}=\{-1,1\}^n$, and $X^n$ be the result of passing $S^n$ through a memoryless binary symmetric channel with crossover probability $\epsilon<1/2$. We assume that $S^n$ is composed of $n$ uniform and i.i.d. bits. For $\mathcal{A}\subseteq [n]$, let $Y=\prod_{i\in \mathcal{A}}X_i$. In this case, one can show that $\rho_m(S^n;Y) = (1-2\epsilon)^{|\mathcal{A}|}$ and $\rho_m(X^n;Y) = 1$. If $|\mathcal{A}|$ is an increasing function of $n$, then $\rho_m(S^n;Y)\to 0$ as $n\to \infty$. In other words, we can disclose a function of $X^n$ achieving nearly perfect privacy and utility as measured by $\rho_m(S^n;Y)$ and $\rho_m(X^n;Y)$, respectively, with large $|\mathcal{A}|$ and $n$. 
However, as $n$ increases, the basis of functions in $\mathcal{L}_2(P_{X^n})$ will increase exponentially, and revealing only one function may not be enough for achieving utility. 
The crux of the limitation is that maximal correlation only takes into account the most reliably estimated function.
The $\chi^2$-information overcomes this limitation by capturing all possible real-valued functions of $X^n$ that can be recovered from $Y$.
In particular, if $\chi^2(X^n;Y)=|\mathcal{X}|-1$, then all zero-mean finite-variance functions of $X^n$ can be reconstructed from $Y$.
We will revisit this example again in Section~\ref{sec:MMSE} and Section~\ref{sec:numericalresults}.
\end{example}
\section{Composite PUTs:\\
A Convex Program for Computing Privacy-Assuring Mappings}
\label{sec:add_opt}
In the previous section, we studied $\chi^2$-based metrics for both privacy and utility. The optimization problem in the definition of $\chi^2$-privacy-utility function (Definition~\ref{def:utility-privacy}) is non-convex.
Next, we provide a convex program for designing privacy-assuring mappings by adding more stringent constraints on privacy and utility.

More specifically, we explore an alternative, finer-grained approach for measuring both privacy and utility based on PICs (recall that $\chi^2$-information is the sum of all PICs). This approach has a practical motivation, since oftentimes there are specific well-defined features (functions) of the data (realizations of a random variable) that should be hidden or disclosed. For example, a user may be willing to disclose that they prefer documentaries over action movies, but not exactly which documentary they like. More abstractly, we consider the case where certain known functions should be disclosed (utility), whereas others should be hidden (privacy). This is a finer-grained setting than the one used in the last section, since $\chi^2$-information captures the aggregate reconstruction error across all zero-mean, unit-variance functions.

We denote the set of functions to be disclosed as 
\begin{align*}
\mathcal{U}(X)\defined \{u_i:\mathcal{X}\to\Reals \mid \EE{u_i(X)}=0,
||u_i(X)||_2=1, i\in[n]\},
\end{align*}
and the set of functions to be hidden as 
\begin{align*}
\mathcal{P}(S)\defined \{s_i:\mathcal{S}\to\Reals \mid \EE{s_i(S)}=0, 
||s_i(S)||_2=1, i \in[m]\}.
\end{align*}
Our goal is to find the privacy-assuring mapping $P_{Y|X}$ such that $S\rightarrow X \rightarrow Y$ and $Y$ satisfies the following privacy-utility constraints:
\begin{enumerate}
\item \textbf{Utility constraints:} $\max \{\mmse(u_i(X)|Y)\}_{i\in [n]} \leq \Delta$ and $X\sim Y$.
\item \textbf{Privacy constraints:} $\mmse(s_i(S)|Y) \geq \theta_i, i\in[m]$.\label{enu_pri_cons}
\end{enumerate}
Note that the utility constraint $X\sim Y$ implies that the disclosed variable follows the same distribution as the useful variable. The practical motivation for adding this constraint is to enable $Y$ to preserve overall population statistics about $X$, while hiding information about individual samples. This assumption also enables the problem of finding the optimal privacy-assuring mapping to be formulated as a convex program, described next.

We follow two steps -- projection\footnote{We call this step as projection because of the geometric interpretation of conditional expectation (see, e.g., \cite{durrett2010probability}).} and optimization -- to find the privacy-assuring mapping. Private functions are projected to a new set of functions based on the useful variable in the first step. Then a PIC-based convex program is proposed in order to find the privacy-assuring mapping.

\subsection{Projection}
As a first step, we project (i.e., compute the conditional expectation) all private functions to the useful variable and obtain a new set of functions:
\begin{align*}
    \mathcal{P}(X)\defined \left\{ \hat{s}_i(x) \defined \frac{\EE{s_i(S)|X=x}}{||\EE{s_i(S)|X}||_2}\ \Big|\ i\in[m] \right\}.
\end{align*}
It is worth noting that, after the projection, the obtained privacy-assuring mapping may not be an optimal solution to the original problem since the privacy constraints become stricter (see Lemma~\ref{lem:project}). Nonetheless, the advantage of this projection is twofold. First, it can significantly simplify the optimization program, since after the projection all functions are cast in terms of the useful variable alone. Second, the private variable is not needed as an input to the optimization after the projection. Therefore, the party that solves the optimization does not need access to the private data directly, further guaranteeing the safety of the sensitive information. The following lemma proves that privacy guarantees cast in terms of the projected functions still hold for the original functions.
\begin{lem}
\label{lem:project}
Assume $S\rightarrow X \rightarrow Y$. For any function $f:\mathcal{S} \to \Reals$, if $\EE{f(S)} = 0$ and $||\EE{f(S)|X}||_2\neq 0$, we have $\EE{\EE{f(S)|X}}=0$ and
\begin{align*}
\mmse\left(\frac{f(S)}{||f(S)||_2}\Bigg|Y\right)\geq \mmse\left(\frac{\EE{f(S)|X}}{||\EE{f(S)|X}||_2}\Bigg|Y\right).
\end{align*}
\end{lem}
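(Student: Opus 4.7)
The first claim $\EE{\EE{f(S)|X}}=0$ is immediate from the tower property of conditional expectation applied to $\EE{f(S)}=0$. For the inequality, my approach is to exploit the Markov chain $S\to X\to Y$ to derive an orthogonal (Pythagorean) decomposition of $\mmse(f(S)|Y)$, and then reduce the normalized MMSE inequality to a short algebraic identity.

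Concretely, I will set $g(X)\defined \EE{f(S)|X}$ and denote $\sigma_1\defined \|f(S)\|_2$, $\sigma_2\defined \|g(X)\|_2$; the hypotheses ensure $\sigma_1,\sigma_2>0$. The Markov property gives $\EE{f(S)|X,Y}=\EE{f(S)|X}=g(X)$, and in particular $\EE{f(S)|Y}=\EE{g(X)|Y}$. Writing
\begin{equation*}
f(S)-\EE{f(S)|Y} = \bigl(f(S)-g(X)\bigr) + \bigl(g(X)-\EE{g(X)|Y}\bigr),
\end{equation*}
I will then argue that the cross-term has zero expectation by conditioning on $(X,Y)$: the second factor is $(X,Y)$-measurable, while $\EE{f(S)-g(X) \mid X,Y} = \EE{f(S)|X}-g(X)=0$ by the Markov chain. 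This produces the Pythagorean identity
\begin{equation*}
\mmse(f(S)|Y) = \EE{(f(S)-g(X))^2} + \mmse(g(X)|Y) = (\sigma_1^2-\sigma_2^2) + \mmse(g(X)|Y),
\end{equation*}
where I have used that $g(X)$ is the $\calL_2(P_X)$-orthogonal projection of $f(S)$, so $\EE{(f(S)-g(X))^2}=\sigma_1^2-\sigma_2^2$.

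With this identity in hand, the target inequality $\sigma_1^{-2}\mmse(f(S)|Y)\geq \sigma_2^{-2}\mmse(g(X)|Y)$ rearranges, after clearing denominators, to
\begin{equation*}
(\sigma_1^2-\sigma_2^2)\bigl(\sigma_2^2-\mmse(g(X)|Y)\bigr) \geq 0.
\end{equation*}
Both factors are non-negative: the first by Jensen's inequality applied to the conditional expectation $g(X)=\EE{f(S)|X}$, and the second by the standard variance bound $\mmse(g(X)|Y)\leq \EE{g(X)^2}=\sigma_2^2$, which is valid because $g(X)$ has zero mean. The step I expect to be most delicate is the verification that the cross-term vanishes, as it is the only place where the Markov assumption is genuinely used; everything else is routine manipulation.
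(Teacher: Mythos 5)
Your proof is correct and rests on exactly the same two ingredients as the paper's: the Markov chain giving $\EE{f(S)|Y}=\EE{\EE{f(S)|X}\,|\,Y}$, and Jensen's inequality giving $\|\EE{f(S)|X}\|_2\leq\|f(S)\|_2$. The only difference is cosmetic bookkeeping — you route through the Pythagorean identity $\mmse(f(S)|Y)=(\sigma_1^2-\sigma_2^2)+\mmse(g(X)|Y)$ and factor the target inequality as $(\sigma_1^2-\sigma_2^2)(\sigma_2^2-\mmse(g(X)|Y))\geq 0$, whereas the paper normalizes $\|f(S)\|_2=1$ and compares the two MMSEs by a direct computation; both are valid and essentially equivalent.
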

\begin{proof}
See Appendix~\ref{subsec::thm_projection}.
\end{proof}

By Lemma~\ref{lem:project}, $\mmse(s_i(S)|Y)\geq \mmse(\hat{s}_i(X)|Y)$. Therefore, if the new set of functions satisfies the privacy constraints (i.e., $\mmse(\hat{s}_i(X)|Y)\geq \theta_i$), the original set of functions also satisfies the privacy constraints (i.e., $\mmse(s_i(S)|Y)\geq \theta_i$).

\subsection{Optimization}
We introduce next a PIC-based convex program to find the privacy-assuring mapping $P_{Y|X}$. First, we construct a matrix $\mathbf{F}$ given by $(\mathbf{f}_0,\mathbf{f}_1,...,\mathbf{f}_{|\mathcal{X}|-1})$ such that
\begin{equation}
\label{equ:F_orth}
\mathbf{F}^T\mathbf{D}_X\mathbf{F}=\mathbf{I},
\end{equation}
\begin{equation}
\label{equ:span_fandu}
\mathsf{span}(\{\mathbf{f}_0,...,\mathbf{f}_{n'}\})=\mathsf{span}(\{\mathbf{f}_0,\mathbf{u}_1,...,\mathbf{u}_n\}),
\end{equation}
where $\mathbf{f}_0 \defined (1,...,1)^T$, $\mathbf{f}_i \defined (f_i(1),...,f_i(|\mathcal{X}|))^T$, $\mathbf{u}_i \defined (u_i(1),...,u_i(|\mathcal{X}|))^T$, and
\begin{align*}
    n'\defined \mathsf{dim}(\mathsf{span}(\{\mathbf{f}_0,\mathbf{u}_1,...,\mathbf{u}_n\}))-1.
\end{align*}

Following from \eqref{equ:F_orth},
$\{f_k(x)\mid k=0,...,|\mathcal{X}|-1\}$ is a basis of $\mathcal{L}_2(P_X)$ and, consequently, the functions $\hat{s}_i(x)$ can be decomposed as
\begin{equation}
\hat{s}_i(x) = \sum_{k=0}^{|\mathcal{X}|-1} \alpha_{i,k} f_k(x).
\end{equation}
Since $\EE{\hat{s}_i(X)}=0$, then $\alpha_{i,0}=0$. Similarly, since $\mathbf{u}_i\in \mathsf{span}(\{\mathbf{f}_0,...,\mathbf{f}_{n'}\})$ and $\EE{u_i(X)}=0$, we have
\begin{equation}
u_i(x) = \sum_{k=1}^{n'} \beta_{i,k} f_k(x).
\end{equation}
If $\mathbf{P}_{X,Y}=\mathbf{D}_X \mathbf{F} \mathbf{\Sigma} \mathbf{F}^T \mathbf{D}_X$ with $\mathbf{\Sigma}=\diag(1,\sigma_1,...,\sigma_{|\mathcal{X}|-1})$ is a feasible joint distribution matrix (i.e., non-negative entries and all entries add to $1$), then, following from Theorem~\ref{thm:PIC_Charac},
\begin{align*}
\mmse(\hat{s}_i(X)|Y) 
&=1-\sum_{k=1}^{|\mathcal{X}|-1} \alpha^2_{i,k} \sigma_k^2,\\
\mmse(u_i(X)|Y) 
&= \sum_{k=1}^{n'} \beta^2_{i,k} (1-\lambda_k(X;Y))
\leq 1-\min_{k\in[n']}\lambda_k(X;Y)
=1 - \left(\min_{k\in[n']} \sigma_k\right)^2.
\end{align*}
Therefore, the design of the privacy-assuring mapping $P_{Y|X}$ with privacy-utility constraints is equivalent to solving the PIC-based convex program in Formulation~\ref{alg:PIC-based}. In this case, the objective function is chosen as $\mathsf{obj}(\sigma_1,...,\sigma_{n'})=\min\{\sigma_1,...,\sigma_{n'}\}$\footnote{This is a convex program since one can add a constraint $\sigma_i\geq \sigma$ ($i\in [n']$) and maximize $\sigma$.}.

\begin{algorithm}[t]
\label{alg:PIC-based}
\begin{align}
    \max~&\mathsf{obj}(\sigma_1,...,\sigma_{n'})\\
    ~\sto~&\sum_{k=1}^{|\mathcal{X}|-1} \alpha_{i,k}^2 \sigma_k^2 \leq 1-\theta_i\ (i=1,...,m),\\
    &0\leq \sigma_i \leq 1\ (i=1,...,|\mathcal{X}|-1),\\
    &\mathbf{\Sigma} = \diag(1,\sigma_1,...,\sigma_{|\mathcal{X}|-1}),\\
    &\mathbf{P}_{X,Y}=\mathbf{D}_X \mathbf{F} \mathbf{\Sigma} \mathbf{F}^T \mathbf{D}_X,\\
    &\mathbf{P}_{X,Y}\ \text{has non-negative entries.}
\end{align}
\caption{\small{PIC-based convex program. Here $\sigma_i$ ($i=1,...,|\mathcal{X}|-1$) and $\theta_i$ ($i=1,...,m$) are variables and privacy parameters, respectively. The objective function $\mathsf{obj}(\sigma_1,...,\sigma_{n'})$ is chosen as a concave function and measures utility.}}
\end{algorithm}

The objective function $\min\{\sigma_1,...,\sigma_{n'}\}$ maximizes the worst-case utility over all useful functions. On the other hand, we can choose the objective function to be a weighted sum  $\sum_{i=1}^{n'} a_i\sigma_i$. Although maximizing the weighted sum is not equivalent to the desired utility constraints, this new formulation allows more flexibility in the optimization. In particular, this enables useful functions which do not highly correlate with private functions to achieve better utility, in terms of mean-squared error, under the same privacy constraints. Furthermore, the weights can be used to prioritize the reconstruction of certain useful functions.

The previous convex programs can be numerically solved by standard methods (e.g., CVXPY \cite{diamond2016cvxpy}). Note that when all useful functions and private functions are based on the same random variable, we can use optimization without projection. We defer the numerical results to Section~\ref{sec:numericalresults}, where we derive privacy-assuring mappings for a synthetic dataset and a real-world dataset using tools introduced in this section.
\section{Lower Bounds for MMSE with Restricted Knowledge of the Data Distribution}
\label{sec:MMSE}
So far we have assumed the information-theoretic setting where the probability distribution $P_{S,X}$ is known to the privacy mechanism designer beforehand. In this section, we forgo this assumption and consider a setting where $S=\phi(X)$ and the correlation between $\phi(X)$ and a set of functions (composed with the data) $\{\phi_j(X)\}_{j=1}^m$ is given.  We derive lower bounds for the MMSE of estimating $\phi(X)$ given $Y$ in terms of the MMSE of estimating $\phi_j(X)$ given $Y$. In privacy systems, $X$ may be a user's data and $Y$ a distorted version of $X$ generated by a privacy-assuring mapping $P_{Y|X}$. The set $\{\phi_j(X)\}_{j=1}^m$ could then represent a set of functions that are known to be hard to infer from $Y$ due to inherent privacy constraints of the setup. For example, when the mapping $P_{Y|X}$ is designed by the PIC-based convex programs in Formulations~\ref{alg:PIC-based} and $\{\phi_j(X)\}_{j=1}^m$ is the set of private functions, $\mmse\left(\phi_j(X)|Y\right)$ is lower bounded due to the privacy constraints.

The following lemma will be used to derive the lower bounds for the MMSE of $\phi(X)$ given $Y$.
\begin{lem}
\label{lem:quadBound}
Let $L_n:(0,\infty)^n\times[0,1]^n\rightarrow \Reals$ be given by
\begin{equation}
L_n(\ba,\bb) \defined \max\left\{ \ba^T\by \mid \by\in\Reals^n,\|\by\|_2\leq 1, \by\leq \bb \right\}. 
\label{eq:defn_Ln}
\end{equation}
Let $\pi$ be a permutation of $[n]$ such that $b_{\pi(1)}/a_{\pi(1)}\leq \dots\leq b_{\pi(n)}/a_{\pi(n)}$. If $b_{\pi(1)}/a_{\pi(1)}\geq 1$, $L_n(\ba,\bb)=\|\ba\|_2$. Otherwise,
\begin{align*}
L_n(\ba,\bb)
=\sum_{i=1}^{k^*} a_{\pi(i)}b_{\pi(i)} + \sqrt{\left(\|\ba\|_2^2-\sum_{i=1}^{k^*} a_{\pi(i)}^2\right)\left(1-\sum_{i=1}^{k^*}b_{\pi(i)}^2\right)}
\end{align*}
where
\begin{equation}
\label{eq:kstar}
k^*\defined \max\left\{k\in [n]\ \Big|\  
\frac{b_{\pi(k)}}{a_{\pi(k)}}\leq \sqrt{\frac{\left(1-\sum_{i=1}^{k-1}b_{\pi(i)}^2\right)^+}{\|\ba\|_2^2-\sum_{i=1}^{k-1} a_{\pi(i)}^2}} \right\}.
\end{equation}
\end{lem}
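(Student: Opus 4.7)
My plan is to treat this as the maximization of a linear function over the intersection of the closed unit ball and the upper box $\{\by : \by \leq \bb\}$, and to read off the optimizer from KKT. The feasible set is nonempty (it contains $\mathbf{0}$) and compact, so the supremum in \eqref{eq:defn_Ln} is attained. Introducing a multiplier $\mu \geq 0$ for the norm constraint $\|\by\|_2^2 \leq 1$ and multipliers $\bnu = (\nu_i)_{i\in[n]} \geq \mathbf{0}$ for the coordinate constraints $y_i \leq b_i$, the KKT stationarity condition reads $a_i = 2\mu y_i + \nu_i$, together with complementary slackness $\mu(\|\by\|_2^2 - 1) = 0$ and $\nu_i(y_i - b_i) = 0$.

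The first bullet of the lemma corresponds to the regime in which the unconstrained ball maximizer $\ba/\|\ba\|_2$ is already feasible; the box constraints are then slack, $\bnu = \mathbf{0}$, and one reads off $L_n(\ba,\bb) = \|\ba\|_2$ directly. In the complementary regime, at least one box constraint must be tight; in particular one must have $\mu > 0$, so the norm constraint is active with $\|\by\|_2 = 1$. For each coordinate, the KKT system yields two mutually exclusive possibilities: either $y_i = b_i$ (``saturated'', allowing $\nu_i \geq 0$) or $\nu_i = 0$ and $y_i = a_i/(2\mu) < b_i$ (``free''). Substituting back into stationarity shows that index $i$ is saturated iff $b_i/a_i \leq 1/(2\mu)$. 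Consequently, after sorting by $b_{\pi(i)}/a_{\pi(i)}$ in increasing order, the saturated indices form an initial segment $\{\pi(1),\dots,\pi(k^\ast)\}$ for some $k^\ast \in [n]$.

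Plugging the saturated and free values into the active norm constraint gives
\[
\sum_{i=1}^{k^\ast} b_{\pi(i)}^2 + \frac{1}{(2\mu)^2}\Bigl(\|\ba\|_2^2 - \sum_{i=1}^{k^\ast} a_{\pi(i)}^2\Bigr) = 1,
\]
which determines $1/(2\mu)$ as exactly the square root appearing inside \eqref{eq:kstar}. Substituting the resulting $y_i$'s into the objective and simplifying produces the claimed closed form $\sum_{i=1}^{k^\ast} a_{\pi(i)}b_{\pi(i)} + \sqrt{(\|\ba\|_2^2-\sum_{i=1}^{k^\ast} a_{\pi(i)}^2)(1-\sum_{i=1}^{k^\ast}b_{\pi(i)}^2)}$. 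To finish, I need to show that the combinatorial index $k^\ast$ agrees with the one defined in \eqref{eq:kstar}; the consistency of the saturated/free split demands $b_{\pi(k^\ast)}/a_{\pi(k^\ast)} \leq 1/(2\mu) < b_{\pi(k^\ast+1)}/a_{\pi(k^\ast+1)}$, and a short algebraic manipulation shows that the inequality $b_{\pi(k)}^2 (\|\ba\|_2^2-\sum_{i<k}a_{\pi(i)}^2) \leq a_{\pi(k)}^2(1-\sum_{i<k}b_{\pi(i)}^2)$ is equivalent to the analogous inequality with summations through $k$, so the two formulations coincide.

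The main technical obstacle is the combinatorial step: arguing that the optimal saturated set is an initial segment under the sort order, and that $k^\ast$ is well-defined and unique. I expect this reduces to a monotonicity argument, namely that $b_{\pi(k)}/a_{\pi(k)}$ is non-decreasing in $k$ by construction while the right-hand threshold in \eqref{eq:kstar} is non-increasing in $k$ on the relevant range, so the governing inequality flips at most once. Extra care is needed at the boundary cases $k^\ast = n$ (when $\|\bb\|_2 \leq 1$ so all indices saturate and the square-root term collapses) and when several ratios $b_i/a_i$ coincide (making $\pi$ non-unique but leaving $L_n$ unchanged), as well as verifying that the ``$b_{\pi(1)}/a_{\pi(1)} \geq 1$'' threshold aligns with the ball-only optimum being feasible.
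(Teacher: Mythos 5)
Your plan is, at bottom, the same convex‑duality argument as the paper's, just run in the opposite direction: the paper guesses the optimal primal point $\by^*$ together with an explicit feasible point $\bu^*$ of the Lagrangian dual $\min\{\ba^T\bb+\bu^T\bb+\|\bu\|_2 \mid \bu\ge -\ba\}$, checks that the two objective values coincide, and concludes by weak duality; your multipliers $2\mu$ and $\nu_i$ are exactly $1/c_{k^*}$ and $a_i+u_i^*$ in the paper's notation. What the verification route buys is that it never has to argue which constraints are active --- feasibility of the guessed pair is read off from the definition of $k^*$ and the sorting of the ratios --- whereas your derivation route must actually prove the ``saturated set is an initial segment'' structure that you rightly call the main obstacle. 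That step is resolvable, but not by the monotonicity you assert: writing $c_{k-1}\defined\sqrt{\left(1-\sum_{i<k}b_{\pi(i)}^2\right)^+/\left(\|\ba\|_2^2-\sum_{i<k}a_{\pi(i)}^2\right)}$, one checks that $c_k\ge c_{k-1}$ precisely when index $k$ \emph{satisfies} the condition in \eqref{eq:kstar}, so the threshold is non\emph{de}creasing on the range you need and only starts decreasing after the first failure. The correct argument is the one latent in your ``short algebraic manipulation'': failure at $k$ forces $c_k\le c_{k-1}<b_{\pi(k)}/a_{\pi(k)}\le b_{\pi(k+1)}/a_{\pi(k+1)}$, so failure propagates to all larger indices.

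Two further points need repair before this is a proof rather than a plan. First, the inference ``at least one box constraint is tight, in particular $\mu>0$'' is false when $\|\bb\|_2<1$: there the optimum is $\by=\bb$, the ball constraint is slack, $\mu=0$, and your formula $y_i=a_i/(2\mu)$ degenerates; you flag this as the $k^*=n$ boundary case, but the main line of your argument breaks there rather than merely needing extra care. Second, the case split does not align the way you hope: the ball maximizer $\ba/\|\ba\|_2$ is feasible iff $b_{\pi(1)}/a_{\pi(1)}\ge 1/\|\ba\|_2$, which is strictly weaker than the stated hypothesis $b_{\pi(1)}/a_{\pi(1)}\ge 1$ (e.g.\ $\ba=(1,1)$, $\bb=(0.9,0.95)$), and in that intermediate regime the set in \eqref{eq:kstar} is empty, so one must adopt the convention $k^*=0$ for the closed form to return $\|\ba\|_2$. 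None of this changes the verdict that your approach works, but as written these load‑bearing steps are open, while the paper's primal--dual certificate closes them by construction.
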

\begin{proof}
See Appendix~\ref{subsec::Lem_quadBound}.
\end{proof}

Throughout this section we assume $\|\phi_i(X)\|_2=1$ ($i\in [m]$) and $\EE{\phi_i(X)\phi_j(X)}=0$ ($i\neq j$).
For a given $\phi_i$, the inequality 
\begin{equation}
\label{equa:condE_2leqnu}
  \max_{\psi\in\calL_2(P_Y)} \EE{\phi_i(X)\psi(Y)}=\|\EE{\phi_i(X)|Y} \|_2\leq \nu_i
\end{equation}
is satisfied, where $0\leq \nu_i\leq 1$. This is equivalent to $\mmse(\phi_i(X)|Y)\geq 1-\nu_i^2$.

\begin{thm}
\label{thm:loose}
Let $\|\phi(X)\|_2=1$ and  $\EE{\phi(X)\phi_i(X)}=\rho_i>0$. Denoting $\brho\defined (|\rho_1|,\dots,|\rho_m|)$, $\bnu\defined (\nu_1,\dots,\nu_m)$, $\rho_0\defined\sqrt{1-\sum_{i=1}^m\rho_i^2} $, $\brho_0\defined (\rho_0,\brho)$ and $\bnu_0\defined(1,\bnu)$, then
\begin{equation}
\label{eq:firstBound}
\|\EE{\phi(X)|Y} \|_2\leq B_m(\brho_0,\bnu_0),
\end{equation}
where
\begin{equation}
\label{eq:Bmdef}
B_m(\brho_0,\bnu_0)\defined
\begin{cases}
L_{m+1}\left( \brho_0,\bnu_0 \right)  & \mbox{if } \rho_0>0,\\
L_m(\brho,\bnu) & \mbox{otherwise,}  
\end{cases}
\end{equation}
and $L_{n}$ is given in \eqref{eq:defn_Ln}. Consequently,
\begin{equation}
\label{eq:consfirstBound_mmse}
\mmse(\phi(X)|Y)\geq 1-B_m(\brho_0,\bnu_0)^2.
\end{equation}
\end{thm}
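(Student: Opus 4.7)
The plan is to reduce the bound on $\|\EE{\phi(X)|Y}\|_2$ to a linear maximization over a box-and-ball-constrained vector, and then invoke Lemma~\ref{lem:quadBound} directly.

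First, I would orthonormally extend $\{\phi_i(X)\}_{i=1}^m$. Because $\phi_1(X),\dots,\phi_m(X)$ are orthonormal in $\calL_2(P_X)$ and $\|\phi(X)\|_2 = 1$, Bessel's inequality yields $\sum_{i=1}^m \rho_i^2 \leq 1$, so $\rho_0 = \sqrt{1-\sum_i \rho_i^2}$ is real and nonnegative. When $\rho_0>0$, set $\phi_0(X) \defined \rho_0^{-1}(\phi(X)-\sum_{i=1}^m \rho_i \phi_i(X))$; one checks that $\|\phi_0(X)\|_2=1$, $\EE{\phi_0(X)\phi_i(X)}=0$ for $i\in[m]$, and $\phi(X)=\sum_{i=0}^m \rho_i \phi_i(X)$. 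When $\rho_0=0$, one simply has $\phi(X)=\sum_{i=1}^m \rho_i \phi_i(X)$ and no extension is needed.

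Second, I would use the dual characterization $\|\EE{\phi(X)|Y}\|_2 = \max\{\EE{\phi(X)\psi(Y)}\mid \psi\in\calL_2(P_Y)\}$ and linearity of expectation to write $\EE{\phi(X)\psi(Y)}=\sum_{i=0}^m \rho_i y_i$, where $y_i \defined \EE{\phi_i(X)\psi(Y)}$. Two constraints on $\mathbf{y}=(y_0,\dots,y_m)$ then emerge. By the tower property and Cauchy--Schwarz together with \eqref{equa:condE_2leqnu}, $|y_i|=|\EE{\EE{\phi_i(X)|Y}\psi(Y)}|\leq \nu_i$ for $i\geq 1$, and analogously $|y_0|\leq 1=\nu_0$. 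Moreover, regarding $\phi_0(X),\dots,\phi_m(X)$ as an orthonormal system in $\calL_2(P_{X,Y})$ and $\psi(Y)\in\calL_2(P_{X,Y})$ with norm at most one, Bessel's inequality gives $\sum_{i=0}^m y_i^2 \leq \|\psi\|_2^2 \leq 1$.

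Third, since all $\rho_i$ are nonnegative, maximizing $\sum_{i=0}^m \rho_i y_i$ subject to $|y_i|\leq \nu_i$ and $\|\mathbf{y}\|_2\leq 1$ is bounded by the same maximum under the one-sided constraint $y_i\leq \nu_i$, which is exactly $L_{m+1}(\brho_0,\bnu_0)$ from Lemma~\ref{lem:quadBound}. When $\rho_0=0$, the zero-th coordinate drops out and the maximum becomes $L_m(\brho,\bnu)$. These two cases combine into the definition of $B_m(\brho_0,\bnu_0)$ in \eqref{eq:Bmdef}, yielding \eqref{eq:firstBound}. The MMSE consequence \eqref{eq:consfirstBound_mmse} then follows from the orthogonality identity $\mmse(\phi(X)|Y)=\|\phi(X)\|_2^2-\|\EE{\phi(X)|Y}\|_2^2=1-\|\EE{\phi(X)|Y}\|_2^2$. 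The main obstacle will be carefully setting up the Hilbert-space calculation: we must interpret $\phi_i(X)$ and $\psi(Y)$ as elements of the single space $\calL_2(P_{X,Y})$, verify that the $\phi_i(X)$ remain orthonormal there, and recognize that $\psi(Y)$ keeps its $\calL_2(P_Y)$ norm so that Bessel's inequality applies. Once this is in place, reduction to Lemma~\ref{lem:quadBound} is mechanical, with only the edge case $\rho_0=0$ requiring separate bookkeeping.
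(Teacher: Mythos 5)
Your proposal is correct and follows essentially the same route as the paper's proof: define the normalized residual $\phi_0$ (the paper calls it $h$), expand $\EE{\phi(X)\psi(Y)}=\sum_{i=0}^m\rho_i\,\EE{\phi_i(X)\psi(Y)}$, bound the coefficient vector by the box constraints from \eqref{equa:condE_2leqnu} and the ball constraint from Bessel's inequality, and invoke Lemma~\ref{lem:quadBound} together with $\max_{\psi\in\calL_2(P_Y)}\EE{\phi(X)\psi(Y)}=\|\EE{\phi(X)|Y}\|_2$. The only cosmetic difference is that you apply Bessel directly in $\calL_2(P_{X,Y})$ while the paper first projects $\psi$ onto $X$ via $T_{Y|X}$ and applies Bessel in $\calL_2(P_X)$; these are equivalent by the tower property.
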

\begin{proof}
See Appendix~\ref{subsec::thm_loose}.
\end{proof}

Denote $\psi_i(Y)\defined (T_{X|Y} \phi_i)(Y)/\|(T_{X|Y} \phi_i)(Y)\|_2$ ($i\in [m]$) and $\phi_0(X) \defined \rho_0^{-1}(\phi(X)-\sum_{i=1}^m\rho_i\phi_i(X))$ if $\rho_0>0$, otherwise $\phi_0(X) \defined 0$. The previous bounds, \eqref{eq:firstBound} and \eqref{eq:consfirstBound_mmse}, can be further improved when $\EE{\psi_i(Y)\phi_j(X)}=0$ for $i\neq j,\, j\in \{0,\dots,m\}$.

\begin{thm}
\label{thm:tighter}
Let $\|\phi(X)\|_2=1$ and $|\EE{\phi(X)\phi_i(X)}|=\rho_i>0$ for $i\in[m]$. In addition, assume $\EE{\psi_i(Y)\phi_j(X)}=0$ for $i\neq j$, $i\in [t]$ and $j\in\{0,\dots,m\}$, where $0\leq t\leq m$. Then
\begin{equation}
\|\EE{\phi(X)|Y} \|_2\leq \sqrt{\sum_{k=1}^t \nu_i^2\rho_i^2 + B_{m-t}\left(\tilde{\brho},\tilde{\bnu}\right)^2},
\end{equation}
where $\tilde{\brho}=(\rho_0,\rho_{t+1},\dots,\rho_m)$, $\tilde{\bnu}=(1,\nu_{t+1},\dots,\nu_m)$, and $B_m$ is defined in \eqref{eq:Bmdef} (considering $B_0=0$). In particular, if $t=m$,
\begin{equation}
\label{eq:sharp}
\|\EE{\phi(X)|Y} \|_2\leq  \sqrt{\rho_0^2+\sum_{k=1}^m \nu_i^2\rho_i^2 },
\end{equation}
and \eqref{eq:sharp} is an equality when $\rho_0=0$. Furthermore,
\begin{equation}
\label{eq:mmse_sharp}
\mmse(\phi(X)|Y)\geq 1-\sum_{k=1}^t \nu_i^2\rho_i^2- B_{m-t}\left(\tilde{\brho},\tilde{\bnu}\right)^2.
\end{equation}
\end{thm}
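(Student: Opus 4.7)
The plan is to decompose $\EE{\phi(X)|Y}$ as a piece in the span of $\{\psi_1,\ldots,\psi_t\}$ plus an orthogonal remainder, bound each piece separately, and combine by Pythagoras. The extra orthogonality hypothesis of this theorem is exactly what makes the decomposition clean and reduces the leftover piece to a lower-dimensional instance of Theorem~\ref{thm:loose}.

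\textbf{Step 1: Orthonormal bases.} Set $r_i\defined\EE{\phi(X)\phi_i(X)}$, so $|r_i|=\rho_i$ and $\phi(X)=\sum_{i=1}^{m}r_i\phi_i(X)+\rho_0\phi_0(X)$ with $\phi_0,\phi_1,\ldots,\phi_m$ orthonormal in $\calL_2(P_X)$. I first check that $\{\psi_i(Y)\}_{i\in[t]}$ is orthonormal in $\calL_2(P_Y)$: by definition $\|\psi_i\|_2=1$, and for $i\neq j$ in $[t]$, $\EE{\psi_i(Y)\psi_j(Y)}=\EE{\psi_i(Y)\phi_j(X)}/\|\EE{\phi_j(X)|Y}\|_2=0$ by the hypothesis. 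The same hypothesis gives $\EE{\phi(X)\psi_i(Y)}=r_i\|\EE{\phi_i(X)|Y}\|_2$ for $i\in[t]$, since every cross term with $\phi_j$, $j\neq i$, vanishes.

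\textbf{Step 2: Pythagoras.} Since $\EE{\phi_i(X)|Y}=\|\EE{\phi_i(X)|Y}\|_2\,\psi_i(Y)$, projecting $\EE{\phi(X)|Y}$ onto $\mathrm{span}\{\psi_1,\ldots,\psi_t\}$ gives an orthogonal remainder $V(Y)$ that equals $\EE{\phi'(X)|Y}$ with
\begin{equation*}
\phi'(X)\defined\phi(X)-\sum_{i=1}^{t}r_i\phi_i(X)=\sum_{i=t+1}^{m}r_i\phi_i(X)+\rho_0\phi_0(X).
\end{equation*}
Pythagoras together with the hypothesis $\|\EE{\phi_i(X)|Y}\|_2\leq\nu_i$ then yields
\begin{equation*}
\|\EE{\phi(X)|Y}\|_2^2=\sum_{i=1}^{t}r_i^2\|\EE{\phi_i(X)|Y}\|_2^2+\|\EE{\phi'(X)|Y}\|_2^2\leq\sum_{i=1}^{t}\rho_i^2\nu_i^2+\|\EE{\phi'(X)|Y}\|_2^2.
\end{equation*}

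\textbf{Step 3: The main obstacle --- bounding the residual via Theorem~\ref{thm:loose}.} It remains to show $\|\EE{\phi'(X)|Y}\|_2\leq B_{m-t}(\tilde\brho,\tilde\bnu)$. I would apply Theorem~\ref{thm:loose} to the unit-norm function $\phi'(X)/\|\phi'\|_2$ with $\|\phi'\|_2^2=\rho_0^2+\sum_{i=t+1}^{m}\rho_i^2$; there, the role of the $\phi_i$'s is played by the orthonormal $\phi_{t+1},\ldots,\phi_m$ with correlations $\rho_{t+1}/\|\phi'\|_2,\ldots,\rho_m/\|\phi'\|_2$, and the role of the residual coefficient by $\rho_0/\|\phi'\|_2$. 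The resulting $B_{m-t}$ is $L_{m-t+1}$ or $L_{m-t}$ of the scaled correlation vector; using positive homogeneity of $L_n$ in its first argument (immediate from \eqref{eq:defn_Ln}) lets me pull out $1/\|\phi'\|_2$, and multiplying back by $\|\phi'\|_2$ recovers $B_{m-t}(\tilde\brho,\tilde\bnu)$ with the un-scaled $\tilde\brho$. Combining with Step~2 and using $\mmse(\phi(X)|Y)=1-\|\EE{\phi(X)|Y}\|_2^2$ gives \eqref{eq:mmse_sharp}. For the special case $t=m$, $B_0((\rho_0),(1))=L_1((\rho_0),(1))=\rho_0$ (and $0$ if $\rho_0=0$), yielding \eqref{eq:sharp}; equality when $\rho_0=0$ then holds because $\phi(X)\in\mathrm{span}\{\phi_1,\ldots,\phi_m\}$ and $\EE{\phi(X)|Y}=\sum_{i=1}^{m}r_i\|\EE{\phi_i(X)|Y}\|_2\,\psi_i(Y)$ is an exact orthonormal expansion with Parseval's identity making the inequality tight.
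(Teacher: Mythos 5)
Your proof is correct, and it reaches the bound by a route that is dual to the paper's. The paper works with the variational characterization $\|\EE{\phi(X)|Y}\|_2=\max_{\psi}\EE{\phi(X)\psi(Y)}$: it expands an arbitrary unit-norm test function as $\psi=\sum_{j=1}^t\alpha_j\psi_j+\alpha_0\psi_0$, uses the orthogonality hypothesis to kill the cross terms in the double sum, bounds the residual term by re-running the argument of Theorem~\ref{thm:loose} inline, and combines via Cauchy--Schwarz over $(\alpha_0,\dots,\alpha_t)$. You instead decompose the estimator $\EE{\phi(X)|Y}$ itself into its projection onto $\mathrm{span}\{\psi_1,\dots,\psi_t\}$ plus an orthogonal remainder, identify the remainder as $\EE{\phi'(X)|Y}$, and combine by Pythagoras --- which is exactly the primal counterpart of the paper's Cauchy--Schwarz step (with $\sum_j\alpha_j^2=1$). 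What your version buys: the combination step is an exact identity rather than an inequality, the equality claim at $t=m$, $\rho_0=0$ falls out of Parseval rather than requiring an explicit optimizing $\psi$, and Theorem~\ref{thm:loose} is invoked as a black box on the normalized residual rather than re-proved; the price is the small extra observation that $L_n(\cdot,\bb)$ is positively homogeneous in its first argument so the normalization $\|\phi'\|_2$ cancels, which you supply correctly. One shared caveat (present in the paper's proof as well): the equality statement in \eqref{eq:sharp} implicitly takes $\nu_i=\|\EE{\phi_i(X)|Y}\|_2$ exactly, not merely as an upper bound, so your Parseval argument is on the same footing as the paper's construction of the optimizing $\psi$.
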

\begin{proof}
See Appendix~\ref{subsec::thm_tighter}.
\end{proof}

In what follows, we use three examples to illustrate different use cases of Theorem \ref{thm:loose} and \ref{thm:tighter}. Example~\ref{ex:sym_channel_bound_sharp} illustrates how Theorem~\ref{thm:tighter} can be applied to the $q$-ary symmetric channel which could be perceived as a model of randomized response \cite{warner1965randomized,kairouz2014extremal}, and demonstrates that bound~\eqref{eq:sharp} is sharp. Example~\ref{ex:bin_chan_parity_bits_mmse} illustrates Theorem~\ref{thm:tighter} for the binary symmetric channel. Here the useful variable is composed by $n$ uniform and independent bits. In this case, the basis can be expressed as the parity bits of the input to the channel. Finally, Example~\ref{ex:one_bit_func_mmse} illustrates Theorem~\ref{thm:loose} for one-bit functions. The same method used in the proof of Theorem~\ref{thm:loose} is applied to bound the probability of correctly guessing a one-bit function from an observation of the disclosed data.

\begin{example}[$q$-ary symmetric channel]
\label{ex:sym_channel_bound_sharp}
Let $\calX=\calY=[q]$, and $Y$ be the result of passing $X$ through an $(\epsilon,q)$-ary symmetric channel, which is defined by the transition probability
\begin{equation}
P_{Y|X}(y|x)= (1-\epsilon)\indicator_{y=x} +\epsilon/q \text{ for all } x\in\mathcal{X},y\in\mathcal{Y}.
\end{equation}
We assume that $X$ has a uniform distribution, which implies $Y$ also has a uniform distribution. Any function $\phi\in\calL_2(P_X)$ such that $\EE{\phi(X)}=0$ and $\|\phi(X)\|_2=1$ satisfies
\begin{align*}
\psi(Y)=(T_{X|Y}\phi)(Y)=(1-\epsilon)\phi(Y),
\end{align*}
and, consequently, $\|(T_{X|Y}\phi)(Y)\|_2 = (1-\epsilon)$. We will use this fact to show that the bound \eqref{eq:sharp} is sharp in this case.
    
Observe that for  $\phi_i,\phi_j\in \calL_2(P_X)$, if $\EE{\phi_i(X)\phi_j(X)}=0$ then $\EE{\psi_i(Y)\psi_j(Y)}=0$. Now let $\phi\in \calL_2(P_X)$ satisfy $\EE{\phi(X)}=0$ and $\|\phi(X)\|_2=1$, and let $\EE{\phi(X)\phi_i(X)}=\rho_i$ for $i\in [m]$, where $\{\phi_i\}$ satisfies the conditions in Theorem~\ref{thm:tighter} and $\sum_{i=1}^m \rho_i^2=1$. In addition, $\|\psi_i(Y)\|_2=(1-\epsilon)=\nu_i$. Then, from \eqref{eq:sharp} and noting that $\rho_0=0$, $t=m$, we have
\begin{align*}
\|(T_{X|Y}\phi)(Y)\|_2 \leq \sqrt{\sum_{i=1}^m \nu_i^2\rho_i^2}
                = (1-\epsilon)\sqrt{\sum_{i=1}^m \rho_i^2}
                = 1-\epsilon,
\end{align*}
which matches $\|(T_{X|Y}\phi)(Y)\|_2$, and the bound is tight in this case.
\end{example}

\begin{example}[Binary channels with additive noise]
\label{ex:bin_chan_parity_bits_mmse}
Let $\calX = \{-1,1\}^n$ and $\calY=\{-1,1\}^n$, and $Y^n$ be the result of passing $X^n$ through a memoryless binary symmetric channel with crossover probability $\epsilon<1/2$. We assume that $X^n$ is composed by $n$ uniform and i.i.d. bits. For $\calS\subseteq [n]$, let
\begin{align*}
    \chi_\calS(X^n)\defined\prod_{i\in \calS}X_i.
\end{align*}
Any function $\phi:\calX \to \Reals$ can then be decomposed in terms of the basis $\chi_\calS(X^n)$ as \cite{o2008some}
\begin{equation*}
\phi(X^n)=\sum_{\calS\subseteq[n]}c_\calS \chi_{\calS}(X^n),
\end{equation*}
where $c_\calS=\EE{\phi(X^n)\chi_\calS(X^n)}$. Furthermore, since $\EE{\chi_\calS(X^n)|Y^n}=(1-2\epsilon)^{|\calS|}\chi_\calS(Y^n)$, it follows from Theorem~\ref{thm:tighter} that
\begin{equation}
\mmse(\phi(X^n)|Y^n) =1-\sum_{\calS\subseteq[n]} c_\calS^2(1-2\epsilon)^{2|\calS|}.
\end{equation} 
This result can be generalized for the case $X^n=Y^n\otimes Z^n$, where the operation $\otimes$ denotes bit-wise multiplication, $Z^n$ is drawn from $\{-1,1\}^n$ and $X^n$ is uniformly distributed. In this case
\begin{equation}
\mmse(\phi(X^n)|Y^n) =1-\sum_{\calS\subseteq[n]}
c_\calS^2\EE{\chi_\calS(Z^n)}^2.
\end{equation}
\end{example}

\begin{example}[One-Bit Functions]
\label{ex:one_bit_func_mmse}
Let $X$ be a hidden random variable with support $\calX$, and let $Y$ be a noisy observation of $X$. We denote by $B_1,\dots,B_m$ a collection of $m$ predicates of $X$, where $B_i = \phi_i(X)$, $\phi_i:\calX\rightarrow \{-1,1\}$ for $i\in[m]$ and, without loss of generality, $\EE{B_i}=b_i\geq 0$.

We denote by $\hat{B}_i$  an estimate of $B_i$ given an observation of $Y$, where $B_i\rightarrow X \rightarrow Y \rightarrow \hat{B}_i$. We assume that for any $\hat{B}_i$ $$\left|\mathbb{E}[B_i\hat{B}_i]\right|\leq 1-2\alpha_i$$
for some $0\leq \alpha_i\leq (1-b_i)/2 \leq 1/2$. This condition is equivalent to imposing that $\Pr(B_i\neq \hat{B}_i)\geq \alpha_i$, since
\begin{align*}
\EE{B_i\hat{B}_i}
&=\Pr(B_i=\hat{B}_i)-\Pr(B_i\neq \hat{B}_i)\\
&=1-2\Pr(B_i\neq \hat{B}_i).
\end{align*}
In particular, this captures the ``hardness'' of guessing $B_i$ based solely on an observation of $Y$.

Now assume there is a bit $B$ such that $\EE{B B_i}=\rho_i$ for $i\in [m]$ and $\EE{B_iB_j}=0$ for $i\neq j$. We can apply the same method used in the proof of Theorem~\ref{thm:loose} to bound the probability of
$B$ being guessed correctly from an observation of $Y$:
\begin{equation}
\Pr(B\neq \hat{B}) \geq \frac{1}{2}\left( 1-B_m(\brho,\bnu)\right),
\end{equation}
where $\nu_i = 1-2\alpha_i$.
\end{example}
\section{Robustness of the PUTs}
\label{sec:robust}

In this section we investigate the pipeline in Fig.~\ref{fig:robust}  for designing  privacy-assuring mappings in practice. During the training time, a reference dataset with $n$ samples is drawn from $P_{S,X}$. The distribution of the source is estimated by computing the empirical distribution (type) $P_{\hat{S},\hat{X}}$ of the reference dataset. $P_{\hat{S},\hat{X}}$ and the privacy-utility constraints are then used as inputs to a convex program solver that returns the corresponding privacy-assuring mapping $W_{Y|\hat{X}}$ (if feasible). We denote by $\hat{Y}$ the random variable produced by randomizing $\hat{X}$ according to $W_{Y|\hat{X}}$, i.e., by applying the privacy-assuring mapping to a source with distribution $P_{\hat{S},\hat{X}}$. During the testing time, new i.i.d. samples from the source $P_{S,X}$ are randomized using the privacy-assuring mapping $W_{Y|\hat{X}}$ computed during the training time, resulting in the disclosed variable $Y$.

The privacy and utility constraints used for computing the privacy-assuring mapping hold for a data source with distribution $P_{\hat{S},\hat{X}}$, since this is the distribution used as an input to the optimization program. However, during the testing time,  $W_{Y|\hat{X}}$ is applied to new samples from the source $P_{S,X}$. {\it Do the privacy and utility guarantees still hold during the testing time?} Since as $n$ increases $P_{\hat{S},\hat{X}}$ converges to $P_{S,X}$, it is natural to expect that the privacy and utility guarantees during the testing time will not be far from the ones selected during the training time.

In what follows, we analyze the robustness of the PUT optimization using $\chi^2$-information, and characterize the gap between privacy and utility guarantees of the training and testing time in terms of the number of samples in the reference dataset and the probability of less likely symbols. The following lemma will be used to prove the main result in this section.

\begin{figure}[!tb]
  \centering
  \includegraphics[width=0.52\textwidth]{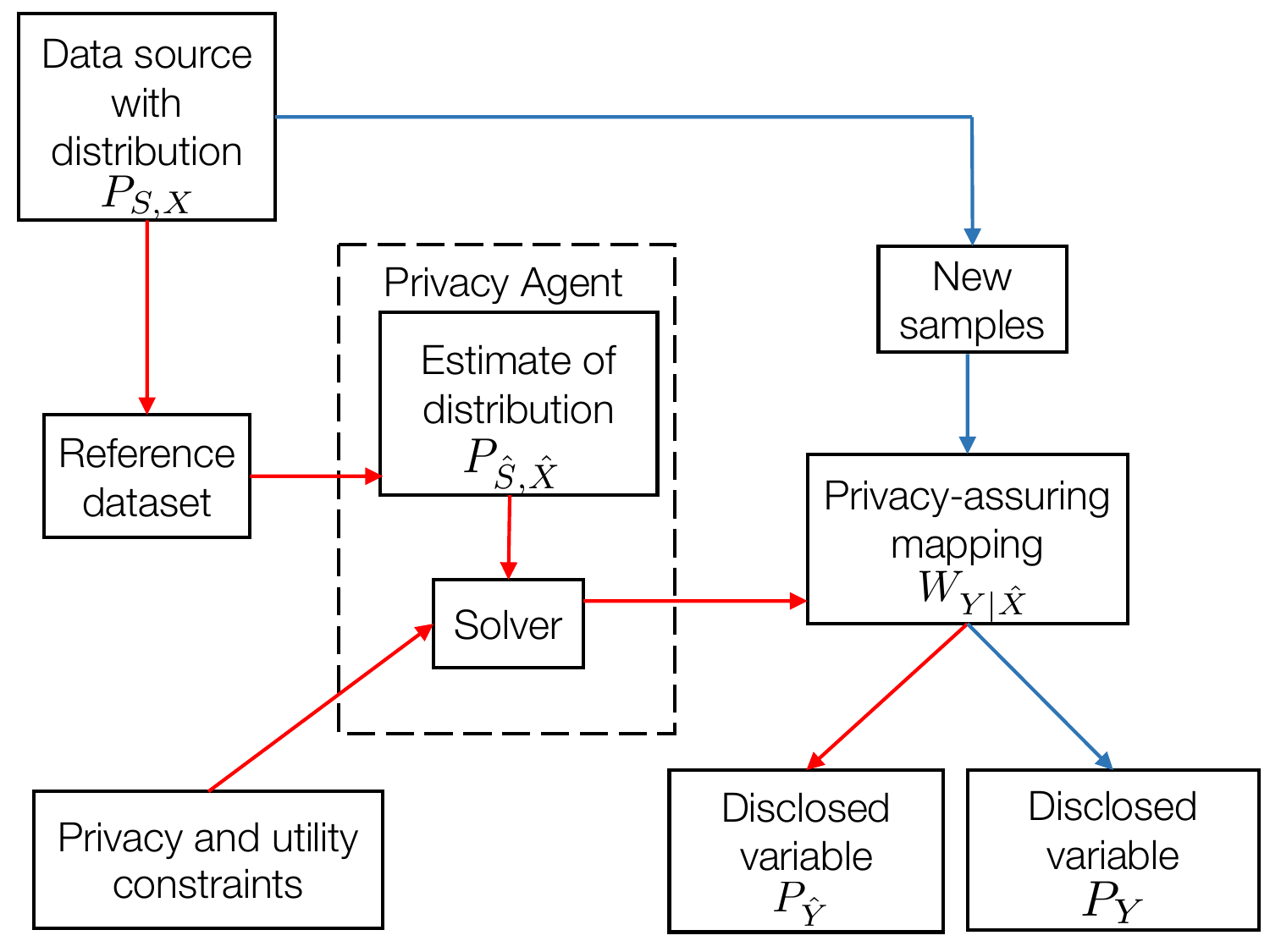}
  \caption{\small{Flowchart for training time (red) and testing time (blue).}}\label{fig:robust}
\end{figure}

\begin{lem}
\label{lem:rob}
Suppose that $S_i\rightarrow X_i \rightarrow Y_i$ for $i=1,2$ and $P_{Y_1|X_1}=P_{Y_2|X_2}$. Let $m_S\defined \min\{P_{S_i}(s)\mid s\in \mathcal{S},i=1,2\}$ and $m_X\defined \min\{P_{X_i}(x)\mid x\in \mathcal{X},i=1,2\}$. Then 
\begin{align*}
|\chi^2(S_1;Y_1) - \chi^2(S_2;Y_2)|
&\leq \frac{4}{m_S}||P_{S_1,X_1}-P_{S_2,X_2}||_1,\\
|\chi^2(X_1;Y_1) - \chi^2(X_2;Y_2)|
&\leq \frac{4}{m_X}||P_{S_1,X_1}-P_{S_2,X_2}||_1.
\end{align*}
\end{lem}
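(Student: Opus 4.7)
The plan is to start from the explicit formula
\begin{equation*}
\chi^2(S_i;Y_i)+1=\sum_{s,y}\frac{P_{S_i,Y_i}(s,y)^2}{P_{S_i}(s)P_{Y_i}(y)},
\end{equation*}
write the difference of the two $\chi^2$-terms as a telescoping sum in which $P_{S_i,Y_i}$, $P_{S_i}$, and $P_{Y_i}$ are perturbed one at a time, and bound each piece by a constant multiple of $\|P_{S_1,X_1}-P_{S_2,X_2}\|_1/m_S$. Abbreviate $r_i\defined P_{S_i,Y_i}$, $p_i\defined P_{S_i}$, $q_i\defined P_{Y_i}$, and $\Delta\defined\|P_{S_1,X_1}-P_{S_2,X_2}\|_1$. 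A preliminary step is to observe that since $P_{Y_1|X_1}=P_{Y_2|X_2}$ is a common Markov kernel and marginalization does not expand $\ell^1$-distance, each of $\|r_1-r_2\|_1$, $\|p_1-p_2\|_1$, and $\|q_1-q_2\|_1$ is at most $\Delta$; these three $\ell^1$ bounds will anchor the three telescoped pieces.

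The main algebraic identity I would use is
\begin{equation*}
\frac{r_1^2}{p_1q_1}-\frac{r_2^2}{p_2q_2}=\frac{r_1^2-r_2^2}{p_1q_1}+\frac{r_2^2(p_2-p_1)}{p_1p_2q_1}+\frac{r_2^2(q_2-q_1)}{p_2q_1q_2},
\end{equation*}
which isolates the effect of changing $r$, $p$, and $q$ separately. The fundamental pointwise estimates applied repeatedly are $r_i(s,y)\leq p_i(s)$, $r_i(s,y)\leq q_i(y)$, and $p_i(s)\geq m_S$, which together allow every denominator factor involving $p$ to be absorbed into $1/m_S$. For the first summand, factoring $r_1^2-r_2^2=(r_1-r_2)(r_1+r_2)$ and using $r_j/(p_1q_1)\leq 1/p_1\leq 1/m_S$ to obtain $(r_1+r_2)/(p_1q_1)\leq 2/m_S$ gives a contribution of at most $2\Delta/m_S$. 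The second summand can be bounded by $\Delta/m_S$ by summing over $y$ first and exploiting that $\sum_yr_2(s,y)^2/q_1(y)$ is controlled by $p_2(s)$ after a careful regrouping that uses the common-channel structure.

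The main obstacle is the third summand $\sum_{s,y}r_2^2(q_2-q_1)/(p_2q_1q_2)$, because the denominator $q_1q_2$ is not \emph{a priori} dominated by a constant depending only on $m_S$. The plan here is to sum over $s$ first and invoke the identity $\sum_sr_2(s,y)^2/p_2(s)\leq q_2(y)$ (immediate from $r_2\leq p_2$), which cancels one factor of $q_2$ and reduces the piece to $\sum_y|q_1-q_2|/q_1$. I would then unfold $q_i(y)=\sum_xP_{X_i}(x)P_{Y|X}(y|x)$ and use $|q_1-q_2|(y)\leq\sum_x|P_{X_1}-P_{X_2}|(x)P_{Y|X}(y|x)$ together with the common channel to rearrange the residual sum into a bound of $\Delta/m_S$. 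Combining the three pieces yields $|\chi^2(S_1;Y_1)-\chi^2(S_2;Y_2)|\leq 4\Delta/m_S$. The $\chi^2(X;Y)$ statement follows from the same telescoping with $S$ replaced by $X$ throughout; the analogue of the delicate third summand is in fact simpler, since $q_i(y)\geq P_{X_i}(x)P_{Y|X}(y|x)\geq m_XP_{Y|X}(y|x)$ gives $P_{Y|X}(y|x)/q_i(y)\leq 1/m_X$ directly, producing a clean cancellation.
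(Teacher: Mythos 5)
Your overall strategy---telescoping the explicit formula for $\chi^2$ one factor at a time and then closing with the $\ell^1$ data-processing bound $\max\{\|P_{S_1,Y_1}-P_{S_2,Y_2}\|_1,\|P_{X_1,Y_1}-P_{X_2,Y_2}\|_1\}\le\|P_{S_1,X_1}-P_{S_2,X_2}\|_1$---is exactly the paper's, but your particular three-term split creates \emph{mismatched} terms in which a numerator from one distribution sits over a denominator from the other, and the pointwise cancellations you invoke ($r_i\le p_i$, $r_i\le q_i$) are only valid when the subscripts agree. At least two of your three pieces rest on claims that are false in general. (i) In the first summand you assert $r_2/(p_1q_1)\le 1/p_1$, which would require $r_2\le q_1$; taking $S=X=Y$ with the identity channel, $p_1=(0.01,0.99)$ and $p_2=(0.5,0.5)$, one gets $r_2(1,1)/(p_1(1)q_1(1))=0.5/(0.01)^2=5000$ while $2/m_S=200$. (ii) In the second summand, the claim that $\sum_y r_2(s,y)^2/q_1(y)$ is controlled by $p_2(s)$ fails for the same reason: in the identity example it equals $p_2(s)^2/p_1(s)$, which is $25$ versus $p_2(1)=0.5$. (iii) The third summand does reduce to $\sum_y|q_1(y)-q_2(y)|/q_1(y)$, but this is a $\chi$-type quantity, not an $\ell^1$ one; unfolding the channel and using $W(y|x)/q_1(y)\le 1/P_{X_1}(x)$ consumes the factor $W(y|x)$ you would need to sum over $y$, so this route yields at best $\frac{|\mathcal{Y}|}{m_X}\|P_{X_1}-P_{X_2}\|_1$. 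Moreover, for the first inequality of the lemma the bound must be expressed through $m_S$, and your third piece involves only $P_{Y_i}$ and $P_{X_i}$, so it cannot naturally produce an $m_S$ at all.

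The repair is to telescope so that every $1/p_i$ and every $1/q_i$ appearing in a denominator is accompanied by an $r_i$ with the \emph{same} index in the numerator. This is what the paper does, via the four-step chain $\frac{r_1^2}{p_1q_1}\to\frac{r_1r_2}{p_1q_1}\to\frac{r_1r_2}{p_2q_1}\to\frac{r_1r_2}{p_2q_2}\to\frac{r_2^2}{p_2q_2}$: each increment is bounded using only the matched inequalities $r_1\le q_1$, $r_2\le q_2$, $r_2\le p_2$ together with $p_i\ge m$, and the marginalization identities $\sum_y r_2(s,y)=p_2(s)$ and $\sum_s r_1(s,y)=q_1(y)$ collapse the middle two increments to $\frac{1}{m}\|p_1-p_2\|_1$ and $\frac{1}{m}\|q_1-q_2\|_1$, while the outer two give $\frac{2}{m}\|r_1-r_2\|_1$; data processing then delivers the stated $\frac{4}{m}\Delta$.
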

\begin{proof}
See Appendix~\ref{subsec::Lem_rob}.
\end{proof}

Next, we illustrate the sharpness of the upper bounds in Lemma~\ref{lem:rob} through the following example.
\begin{example}
Let $\mathcal{X}=\mathcal{Y}=\{0,1\}$, $S_i=X_i$ for $i\in\{1,2\}$. Assume that $P_{X_1}(1)=p$ and $P_{X_2}(1)=p+\epsilon/2$ are such that $p,\epsilon\in(0,1/2)$. Let $P_{Y|X}$ denote the Z-channel determined by
\begin{equation}
    P_{Y|X}(y|x)=
    \begin{cases}
    a &\text{if } y=1,x=1,\\
    1-a &\text{if } y=0,x=1,\\
    1 &\text{if } y=0,x=0,\\
    \end{cases}
\end{equation}
where $a\in(0,1)$. In this case, we have $\epsilon = \|P_{S_1,X_1}-P_{S_2,X_2}\|_1$ which does not depend on $a$ and $p$, and
\begin{equation*}
\chi^2(S_i;Y_i) 
= \chi^2(X_i;Y_i)
=1-\frac{1-a}{1-aP_{X_i}(1)}, \text{ for }i\in\{1,2\}.
\end{equation*}
We denote $\Delta \defined |\chi^2(S_1;Y_1)-\chi^2(S_2;Y_2)| 
=|\chi^2(X_1;Y_1)-\chi^2(X_2;Y_2)|$. By a simple manipulation and Lemma~\ref{lem:rob}, for sufficiently small $\epsilon$, 
\begin{align}
\label{eq:ex_L1_depend}
\frac{(1-a)a}{2(1-ap)^2} \epsilon \leq \Delta \leq \frac{4}{p} \epsilon.
\end{align}
In particular, if we let $a=2/3$ and $p=1/3$, then $9\epsilon/49 \leq \Delta \leq 12\epsilon$. 
This example shows that the first-order dependence on the \mbox{$\calL_1$-norm} in the bounds of Lemma~\ref{lem:rob} cannot be improved in general. In what follows, this $\calL_1$-norm will be translated into the deviation between the underlying and empirical distributions which vanishes with order $1/\sqrt{n}$ where $n$ is the number of samples.
\end{example}

The next theorem follows from Lemma~\ref{lem:rob} and large deviation results \cite{weissman2003inequalities}. It answers the question raised at the beginning of this section and provides the bounds for the difference between the training and testing privacy-utility guarantees. Note that the bounds provided in the theorem hold for any channel $W_{Y|\hat{X}}$, not just the ones that optimize the PUT. In other words, the theorem holds for any privacy-assuring mapping returned by the  solver in Fig. \ref{fig:robust}, even if this mapping is not a globally optimal solution.

\begin{thm}
\label{thm:robust}
Let $P_{\hat{S},\hat{X}}$ be the empirical distribution obtained from $n$ i.i.d. samples drawn from the true distribution $P_{S,X}$. In addition, denote by $Y$ and $\hat{Y}$ the random variables obtained by passing $X$ and $\hat{X}$ through a given channel $W_{Y|\hat{X}}$, respectively. Let
\begingroup
\allowdisplaybreaks
\begin{align}
    m_S&\defined \left(\min\{P_{S}(s)\mid s\in \mathcal{S}\} - \sqrt{\frac{2}{n}\left(M-\ln\beta\right)}\right)_+,\\
    m_X&\defined \left(\min\{P_{X}(x)\mid x\in \mathcal{X}\} - \sqrt{\frac{2}{n}\left(M-\ln\beta\right)}\right)_+.
\end{align}
\endgroup
Then, with probability at least $1-\beta$,
\begin{align}
    |\chi^2(S;Y)-\chi^2(\hat{S};\hat{Y})|
    \leq & \frac{4}{m_S}\sqrt{\frac{2}{n}\left(M-\ln\beta\right)},\\
    |\chi^2(X;Y)-\chi^2(\hat{X};\hat{Y})|
    \leq & \frac{4}{m_X}\sqrt{\frac{2}{n}\left(M-\ln\beta\right)},
\end{align}
where $M=|\mathcal{S}||\mathcal{X}|$.
\end{thm}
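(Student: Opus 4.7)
\textbf{Proof plan for Theorem~\ref{thm:robust}.} My plan is to combine Lemma~\ref{lem:rob}, which bounds $|\chi^2(S_1;Y_1)-\chi^2(S_2;Y_2)|$ in terms of the $\calL_1$ distance between joint source distributions and the minimum marginal masses, with a standard large-deviation inequality on the empirical distribution of Weissman et al.~\cite{weissman2003inequalities}. The key identification is that in Lemma~\ref{lem:rob} we may take $(S_1,X_1)=(\hat S,\hat X)$, $(S_2,X_2)=(S,X)$, and the common channel $P_{Y_1|X_1}=P_{Y_2|X_2}=W_{Y|\hat X}$, so that the lemma directly controls the training--testing gaps $|\chi^2(\hat S;\hat Y)-\chi^2(S;Y)|$ and $|\chi^2(\hat X;\hat Y)-\chi^2(X;Y)|$, provided we can (i) bound $\|P_{\hat S,\hat X}-P_{S,X}\|_1$ with high probability, and (ii) lower bound the minimum marginal probability across both the true and empirical distributions.

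For (i), I would invoke Weissman's inequality applied to the joint alphabet of size $M=|\mathcal{S}||\mathcal{X}|$: for any $\epsilon>0$,
\begin{equation*}
\Pr\!\left(\|P_{\hat S,\hat X}-P_{S,X}\|_1 \geq \epsilon \right) \leq \exp\!\left(M - \tfrac{n\epsilon^2}{2}\right).
\end{equation*}
Setting the right-hand side equal to $\beta$ and solving yields $\epsilon = \sqrt{(2/n)(M-\ln\beta)}$, so with probability at least $1-\beta$,
\begin{equation*}
\|P_{\hat S,\hat X}-P_{S,X}\|_1 \leq \sqrt{\tfrac{2}{n}\left(M-\ln\beta\right)}.
\end{equation*}

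For (ii), I would use that marginalization is $\calL_1$-contractive: $\|P_{\hat S}-P_S\|_1\leq \|P_{\hat S,\hat X}-P_{S,X}\|_1$, and analogously for the $X$-marginal. On the high-probability event from (i), this gives, for every $s\in\mathcal{S}$,
\begin{equation*}
P_{\hat S}(s) \geq P_S(s) - \sqrt{\tfrac{2}{n}(M-\ln\beta)},
\end{equation*}
and hence $\min\{P_{S}(s),P_{\hat S}(s):s\in\mathcal{S}\} \geq m_S$ with $m_S$ as defined in the theorem; the same reasoning gives the lower bound $m_X$ for the $X$-marginals. Substituting these lower bounds and the $\calL_1$ bound from (i) into Lemma~\ref{lem:rob} yields the two displayed inequalities with probability at least $1-\beta$.

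The main subtlety, rather than any obstacle, is keeping the quantifiers straight: Lemma~\ref{lem:rob}'s $m_S$ is a deterministic lower bound on all relevant marginal atoms, but here those atoms (for $\hat S$) are random, so I need to first condition on the high-probability event where the $\calL_1$ deviation is controlled and \emph{then} invoke the lemma, with the resulting worst-case $m_S,m_X$ substituted as stated. Once that is done, the two bounds follow directly by chaining the $\calL_1$ deviation bound with Lemma~\ref{lem:rob}; the $O(\sqrt{1/n})$ scaling claimed in the introduction is then apparent.
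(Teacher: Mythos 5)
Your proposal is correct and follows essentially the same route as the paper's proof: apply Weissman et al.'s $\calL_1$ concentration bound on the joint alphabet of size $M=|\mathcal{S}||\mathcal{X}|$ with $\epsilon=\sqrt{(2/n)(M-\ln\beta)}$, use $\calL_1$-contractivity of marginalization to lower-bound the empirical marginal atoms by $m_S$ and $m_X$, and then invoke Lemma~\ref{lem:rob} on the high-probability event. The only cosmetic difference is that the paper states Weissman's inequality in its original form with the factor $(2^M-2)\exp(-n\bar{\phi}(\pi_P)\epsilon^2/4)$ and simplifies it via $\bar{\phi}(\pi_P)\geq 2$, whereas you quote the simplified form directly.
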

\begin{proof}
See Appendix~\ref{subsec::thm_robust}.
\end{proof}

Note that the bounds provided in Theorem~\ref{thm:robust} depend on the probability of the least likely symbols. The bounds become weaker as $m_X$ and $m_S$ become smaller. We refer the reader to Wang~\etal \cite{wang2018utility} for an alternate approach if $m_X$ and $m_S$ are near zero.
\section{Numerical Results}
\label{sec:numericalresults}
We illustrate some of the results derived in this paper through two experiments. The first experiment, conducted on a synthetic dataset, verifies the tightness of the upper bound for the $\chi^2$-privacy-utility function. The second experiment, run on a real-world dataset, demonstrates the performance of the optimization methods proposed in Section~\ref{sec:add_opt}.
\subsection{Example 1: Parity Bits}
\label{sec:ParityBits}

\begin{figure}
  \centering
  \includegraphics[width=0.52\textwidth]{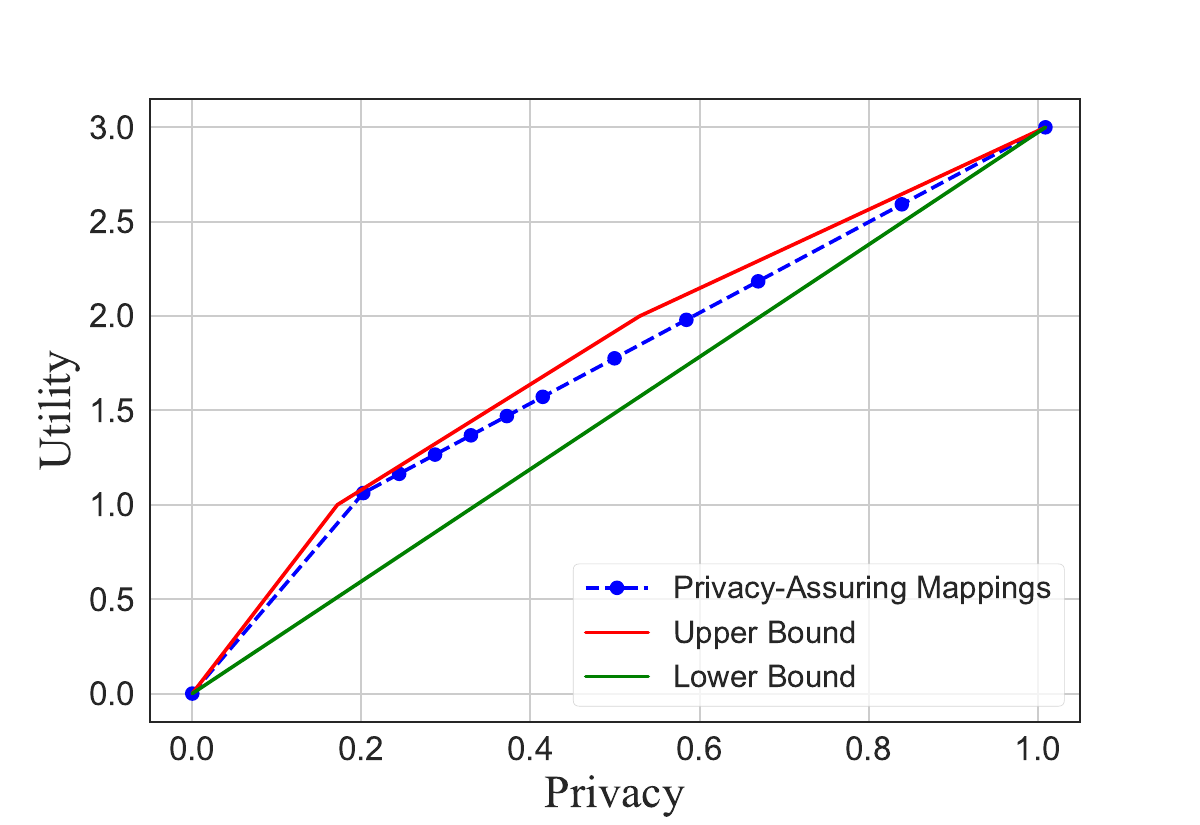}
  \caption{\small{We depict the bounds of the $\chi^2$-privacy-utility function (see Theorem~\ref{thm:bound}) and the privacy-utility values of the privacy-assuring mappings designed by the optimization methods in Section~\ref{sec:add_opt}.
  }}
  \label{fig:MMSE_BitsResult}
\end{figure}
We choose private variable $S=(S_1,S_2) \in \{-1,1\}^2$, where $S$ is composed by two independent bits with $\Pr(S_1=1)=0.45$ and $\Pr(S_2=1)=0.4$. The useful variable $X=(X_1,X_2)\in \{-1,1\}^2$ is generated by passing $S_1$ and $S_2$ through $\mathsf{BSC}(0.2)$ and $\mathsf{BSC}(0.15)$, respectively.

We use the optimization methods proposed in Section~\ref{sec:add_opt} to design privacy-assuring mappings. The private and useful functions are selected as $s_1(S)=S_1$ and $u_1(X)=X_1X_2$, $u_2(X)=X_2$, respectively. We first project the private function to the useful variable. Then we apply Formulation~\ref{alg:PIC-based} with $\mathsf{obj}(\sigma_1,...,\sigma_{n'})=\sum_{i=1}^{n'} \sigma_i$ to find the privacy-assuring mappings. 

In Fig.~\ref{fig:MMSE_BitsResult}, we depict the privacy and utility, measured by $\chi^2$-information, of the privacy-assuring mappings. We also draw the upper bound and lower bound of the $\chi^2$-privacy-utility function. As shown, the privacy-utility values of the designed mappings are very close to the upper bound.
In particular, since the $\chi^2$-privacy-utility function is a concave function (see Lemma~\ref{lem:concave}), the curve of this function is between its upper bound (red line) and the linear interpolation of the achievable privacy-utility values (dashed line).

\subsection{Example 2: UCI Adult Dataset}
\label{sec:Adult}

\begin{figure}[!tb]
  \centering
  \includegraphics[width=0.48\textwidth]{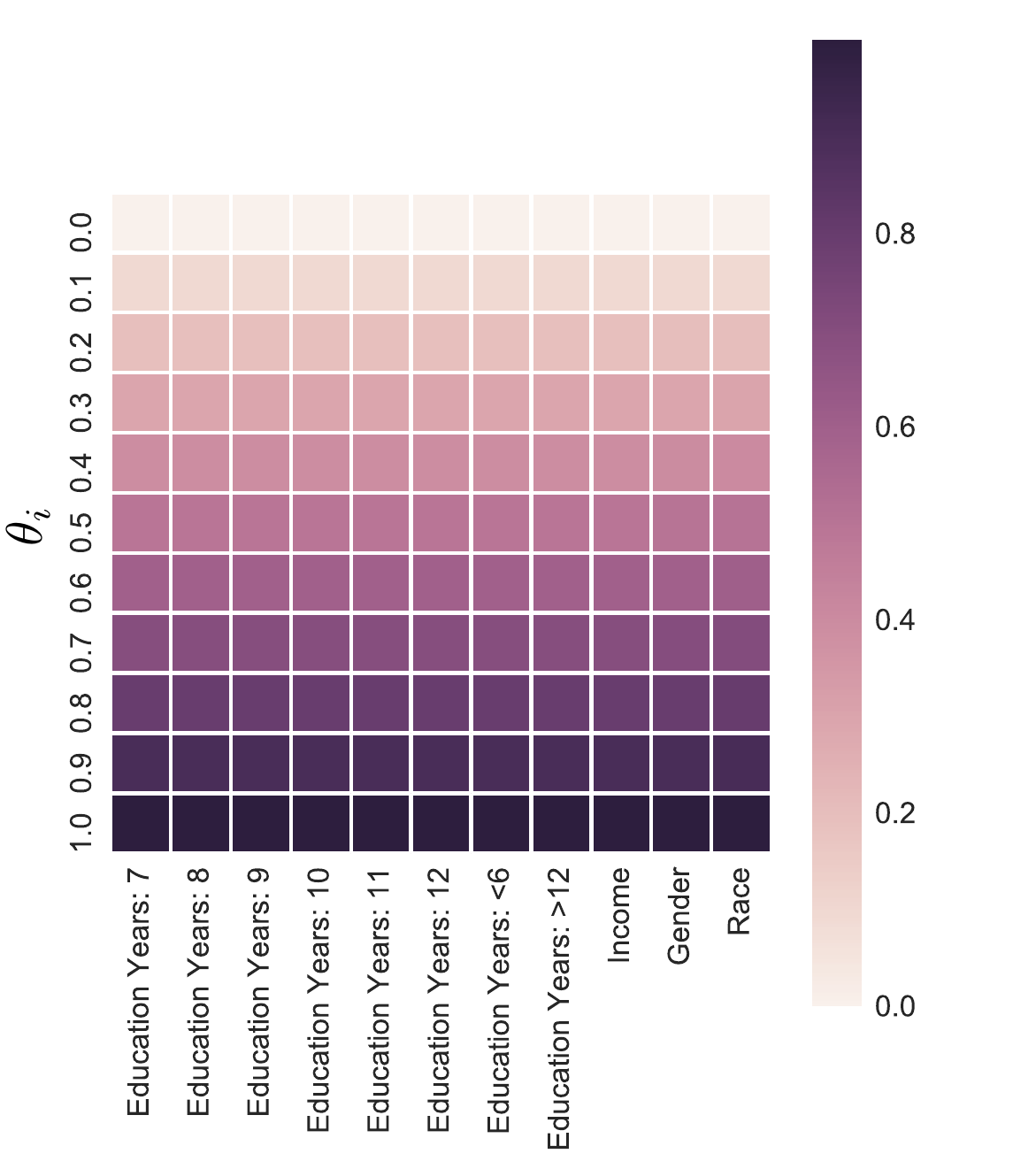}
  \includegraphics[width=0.48\textwidth]{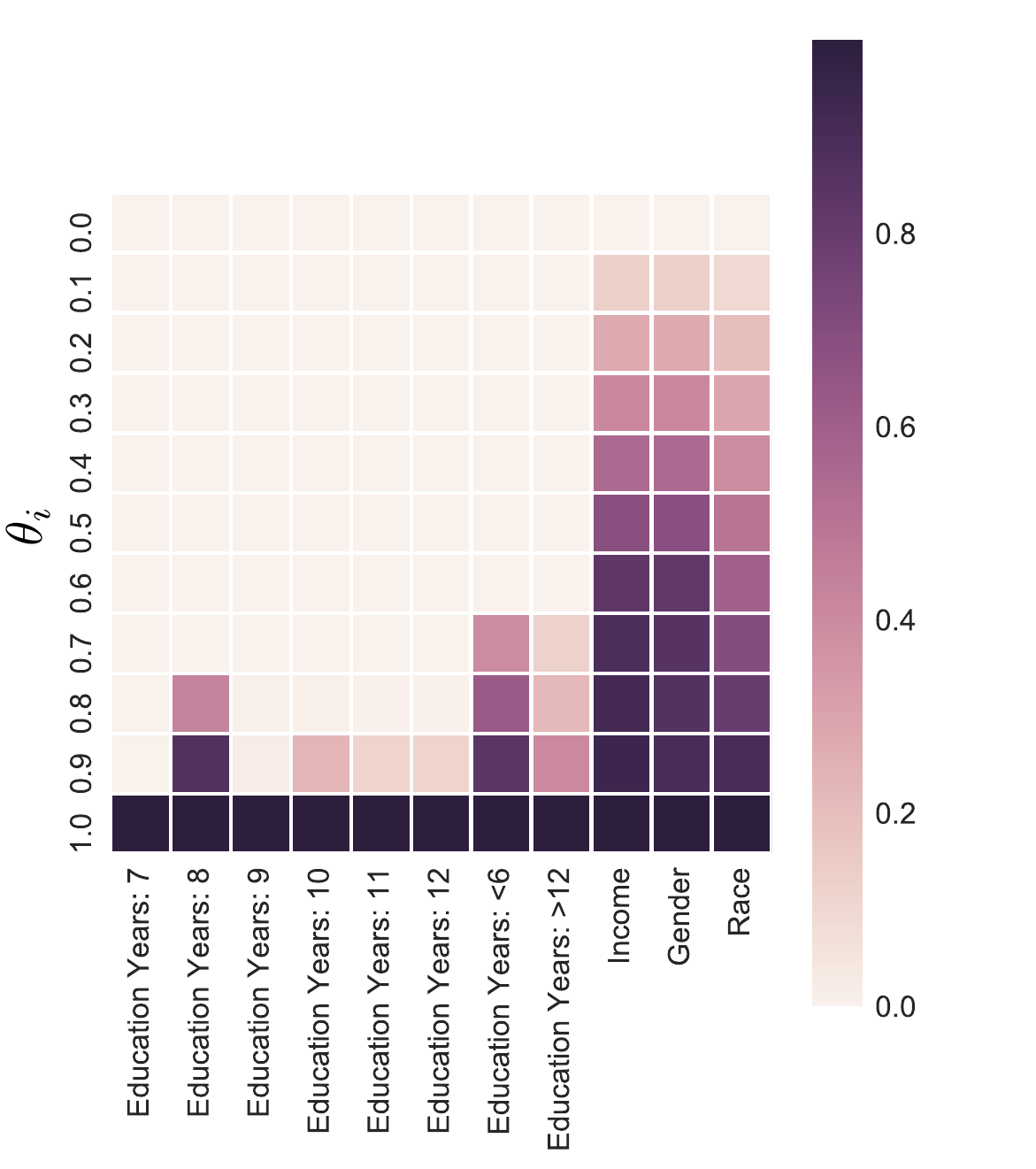}
  \caption{\small{MMSE of estimating each function given the disclosed variable, where darker means harder to estimate. Here $(\textfn{Education Years},\textfn{Income})$ and $(\textfn{Gender},\textfn{Race})$ are useful variable and private variable, respectively. The privacy parameters $\theta_i$ are selected as the same for all $i$ and increase from $0$ to $1$ (i.e., the privacy constraints are increasing from the top down). The privacy-assuring mappings are designed by Formulation~\ref{alg:PIC-based} with $\mathsf{obj}(\sigma_1,...,\sigma_{n'})=\min\{\sigma_1,...,\sigma_{n'}\}$ (left) and with $\mathsf{obj}(\sigma_1,...,\sigma_{n'})=\sum_{i=1}^{n'} \sigma_i$ (right), respectively.}}\label{fig:AdultResult}
\end{figure}

We apply our formulations to the UCI Adult Dataset \cite{Lichman:2013}. A natural selection for the private and useful variables are $S=(\textfn{Gender}, \textfn{Race})$ and $X=(\textfn{Education Years}, \textfn{Income})$, respectively. This allows us to interpret the results of our formulations in an intuitive way, as one would expect there to exist correlations between the chosen private and useful variables. Private functions and useful functions are represented by indicator functions. Furthermore, functions which are linear combinations of others are removed. Following the same procedure proposed in Section~\ref{sec:add_opt}, we first project all private functions to the useful variable. We use QR decomposition \cite{van1996matrix} to construct the basis $\{f_k(x)\}$. Note that other decomposition methods can also be used for constructing basis and, in fact, different bases affect the behavior of the PIC-based convex program (e.g., the joint distribution matrix $\mathbf{P}_{X,Y}$ returned by the optimization may be different). Consequently, the solution produced from the optimization program may not be optimal. Finally, Formulation~\ref{alg:PIC-based} is used to compute the privacy-assuring mappings.

In Fig.~\ref{fig:AdultResult}, we show the MMSE of estimating useful functions and private functions given the disclosed variable. As shown, when we use Formulation~\ref{alg:PIC-based} with $\mathsf{obj}(\sigma_1,...,\sigma_{n'})=\min\{\sigma_1,...,\sigma_{n'}\}$ to compute privacy-assuring mappings, the estimation errors behave uniformly among all functions. This is because we aim at maximizing the worst-case utility over all useful functions. On the other hand, the privacy-assuring mappings designed by Formulation~\ref{alg:PIC-based} with $\mathsf{obj}(\sigma_1,...,\sigma_{n'})=\sum_{i=1}^{n'} \sigma_i$ reveal more interpretable relationships between the private functions and useful functions. We see that $\textfn{Income}$, $\textfn{Gender}$, and $\textfn{Race}$ are highly correlated, and it is not possible to reveal $\textfn{Income}$ while maintaining privacy for $\textfn{Gender}$ and $\textfn{Race}$. Of particular interest are the subtle correlations between the three aforementioned functions and $\textfn{Education Years}$. There is a marked correlation between $\textfn{Education Years}<6$ and, to a lesser degree, $\textfn{Education Years}>12$, with $\textfn{Gender}$ and $\textfn{Race}$. This may be due to the fact that most members of the dataset do not end their education midway. That is, most individuals will either never have begun schooling in the first place or will not continue their education after the 12-year benchmark, which marks graduation from high school. Therefore, we observe that the relationship between $\textfn{Education Years}$ and $\textfn{Race}$ is manifested the most in the two extremities of $\textfn{Education Years}$ ($>12$ and $<6$). Also of note is the correlation between the private functions and $\textfn{Education Years}:8$. Though not as obvious, this relationship can, too, be explained by the fact that 8 years of education marks another benchmark: the beginning of high school, also a time when people are prone to terminating their education.

\section{Conclusion}
\label{sec:conclusion}
In this paper, we studied a fundamental PUT in data disclosure, where an analyst is allowed to reconstruct certain functions of the data, while other private functions should not be estimated with distortion below a certain threshold. First, $\chi^2$-information was used to measure both privacy and utility. Bounds on the best PUT were provided and the upper bound, in particular, was shown to be achievable in the high-privacy region. Moreover, a PIC-based convex program was proposed to design privacy-assuring mappings when the useful functions and private functions were known beforehand. We also derived lower bounds on the MMSE of estimating a target function from the disclosed data and analyzed the robustness of our method when the designer used empirical distribution to compute the privacy-assuring mappings. Finally, we performed two experiments and analyzed the numerical results. Our hope is that the methods presented here can inspire new, information-theoretically grounded and interpretable privacy mechanisms.

\section*{Acknowledgment}

The authors would like to thank the anonymous reviewers and the Associate Editor for their careful reading of our manuscript and their many insightful comments and suggestions.

\appendices
\section{Proofs from Section~\ref{sec:u-p tradeoff}}
\label{Append:A}

\subsection{Lemma~\ref{lem:trace}}
\label{subsec::Lem_trace}
\begin{proof}
By the definition of $\chi^2$-information,
\begin{equation*}
\begin{aligned}
\chi^2(X;Y)+1  = \sum_{x=1}^{|\mathcal{X}|}\sum_{y=1}^{|\mathcal{Y}|}\frac{P_{X,Y}(x,y)}{P_X(x)P_Y(y)}P_{X,Y}(x,y).
\end{aligned}
\end{equation*}
Note that $\bQ_{X,Y}=\bD_X^{-\frac{1}{2}} \bP_{X,Y} \bD_Y^{-\frac{1}{2}}$ which implies
\begin{equation*}
\begin{aligned}
\tr(\bQ_{X,Y}\bQ^T_{X,Y}) 
& = \tr(\bD_X^{-\frac{1}{2}} \bP_{X,Y} \bD_Y^{-1} \bP^T_{X,Y} \bD_X^{-\frac{1}{2}})
= \tr(\bD_X^{-1} \bP_{X,Y} \bD_Y^{-1} \bP^T_{X,Y})\\
&= \sum_{x=1}^{|\mathcal{X}|}\sum_{y=1}^{|\mathcal{Y}|}\frac{P_{X,Y}(x,y)}{P_X(x)}\frac{P_{X,Y}(x,y)}{P_Y(y)}.
\end{aligned}
\end{equation*}
Therefore,
$$\chi^2(X;Y) = \tr(\bQ_{X,Y}\bQ_{X,Y}^T) -1 = \tr(\bA)-1.$$
Since
\begin{equation*}
\begin{aligned}
\bQ_{S,Y} 
= \bD_S^{- \frac{1}{2}}\bP_{S,Y}\bD_Y^{- \frac{1}{2}}
= \bD_S^{- \frac{1}{2}}\bP_{S,X}\bD_X^{- \frac{1}{2}}\bD_X^{-\frac{1}{2}}\bP_{X,Y}\bD_Y^{-\frac{1}{2}}
= \bQ_{S,X}\bQ_{X,Y},
\end{aligned}
\end{equation*}
then
\begin{equation*}
\begin{aligned}
\chi^2(S;Y) 
= \tr(\bQ_{S,Y}\bQ_{S,Y}^T) - 1
= \tr(\bQ_{S,X}\bQ_{X,Y}\bQ_{X,Y}^T\bQ_{S,X}^T) - 1
= \tr(\bB\bA) -1.
\end{aligned}
\end{equation*}
\end{proof}

\subsection{Lemma~\ref{lem:concave}}
\label{subsec::Lem_concave}
\begin{proof}
For $0\leq \epsilon_1 < \epsilon_2 < \epsilon_3 \leq \chi^2(S;X)$,
it suffices to show that
\begin{align*}
    \frac{F_{\chi^2}(\epsilon_3;P_{S,X})-F_{\chi^2}(\epsilon_1;P_{S,X})}{\epsilon_3-\epsilon_1} 
    \leq \frac{F_{\chi^2}(\epsilon_2;P_{S,X})-F_{\chi^2}(\epsilon_1;P_{S,X})}{\epsilon_2-\epsilon_1},
\end{align*}
which is equivalent to
\begin{align}
\frac{\epsilon_2-\epsilon_1}{\epsilon_3-\epsilon_1}F_{\chi^2}(\epsilon_3;P_{S,X})+\frac{\epsilon_3-\epsilon_2}{\epsilon_3-\epsilon_1}F_{\chi^2}(\epsilon_1;P_{S,X})
\leq F_{\chi^2}(\epsilon_2;P_{S,X}).\label{eq:concave}
\end{align}
Let $P_{Y_1|X}$ and $P_{Y_3|X}$ be two optimal solutions in $\mathcal{D}(\epsilon_1;P_{S,X})$ and $\mathcal{D}(\epsilon_3;P_{S,X})$, respectively. Assume that $Y_1$ and $Y_3$ take values in $[m_1]$ and $[m_3]$, respectively. Furthermore, we denote $\lambda \defined \frac{\epsilon_2 - \epsilon_1}{\epsilon_3-\epsilon_1}$. Next, we introduce a new privacy-assuring mapping defined as
\begin{equation}
\begin{aligned}
    P_{Y_\lambda|X}(y|x)
    \defined 
    \begin{cases}
    \lambda P_{Y_3|X}(y|x) & \text{if } y\in [m_3],\\
    (1-\lambda) P_{Y_1|X}(y-m_3|x) & \text{if } y-m_3\in [m_1].
    \end{cases}
\end{aligned}
\end{equation}
Consequently, we have
\begin{equation*}
\begin{aligned}
    P_{Y_\lambda}(y) 
    = \begin{cases}
    \lambda P_{Y_3}(y) & \text{if } y\in [m_3],\\
    (1-\lambda) P_{Y_1}(y-m_3) & \text{if } y-m_3\in [m_1].
    \end{cases}
\end{aligned}
\end{equation*}
Then
\begin{equation*}
\begin{aligned}
\chi^2(X;Y_\lambda)
&=\EE{\frac{P_{X,Y_\lambda}(X,Y_\lambda)}{P_X(X)P_{Y_\lambda}(Y_\lambda)}}-1\\
&=\sum_{y\in [m_3]}\sum_{x=1}^{|\mathcal{X}|}\frac{P_{X,Y_\lambda}(x,y)^2}{P_X(x)P_{Y_\lambda}(y)}+\sum_{y-m_3\in [m_1]}\sum_{x=1}^{|\mathcal{X}|}\frac{P_{X,Y_\lambda}(x,y)^2}{P_X(x)P_{Y_\lambda}(y)}-1\\
&=\sum_{y\in [m_3]}\sum_{x=1}^{|\mathcal{X}|}\frac{P_{Y_\lambda|X}(y|x)^2P_X(x)}{P_{Y_\lambda}(y)}+\sum_{y-m_3\in [m_1]}\sum_{x=1}^{|\mathcal{X}|}\frac{P_{Y_\lambda|X}(y|x)^2P_X(x)}{P_{Y_\lambda}(y)}-1\\
&=\sum_{y\in [m_3]}\sum_{x=1}^{|\mathcal{X}|}\frac{\lambda^2 P_{Y_3|X}(y|x)^2P_X(x)}{\lambda P_{Y_3}(y)}+\sum_{y\in [m_1]}\sum_{x=1}^{|\mathcal{X}|}\frac{(1-\lambda)^2P_{Y_1|X}(y|x)^2P_X(x)}{(1-\lambda)P_{Y_1}(y)}-1\\
&=\lambda \chi^2(X;Y_3) + (1-\lambda)\chi^2(X;Y_1).
\end{aligned}
\end{equation*}
Similarly, we have
\begin{align}
\chi^2(S;Y_\lambda)=\lambda \chi^2(S;Y_3) + (1-\lambda)\chi^2(S;Y_1)
\leq \epsilon_2,
\end{align}
which implies that $P_{Y_\lambda|X} \in \mathcal{D}(\epsilon_2;P_{S,X})$. Therefore,
\begin{align}
F_{\chi^2}(\epsilon_2;P_{S,X})
&\geq \chi^2(X;Y_\lambda)\nonumber\\
&=\lambda \chi^2(X;Y_3) + (1-\lambda)\chi^2(X;Y_1)\nonumber\\
&=\frac{\epsilon_2-\epsilon_1}{\epsilon_3-\epsilon_1}F_{\chi^2}(\epsilon_3;P_{S,X})+\frac{\epsilon_3-\epsilon_2}{\epsilon_3-\epsilon_1}F_{\chi^2}(\epsilon_1;P_{S,X}),
\end{align}
which implies that \eqref{eq:concave} is true, so $F_{\chi^2}(\epsilon;P_{S,X})$ is a concave function. Furthermore, 
$\epsilon \to \frac{1}{\epsilon}F_{\chi^2}(\epsilon;P_{S,X})$ is non-increasing since $F_{\chi^2}(\epsilon;P_{S,X})$ is non-negative and concave.
\end{proof}

\subsection{Closed-Form Expression of $G^m_{\epsilon}(t_1,...,t_n)$}
\label{subsec::closed_form_G}
Recall that, for $t_i \in [0,1]\ (i\in [n])$, $0\leq \epsilon\leq \sum_{i\in [n]} t_i$, and $n\leq m$, $G^m_{\epsilon}(t_1,...,t_n)$ is defined as:
\begin{equation*}
\begin{aligned}
G^m_{\epsilon}(t_1,...,t_n)
\defined \max \left\{ \sum_{i=1}^m x_i\ \Big|\ (x_1,...,x_m) \in \mathcal{D}^m_{\epsilon}(t_1,...,t_n)\right\},
\end{aligned}
\end{equation*}
where 
\begin{equation*}
\begin{aligned}
\mathcal{D}^m_{\epsilon} (t_1,...,t_n)
\defined \left\{(x_1,...,x_m)\ \Big|\ \sum_{i=1}^n t_i x_i  \leq \epsilon, x_i \in [0,1], i\in[m]\right\}.
\end{aligned}
\end{equation*}
We assume $1\geq t_1 \geq ... \geq t_{n-s} > t_{n-s+1}=...=t_n =0$ without loss of generality. Then we divide $\left[0,\sum_{i=1}^{n} t_i\right]$ into $n-s$ intervals:
\begin{align*}
    \left[0,\sum_{i=1}^{n} t_i\right] = \bigcup_{j=0}^{n-1-s}\left[\sum_{i=n-s-j+1}^{n-s}t_i, \sum_{i=n-s-j}^{n-s}t_i\right].
\end{align*}
If $\epsilon \in \left[\sum_{i=n-s-j+1}^{n-s}t_i, \sum_{i=n-s-j}^{n-s}t_i\right]$,
then
\begin{equation}
\label{equa:closed_form_func_G}
G^m_{\epsilon}(t_1,...,t_n) = s+(m-n)+j+\frac{\epsilon-\sum_{i=n-s-j+1}^{n-s}t_i}{t_{n-s-j}},
\end{equation}
and it can be achieved by setting
\begin{align*}
&x_i = 1, \text{ for } i=n-s-j+1,...,m,\\
&x_{n-s-j} = \frac{\epsilon-\sum_{i=n-s-j+1}^{n-s}t_i}{t_{n-s-j}},\\
&x_i = 0, \text{ for } i=1,...,n-s-j-1.
\end{align*}

\subsection{Theorem~\ref{thm:bound}}
\label{subsec::thm_bound_put}
\begin{proof}
The lower bound for $F_{\chi^2}(\epsilon;P_{S,X})$ follows immediately from the concavity of $F_{\chi^2}(\epsilon;P_{S,X})$ and 
\begin{align*}
    F_{\chi^2}(0;P_{S,X}) & \geq 0,\\
    F_{\chi^2}\left(\chi^2(S;X);P_{S,X}\right) & = \chi^2(X;X) = |\mathcal{X}|-1.
\end{align*}
Using Lemma~\ref{lem:trace}, the $\chi^2$-privacy-utility function can be simplified as
$$F_{\chi^2}(\epsilon;P_{S,X}) = \max_{P_{Y|X}\in \mathcal{D}(\epsilon;P_{S,X})} \tr(\bA)-1,$$
$$\mathcal{D}(\epsilon;P_{S,X}) = \{P_{Y|X}\mid S\rightarrow X \rightarrow Y, \tr(\bB\bA)-1 \leq \epsilon\},$$
where
\begin{align*}
    \bA = \bQ_{X,Y}\bQ_{X,Y}^T, \bB = \bQ_{S,X}^T\bQ_{S,X}.
\end{align*}
We denote the singular value decomposition of $\bQ_{S,X}$ and $\bQ_{X,Y}$ by $\bQ_{S,X} = \bW\bSigma_{1}\bU^T$ and $\bQ_{X,Y} = \bV\bSigma_{2}\bM^T$, respectively.  Then $\bB= \bU\bSigma_{1}^T\bSigma_{1}\bU^T = \bU\bSigma_{B}\bU^T$, $\bA= \bV\bSigma_{2}\bSigma_{2}^T\bV^T = \bV\bSigma_{A}\bV^T$ where $\bSigma_B \defined \bSigma_{1}^T\bSigma_{1}$ and $\bSigma_{A} \defined \bSigma_{2}\bSigma_{2}^T$.

Let $\bA_1 = \bU^T \bA \bU = \bL\bSigma_A \bL^T$ where $\bL \defined \bU^T \bV$. Suppose the diagonal elements of $\bA_1$ are $a_1, ... ,a_{|\mathcal{X}|}$. Then, from characterization~3 in Theorem~\ref{thm:PIC_Charac}, we have
\begin{equation}
\label{equa:trace_bound}
\tr(\bB\bA)-1 = a_1-1 +\sum_{i=2}^{d+1} \lambda_{i-1}(S;X)a_i .
\end{equation}
Suppose the $i$-th row of $\bL$ is $\mathbf{l}_i=(l_{i,1},...,l_{i,|\mathcal{X}|})$, the $i$-th column of $\bU$ is $\mathbf{u}_i^T$ and $\mathbf{\Sigma}_A = \diag(\sigma_1,...,\sigma_{|\mathcal{X}|})$. By characterization~3 in Theorem~\ref{thm:PIC_Charac}, $\sigma_1=1$, $\sigma_{j+1}=\lambda_j(X;Y)$ for $j=1,...,d$ and $\sigma_{j+1}=0$ for $j=d+1,...,|\mathcal{X}|-1$. Then, for $\forall i\in [|\mathcal{X}|]$,
\begin{align*}
    0\leq a_i =\sum_{j=1}^{|\mathcal{X}|}\sigma_j l^2_{i,j} \leq \sum_{j=1}^{|\mathcal{X}|} l^2_{ij} = 1.
\end{align*}
Since, following from characterization~3 in Theorem~\ref{thm:PIC_Charac}, the first column of $\bU$ and that of $\bV$ are both $(\sqrt{P_X(1)},...,\sqrt{P_X(|\mathcal{X}|)})^T$, then $\mathbf{l}_1 = \mathbf{u}_1 \bV = (1,0,...,0)$. Therefore, $a_1 =\sigma_1 = 1$. If $P_{Y|X}\in \mathcal{D}(\epsilon;P_{S,X})$, then \eqref{equa:trace_bound} shows
\begin{equation*}
\begin{aligned}
\tr(\bB\bA)-1=\sum_{i=2}^{d+1}\lambda_{i-1}(S;X)a_i \leq \epsilon,
\end{aligned}
\end{equation*}
which implies that
$$(a_2, ..., a_{|\mathcal{X}|})\in \mathcal{D}^{|\mathcal{X}|-1}_{\epsilon}(\lambda_1(S;X),...,\lambda_d(S;X)).$$
Thus,
\begin{equation*}
\begin{aligned}
F_{\chi^2}(\epsilon;P_{S,X})
&\leq \max_{\substack{(a_2,...,a_{|\mathcal{X}|})\\ 
\in \mathcal{D}^{|\mathcal{X}|-1}_{\epsilon}(\lambda_1(S;X),...,\lambda_d(S;X))}} \sum_{i=2}^{|\mathcal{X}|} a_i\\
&= G^{|\mathcal{X}|-1}_{\epsilon}(\lambda_1(S;X),...,\lambda_d(S;X)),
\end{aligned}
\end{equation*}
as required.
\end{proof}

\subsection{Corollary~\ref{cor:increasing}}
\label{subsec::cor_increasing_PUT}
\begin{proof}
First, $F_{\chi^2}(\epsilon;P_{S,X})$ is non-decreasing since, for any $0\leq \epsilon_1<\epsilon_2\leq \chi^2(S;X)$, we have $\mathcal{D}(\epsilon_1;P_{S,X}) \subseteq \mathcal{D}(\epsilon_2;P_{S,X})$. Now suppose there exist $\epsilon_1$ and $\epsilon_2$, such that $F_{\chi^2}(\epsilon_1;P_{S,X}) = F_{\chi^2}(\epsilon_2;P_{S,X})$. We denote $\chi^2(S;X)$ by $\epsilon_0$. Since $F_{\chi^2}(\epsilon;P_{S,X})$ is a concave and non-decreasing function, then for any $\epsilon > \epsilon_1$, $F_{\chi^2}(\epsilon;P_{S,X}) = F_{\chi^2}(\epsilon_1;P_{S,X})$. In particular, $F_{\chi^2}(\epsilon_1;P_{S,X}) = F_{\chi^2}(\epsilon_0;P_{S,X}) = |\mathcal{X}|-1$. This contradicts the upper bound of $\chi^2$-privacy-utility function in Theorem~\ref{thm:bound} since the upper bound implies that $F_{\chi^2}(\epsilon;P_{S,X}) < |\mathcal{X}|-1$ when $\epsilon <\epsilon_0$.
\end{proof}

\subsection{Theorem~\ref{thm:highprivacyregion}}
\label{subsec::thm_high_pri_region}
\begin{proof}
Following from characterization~2 in Theorem~\ref{thm:PIC_Charac}, there exists $f\in \mathcal{L}_2(P_X)$ such that $\|f(X)\|_2=1$, $\EE{f(X)}=0$ and $\|\EE{f(X)|S}\|_2^2=\lambda_{\min}(S;X)$. 

Fix $\mathcal{Y}=\{1,2\}$ and the privacy-assuring mapping is defined as
\begin{equation}
P_{Y|X}(y|x) = \frac{1}{2} + (-1)^y \frac{\sqrt{P_{X\min}}f(x)}{2}.
\end{equation}
Since
$$1=||f(X)||_2^2=\sum_{x=1}^{|\mathcal{X}|}f(x)^2P_X(x)\geq f(x)^2P_X(x),$$
for any $x\in[|\mathcal{X}|]$
$$|f(x)|\leq \frac{1}{\sqrt{P_X(x)}} \leq \frac{1}{\sqrt{P_{X\min}}}.$$
Therefore, $\left|\frac{\sqrt{P_{X\min}}f(x)}{2}\right| \leq \frac{1}{2}$, which implies that $P_{Y|X}(y|x)$ is feasible. Furthermore, $P_Y(y) = \frac{1}{2}$ because of $\EE{f(X)}=0$.

\begin{equation*}
\begin{aligned}
\chi^2(X;Y) = \sum_{x=1}^{|\mathcal{X}|} \sum_{y=1}^{|\mathcal{Y}|} \frac{P_{Y|X}(y|x)^2P_X(x)}{P_Y(y)} -1
=\sum_{x=1}^{|\mathcal{X}|}(P_X(x)+ P_{X\min}f(x)^2P_X(x)) -1
=P_{X\min}.
\end{aligned}
\end{equation*}
Since
\begin{equation*}
\begin{aligned}
P_{Y|S}(y|s) &= \sum_{x=1}^{|\mathcal{X}|} P_{Y|X}(y|x)P_{X|S}(x|s)
=\sum_{x=1}^{|\mathcal{X}|} \left(\frac{1}{2}+(-1)^y\frac{\sqrt{P_{X\min}}f(x)}{2}\right)P_{X|S}(x|s)\\
&=\frac{1}{2} + (-1)^y\frac{\sqrt{P_{X\min}}}{2}\EE{f(X)|S=s},
\end{aligned}
\end{equation*}
then
\begin{equation*}
\begin{aligned}
\chi^2(S;Y)
&= \sum_{s=1}^{|\mathcal{S}|} \sum_{y=1}^{|\mathcal{Y}|} \frac{P_{Y|S}(y|s)^2P_S(s)}{P_Y(y)} -1
=\sum_{s=1}^{|\mathcal{S}|}\left(P_S(s)+P_{X\min}\EE{f(X)|S=s}^2P_S(s)\right) -1\\
&=P_{X\min}\lambda_{\min}(S;X).
\end{aligned}
\end{equation*}
Hence, this $Y$ satisfies $\chi^2(X;Y) = P_{X\min}$ and $\chi^2(S;Y) = P_{X\min}\lambda_{\min}(S;X)$.
\end{proof}

\section{Proofs from Section~\ref{sec:add_opt}}
\label{Append:B}

\subsection{Lemma~\ref{lem:project}}
\label{subsec::thm_projection}
\begin{proof}
Suppose $||f(S)||_2 = 1$ without loss of generality. Observe that
\begin{align*}
\EE{\EE{f(S)|X}}=\EE{f(S)} = 0.
\end{align*}
Since $f(S)\rightarrow X \rightarrow Y$, then $\EE{f(S)|X} = \EE{f(S)|X,Y}$.
Therefore,
\begin{equation*}
\begin{aligned}
\mmse\left(\frac{\EE{f(S)|X}}{||\EE{f(S)|X}||_2}\Bigg|Y\right)
&=\frac{\EE{\EE{f(S)|X}^2}-\EE{\EE{\EE{f(S)|X}|Y}^2}}{||\EE{f(S)|X}||_2^2}\\
&=\frac{\EE{\EE{f(S)|X}^2}-\EE{\EE{\EE{f(S)|X,Y}|Y}^2}}{||\EE{f(S)|X}||_2^2}\\
&=1-\frac{\EE{\EE{f(S)|Y}^2}}{||\EE{f(S)|X}||_2^2}\\
&=\EE{f(S)^2}-\frac{\EE{\EE{f(S)|Y}^2}}{||\EE{f(S)|X}||_2^2}\\
&\leq \EE{f(S)^2} - \EE{\EE{f(S)|Y}^2}\\
&=\mmse(f(S)|Y),
\end{aligned}
\end{equation*}
where the last inequality follows from Jensen's inequality:
\begin{align*}
1=\EE{f(S)^2} = \EE{\EE{f(S)^2|X}}
\geq \EE{\EE{f(S)|X}^2} = ||\EE{f(S)|X}||_2^2.
\end{align*}
\end{proof}

\section{Proofs from Section~\ref{sec:MMSE}}
\label{Append:C}
\subsection{Lemma~\ref{lem:quadBound}}
\label{subsec::Lem_quadBound}
\begin{proof}
For fixed $\ba,\bb \in \Reals^n$ where $a_i>0$ and $b_i\geq0$, let $L_P:\Reals^n\rightarrow \Reals$ and $L_D:\Reals^n\rightarrow \Reals$ be given by
\begin{align*}
L_P(\by) &\defined \ba^T\by,\\
L_D(\bu) & \defined \ba^T\bb + \bu^T\bb +\|\bu\|_2.
\end{align*}
Furthermore, we define $\calA(\ba)\defined \left\{ \bu\in
\Reals^n\mid \bu\geq -\ba \right\}$ and $\calB(\bb)\defined \left\{ \by
\in\Reals^n \mid \|\by\|_2\leq 1,\by\leq\bb \right\}$.

Assume, without loss of generality, that $b_1/a_1\leq
b_2/a_2\leq\dots\leq b_n/a_n$, and let $k^*$ be defined in
\eqref{eq:kstar}. Note that $b_1\leq 1$ and for $k\in [k^*]$
\begin{equation*}
\sum_{i=1}^{k}b_i^2\leq
\frac{a_{k}^2}{\sum_{i=k}^n a_i^2}\left(1-\sum_{i=1}^{k-1}
b_i^2\right)^{+}+\sum_{i=1}^{k-1}b_i^2,
\end{equation*}
so $\sum_{i=1}^{k}b_i^2 \leq 1$. Especially, $\sum_{i=1}^{k^*}b_i^2\leq 1$.
For $c_j\defined \sqrt{\frac{\left(1-\sum_{i=1}^{j}
b_i^2\right)}{\|\ba\|_2^2-\sum_{i=1}^{j} a_i^2}}$, let
$$\by^*=(b_1,\dots,b_{k^*},a_{k^*+1}c_{k^*},\dots,a_nc_{k^*})$$ and
$$\bu^*=(-b_1/c_{k^*},\dots,-b_{k^*}/c_{k^*},-a_{k^*+1},\dots,-a_n).$$ From the definition of $k^*$, $\by^*\in\calB(\bb)$ and $\bu^*\in \calA(\ba)$. Furthermore,
\begin{align}
L_P(\by^*) 
&= \ba^T\by^*\nonumber\\
&= \sum_{i=1}^{k^*} a_ib_i +\sum_{i=k^*+1}^n c_{k^*}a_i^2\nonumber\\
&= \sum_{i=1}^{k^*} a_ib_i + \sqrt{\left(\|\ba\|_2^2-\sum_{i=1}^{k^*} a_i^2 \right)\left(1-\sum_{i=1}^{k^*}b_i^2\right)}, \label{eq:opt_val}
\end{align}
and 
\begin{align*}
L_D(\bu^*)
&= \ba^T\bb + {\bu^*}^T\bb + \|\bu^*\|_2\\
&= \sum_{i=1}^{k^*} \left( a_ib_i-\frac{b_i^2}{c_k^*}\right)
 +c_{k^*}^{-1}\sqrt{\sum_{i=1}^{k^*}b_i^2+c_{k^*}^2\left(
\|\ba\|_2^2-\sum_{i=1}^{k^*}a_i^2 \right)}\\
&= \sum_{i=1}^{k^*} a_ib_i + c_{k^*}^{-1}\left( 1-\sum_{i=1}^{k^*}b_i^2
\right)\\
&= \sum_{i=1}^{k^*} a_ib_i + \sqrt{\left( \|\ba\|_2^2-\sum_{i=1}^{k^*} a_i^2 \right)\left(1-\sum_{i=1}^{k^*}
b_i^2\right)}
= L_P(\by^*).
\end{align*}
Since both the primal and the dual achieve the same value at $\by^*$ and $\bu^*$, respectively, it follows that the value $L_P(\by^*)$ given in \eqref{eq:opt_val} is optimal.
\end{proof}

\subsection{Theorem~\ref{thm:loose}}
\label{subsec::thm_loose}
\begin{proof}
Let
\begin{equation}
h(x)\defined 
\begin{cases}
\rho_0^{-1}(\phi(x)-\sum_{i=1}^m \rho_i \phi_i(x)) & \mbox{if } \rho_0>0,\\ 
0 &  \mbox{otherwise.}
\end{cases}
\end{equation}
Note that when $\rho_0>0$, we have $\|h(X)\|_2=1$. Then for any $\psi\in \calL_2(P_Y)$ and $\|\psi(Y)\|_2=1$,
\begin{align*}
\left| \EE{\phi(X)\psi(Y)} \right|
&= \left| \rho_0\EE{h(X)\psi(Y)}+\sum_{i=1}^m \rho_i
\EE{\phi_i(X)\psi(Y)} \right| \\
&\leq  \rho_0 \left| \EE{h(X)\psi(Y)}\right|+\sum_{i=1}^m  \left|\rho_i
\EE{\phi_i(X)\psi(Y)} \right| \\
&= \rho_0 \left| \EE{h(X)(T_{Y|X}\psi)(X)}\right| + \sum_{i=1}^m  \left|\rho_i\EE{\phi_i(X)(T_{Y|X}\psi)(X)} \right|,
\end{align*}
where $T_{Y|X}$ is defined in Section~\ref{sec:notation}. Denoting $|\EE{h(X)(T_{Y|X}\psi)(X)}|\defined x_0$, $|\EE{\phi_i(X)(T_{Y|X}\psi)(X)}|\defined x_i$, $\bx \defined (x_0,x_1,\dots,x_m)$, the last inequality can be rewritten as
\begin{align}
\label{eq:bound_x}
\left| \EE{\phi(X)\psi(Y)} \right| &\leq \brho_0^T \bx.
\end{align}    

Observe that $\|\bx\|_2 \leq 1$ and $x_i\leq \nu_i$ for $i\in [m]$, and the right hand side of \eqref{eq:bound_x} can be maximized over all values of $\bx$ that satisfy these constraints. We assume, without loss of generality, that $\rho_0> 0$ (otherwise set $x_0=0$). The left-hand side of \eqref{eq:bound_x} can be further bounded by 
\begin{equation}
\left| \EE{\phi(X)\psi(Y)} \right| \leq L_{m+1}(\brho_0,\bnu_0),
\end{equation}
where $\bnu_0=(1,\nu_1,\dots,\nu_m)$ and $L_{m+1}$ is defined in \eqref{eq:defn_Ln}. The result follows directly from Lemma~\ref{lem:quadBound} and noting that $\max_{\psi\in\calL_2(P_Y)}\EE{\phi(X)\psi(Y)}=\|\EE{\phi(X)|Y}\|_2.$
\end{proof}

\subsection{Theorem~\ref{thm:tighter}}
\label{subsec::thm_tighter}
\begin{proof}
For any $\psi\in \calL_2(P_Y)$ with $\|\psi(Y)\|_2=1$, let $\alpha_i\defined \EE{\psi(Y)\psi_i(Y)}$, $\alpha_0\defined \sqrt{1-\sum_{i=i}^t\alpha_i^2}$ and $\psi_0(Y)\defined \alpha_0^{-1}(\psi(Y)-\sum_{i=1}^t \alpha_i \psi_i(Y))$ if $\alpha_0>0$, otherwise $\psi_0(Y)\defined 0$. Observe that $\|\psi_0(Y)\|_2=1$ when $\alpha_0>0$. Also, $\EE{\phi_i(X)\psi_j(Y)}=0$ for $i\neq j$, $i\in \{0,\dots,m\}$, $j\in[t]$.
Consequently,
\begin{align}
\EE{\phi(X)\psi(Y)}
&=\EE{\left(\sum_{i=0}^m \rho_i\phi_i(X)  \right)\left(
\sum_{j=0}^t \alpha_j \psi_j(Y)\right)} \nonumber\\
&=\sum_{i=0}^m\sum_{j=0}^t \rho_i\alpha_j \EE{\phi_i(X)\psi_j(Y)} \nonumber\\ 
&\leq \left| \alpha_0 \sum_{i=0,i\notin [t] }^m \rho_i\EE{\phi_i(X)\psi_0(Y)}\right|+\sum_{i=1}^t|\nu_i\rho_i\alpha_i| \nonumber\\
&\leq |\alpha_0| B_{m-t}\left(\tilde{\brho},\tilde{\bnu}\right)+\sum_{i=1}^t|\nu_i\rho_i\alpha_i| \label{eq:prooftight1}\\
&\leq \sqrt{\sum_{k=1}^t \nu_i^2\rho_i^2 + B_{m-t}\left(\tilde{\brho},\tilde{\bnu}\right)^2} \label{eq:prooftight2}.
\end{align}
Inequality~\eqref{eq:prooftight1} follows from the similar proof of Theorem~\ref{thm:loose}, and \eqref{eq:prooftight2} follows by observing that $\sum_{i=0}^t\alpha_i^2=1$ and applying Cauchy-Schwarz inequality.

Finally, when $\rho_0=0$ and $t=m$, \eqref{eq:prooftight2} can be achieved with equality by taking 
\begin{align*}
    \psi(Y)=\frac{\sum_{i=1}^m \nu_i\rho_i \psi_i(Y)}{\sqrt{\sum_{i=1}^m\nu_i^2\rho_i^2}}.
\end{align*}
\end{proof}

\section{Proofs from Section~\ref{sec:robust}}
\label{Append:D}

\subsection{Lemma~\ref{lem:rob}}
\label{subsec::Lem_rob}
\begin{proof}
By the definition of $\chi^2$-information, we have
\begin{align}
|\chi^2(X_1;Y_1) - \chi^2(X_2;Y_2)|
&\leq \sum_{x=1}^{|\mathcal{X}|}\sum_{y=1}^{|\mathcal{Y}|}\left|\frac{P_{X_1,Y_1}(x,y)^2}{P_{X_1}(x)P_{Y_1}(y)}-\frac{P_{X_2,Y_2}(x,y)^2}{P_{X_2}(x)P_{Y_2}(y)}\right|. \label{equa:lem_rob_1}
\end{align}
By the triangle inequality,
\begin{align}
\left|\frac{P_{X_1,Y_1}(x,y)^2}{P_{X_1}(x)P_{Y_1}(y)}-\frac{P_{X_2,Y_2}(x,y)^2}{P_{X_2}(x)P_{Y_2}(y)}\right|
&\leq \frac{P_{X_1,Y_1}(x,y)}{P_{X_1}(x)P_{Y_1}(y)}\left|P_{X_1,Y_1}(x,y)-P_{X_2,Y_2}(x,y) \right| \nonumber\\
&\quad+ \frac{P_{X_1,Y_1}(x,y)}{P_{X_1}(x)P_{Y_1}(y)}\frac{P_{X_2,Y_2}(x,y)}{P_{X_2}(x)} \left|P_{X_1}(x)-P_{X_2}(x)\right| \nonumber\\
&\quad+ \frac{P_{X_1,Y_1}(x,y)}{P_{Y_1}(y)}\frac{P_{X_2,Y_2}(x,y)}{P_{X_2}(x)P_{Y_2}(y)} \left|P_{Y_1}(y)-P_{Y_2}(y)\right| \nonumber\\
&\quad+ \frac{P_{X_2,Y_2}(x,y)}{P_{X_2}(x)P_{Y_2}(y)}\left|P_{X_1,Y_1}(x,y)-P_{X_2,Y_2}(x,y)\right|.\label{equa:lem_rob_2}
\end{align}
Note that for $i=1,2$
\begin{equation*}
\frac{P_{X_i,Y_i}(x,y)}{P_{X_i}(x)P_{Y_i}(y)} \leq \frac{1}{P_{X_i}(x)}\leq \frac{1}{m_X}.
\end{equation*}
Therefore, we have
\begin{align}
|\chi^2(X_1;Y_1) - \chi^2(X_2;Y_2)|
&\leq \frac{2}{m_X}\|P_{X_1,Y_1}-P_{X_2,Y_2}\|_1
+\frac{1}{m_X}(\|P_{X_1}-P_{X_2}\|_1+\|P_{Y_1}-P_{Y_2}\|_1)\nonumber\\
&\leq \frac{4}{m_X}\|P_{X_1,Y_1}-P_{X_2,Y_2}\|_1. \label{equa:lem_rob_5}
\end{align}
Similarly,
\begin{align}
|\chi^2(S_1;Y_1) - \chi^2(S_2;Y_2)|
\leq \frac{4}{m_S}\|P_{S_1,Y_1}-P_{S_2,Y_2}\|_1. \label{equa:lem_rob_6}
\end{align}
By the data processing inequality and the assumption $P_{Y_1|X_1}=P_{Y_2|X_2}$, we have 
\begin{align*}
\|P_{S_1,Y_1}-P_{S_2,Y_2}\|_1
&\leq \|P_{S_1,X_1}-P_{S_2,X_2}\|_1,\\
\|P_{X_1,Y_1}-P_{X_2,Y_2}\|_1
&\leq \|P_{S_1,X_1}-P_{S_2,X_2}\|_1.
\end{align*}
Therefore, we get the desired conclusion.
\end{proof}

\subsection{Theorem~\ref{thm:robust}}
\label{subsec::thm_robust}
Recall the following results by Weissman~\etal~\cite[Theorem~2.1]{weissman2003inequalities} for the $\calL_1$ deviation of the empirical distribution. 

For all $\epsilon > 0$,
\begin{equation}
    \Pr\left(\|\hat{P}_n-P\|_1 \geq \epsilon\right) 
    \leq (2^M-2)\exp(-n\bar{\phi}(\pi_P)\epsilon^2/4),
\end{equation}
where $P$ is a probability distribution on the set $[M]$, $\hat{P}_n$ is the empirical distribution obtained from $n$ i.i.d. samples,
$\pi_P \defined \max_{\mathcal{M}\subseteq[M]}\min(P(\mathcal{M}),1-P(\mathcal{M}))$, and 
\begin{equation*}
    \bar{\phi}(p)\defined 
    \begin{cases}
    \frac{1}{1-2p}\ln \frac{1-p}{p} & p\in[0,1/2), \\ 
    2 & p=1/2.
    \end{cases}
\end{equation*}
Note that $\bar{\phi}(\pi_P)\geq 2$, which implies that
\begin{equation}
\label{equa:L1bound_weissman}
\begin{aligned}
    \Pr\left(\|\hat{P}_n-P\|_1 \geq \epsilon\right)
    &\leq \exp(M)\exp(-n\epsilon^2/2).
\end{aligned}
\end{equation}
Therefore, by choosing $P=P_{S,X}$, $M=|\mathcal{S}||\mathcal{X}|$, and $\epsilon = \sqrt{\frac{2}{n}\left(M-\ln\beta\right)}$, \eqref{equa:L1bound_weissman} implies that, with probability at least $1-\beta$,
\begin{equation}
\label{equa:L1bound_thm}
    \|P_{\hat{S},\hat{X}} - P_{S,X}\|_1 \leq \sqrt{\frac{2}{n}\left(M-\ln\beta\right)},
\end{equation}
where $P_{\hat{S},\hat{X}}$ is the empirical distribution obtained from $n$ i.i.d. samples drawn from $P_{S,X}$. Also of note,
\begin{align}
\min\{P_{\hat{S}}(s)\mid s\in\mathcal{S}\}
&\geq \min\{P_{S}(s)\mid s\in\mathcal{S}\} - \|P_{\hat{S}}-P_S\|_1 \nonumber\\
&\geq \min\{P_{S}(s)\mid s\in\mathcal{S}\} - \|P_{\hat{S},\hat{X}}-P_{S,X}\|_1. \label{eq::min_ms}
\end{align}
Similarly, 
\begin{align}
\min\{P_{\hat{X}}(x)\mid x\in\mathcal{X}\} 
\geq \min\{P_{X}(x)\mid x\in\mathcal{X}\} - \|P_{\hat{S},\hat{X}}-P_{S,X}\|_1. \label{eq::min_mx}
\end{align}
The proof of Theorem~\ref{thm:robust} then follows from Lemma~\ref{lem:rob} and \eqref{equa:L1bound_thm}, \eqref{eq::min_ms}, \eqref{eq::min_mx}.

\bibliographystyle{IEEEtran}
\bibliography{references}

\end{document}